\documentclass[12pt]{article}

\usepackage{amsmath}
\usepackage{amsfonts}
\usepackage{amssymb}
\usepackage{amsthm}
\usepackage{graphicx}
\usepackage{gauss}
\usepackage{enumitem}
\usepackage{hyperref}
\usepackage{caption,subcaption}
\usepackage{color}

\setlength{\textwidth}{6.5in}
\setlength{\oddsidemargin}{0.0 in}
\setlength{\evensidemargin}{0.0 in}
\setlength{\topmargin}{-0.5 in}
\setlength{\textheight}{8.5 in}

\newtheorem{theorem}{Theorem}[section]
\newtheorem{lemma}[theorem]{Lemma}
\newtheorem{definition}[theorem]{Definition}
\newtheorem{corollary}[theorem]{Corollary}
\newtheorem{proposition}[theorem]{Proposition}
\newtheorem{remark}[theorem]{Remark}
\newtheorem{example}[theorem]{Example}

\numberwithin{equation}{section}

\newcommand{\Section}[1]{\section{#1} \setcounter{theorem}{0}}

\setcounter{secnumdepth}{3}
\setcounter{equation}{0}

\setlength\rowarrowsep{0pt}
\setlength\rowopminsize{0pt}

\newcommand{\ignore}[1]{}
\newcommand{\etal}{et al.}

\newcommand{\Matrix}[1]{\ensuremath{\left[\begin{array}{rrrrrrrrrrrrrrrrrr} #1 \end{array}\right]}}
\newcommand{\Matrixc}[1]{\ensuremath{\left[\begin{array}{cccccccccccc|ccc} #1 \end{array}\right]}}

\newcommand{\R}{{\mathbf R}}

\renewcommand{\AA}{{\mathcal A}}
\newcommand{\BB}{{\mathcal B}}
\newcommand{\CC}{{\mathcal C}}
\newcommand{\DD}{{\mathcal D}}
\newcommand{\EE}{{\mathcal E}}

\newcommand{\GG}{{\mathcal G}}

\newcommand{\II}{{\mathcal I}}
\newcommand{\KK}{{\mathcal K}}

\newcommand{\LL}{{\mathcal L}}

\newcommand{\sS}{{\mathcal S}}

\newcommand{\WW}{{\mathcal W}} %

\renewcommand{\epsilon}{\varepsilon}

\newcommand{\PH}{H}


\newlength\savedwidth

\newcommand\thickhline{\noalign{\global\savedwidth\arrayrulewidth\global\arrayrulewidth 2pt}%
	\hline
	\noalign{\global\arrayrulewidth\savedwidth}}

\DeclareGraphicsExtensions{.pdf}
      
\title{The Structure of Infinitesimal Homeostasis in Input-Output Networks}

\author{
Yangyang Wang \\
Department of Mathematics \\
The University of Iowa \\
Iowa City, IA 52242, USA \\
\href{yangyang-wang@uiowa.edu}{yangyang-wang@uiowa.edu} \\
\and
Zhengyuan Huang \\
The Ohio State University \\
Columbus, OH 43210, USA \\
\href{huang.3224@buckeyemail.osu.edu}{huang.3224@buckeyemail.osu.edu} \\
\and
Fernando Antoneli \\
Escola Paulista de Medicina \\
Universidade Federal de S\~ao Paulo \\
S\~ao Paulo, SP 04039-032, Brazil \\
\href{fernando.antoneli@unifesp.br}{fernando.antoneli@unifesp.br} \\
\and
Martin Golubitsky \\
Department of Mathematics\\
The Ohio State University \\
Columbus, OH 43210, USA \\
\href{golubitsky.4@osu.edu}{golubitsky.4@osu.edu}
}

\date{\today}

\begin{document}

\maketitle

\begin{abstract}

Homeostasis refers to a phenomenon whereby the output $x_o$ of a system is approximately constant on variation of an input $\II$.  Homeostasis occurs frequently in biochemical networks and in other networks 
of interacting elements where mathematical models are based on differential equations associated to the network. These networks can be abstracted as digraphs $\GG$ with a distinguished input node $\iota$, a different distinguished output node $o$, and a number of regulatory nodes $\rho_1,\ldots,\rho_n$.  In these models the input-output map $x_o(\II)$ is defined by a stable equilibrium $X_0$ at $\II_0$.  Stability implies that there is a stable equilibrium $X(\II)$ for each $\II$ near $\II_0$ and infinitesimal homeostasis occurs at $\II_0$ when $(dx_o/d\II)(\II_0) = 0$.  We show that there is an $(n+1)\times(n+1)$ {\em homeostasis matrix} $H(\II)$ for which $dx_o/d\II = 0$ if and only if $\det(H) = 0$.  We note that the entries in $H$ are linearized couplings and $\det(H)$ is a homogeneous polynomial of degree $n+1$ in these entries. We use combinatorial matrix theory to factor the polynomial $\det(H)$ and thereby determine a menu of different types of possible homeostasis associated with each digraph $\GG$. Specifically, we prove that each factor corresponds to a subnetwork of $\GG$.  The factors divide into two combinatorially defined classes: {\em structural} and {\em appendage}.  Structural factors correspond to {\em feedforward} motifs and appendage factors correspond to {\em feedback} motifs.  Finally, we discover an algorithm for determining the homeostasis subnetwork motif corresponding to each factor of $\det(H)$ without performing numerical simulations on model equations. The algorithm allows us to classify low degree factors of $\det(H)$.  There are two types of degree 1 homeostasis (negative feedback loops and kinetic or Haldane motifs) and there are two types of degree 2 homeostasis (feedforward loops and a degree two appendage motif).

\noindent
{\bf Keywords:} Homeostasis, Coupled Systems, Combinatorial Matrix Theory, Input-Output Networks, Biochemical Networks, Perfect Adaptation.
\end{abstract}

\setcounter{tocdepth}{2}
\tableofcontents
\enlargethispage{5mm}

\Section{Introduction}

\subsection{Overview and perspective}
\label{SS:OP}

This paper divides into three parts.  Part I, which is just Section~\ref{SS:OP}, puts our work in perspective. Part II, which consists of Sections~\ref{sec:input-output}-\ref{SS:C}, gives a precise technical description of our results.  Finally, Part III consists of Sections~\ref{sec:core}-\ref{S:CC} and contains the rigorous mathematics, along with the proofs of theorems mentioned in Part II.  We note that certain graph theoretic notions and theorems are needed in the proofs in Part III, but are not needed in the description of our results in Part II.

A system exhibits {\em homeostasis} if on change of an input variable $\II$ some observable $x_o(\II)$ remains approximately constant.  Many researchers have emphasized that homeostasis is an important phenomenon in biology.  For example, the extensive work of Nijhout, Reed, Best and collaborators~\cite{NRBU04, RLN10, BNR09, NR14, NBR15, NBR18} consider biochemical networks associated with metabolic signaling pathways. Further examples include regulation of cell number and size \cite{L13}, control of sleep~\cite{WRCD99}, and expression level regulation in housekeeping genes~\cite{AGS18}.  

{\em Adaptation} is a closely related notion.  It is the ability of a system to reset an observable $x_o(\II)$ to its prestimulated output level (its {\em set point}) after responding to an external stimulus $\II$. Adaptation has been widely used in synthetic biology and control engineering (cf. \cite{MTELT09, AM13, TM16, F16, QD18, AL18, DQMS18, ALGBSK19}).  Here, the focus of the research is on the stronger condition of {\em perfect adaptation}, where the observable $x_o(\II)$ is required to be constant over a range of external stimuli $\II$.  The literature is huge, and these articles are a small sample.

The mathematical formulation of both homeostasis and adaptation is as follows. Start with a system of ordinary differential equations usually associated to a network of interacting elements. Next define an {\em input-output function} that maps the input variable or the external stimulus $\II$  to the output $x_o(\II)$.  Then the occurrence of homeostasis or perfect adaptation is a question about the properties of $x_o(\II)$ under (time-dependent) variation of $\II$.

For instance, Reed \etal~\cite{RBGSN17} consider biochemical signaling networks whose nodes represent the concentrations of certain biochemical substrates that interact through mass action kinetics.  They identify two homeostasis motifs in three-node networks: the {\em feedforward loop} motif (FFL)  (Figure \ref{E:FFL}) and the {\em kinetic} motif (K) (Figure~\ref{E:haldane}). There is notation in these figures that we have not yet defined.  In related work on three-node biochemical networks with Michaelis-Menten kinetics, Ma \etal~\cite{MTELT09} identify numerically two network topologies that achieve perfect adaptation.  To do this, the authors searched 16,038 equations in various three-node network topologies over a wide range of parameter space. They found just two motifs that achieved perfect adaptation: the {\em negative feedback loop} motif (NFL) (Figure \ref{E:ND}) and the {\em incoherent feedforward loop} (IFL) (Figure \ref{E:FFL}).  The combined results of \cite{RBGSN17} and \cite{MTELT09} show that at least three network topologies (K, NFL, IFL $\cong$ FFL) emerge as motifs exhibiting homeostasis or perfect adaptation in three-node biochemical networks.  

Recently, Wang and Golubitsky~\cite{GW19} classified the `homeostasis types' that can occur in three-node input-output networks based on the notion of {\em infinitesimal homeostasis} \cite{GS17} (see Definition \ref{D:homeostasis}).  Using this approach, they were able to reproduce the classification results in \cite{MTELT09} and \cite{RBGSN17}, within a broader class of systems including, but not limited to, specific model systems based on mass action or Michaelis-Menten kinetics.  They showed that three-node networks that can exhibit infinitesimal homeostasis are, up to core equivalence (see Definition~\ref{D:backward}), the three network topologies mentioned above.  

This paper generalizes the results of \cite{GW19} on three-node networks to arbitrarily large {\em input-output networks}.  We follow \cite{GS06} and abstract the notion of biochemical network to a `math network' given by a digraph $\GG$ with a distinguished input node $\iota$ and a different distinguished output node $o$.  The specific model equations are abstracted into {\em admissible} systems of differential equations, namely, one-parameter smooth families of vector fields compatible with the network topology of $\GG$, such that only the input node depends explicitly on $\II$.  These networks and their associated systems of differential equations are called {\em input-output networks}.  We show that under certain conditions (the existence of an asymptotically stable equilibrium $X_0$ for a particular parameter value $\II_0$), one can always define the {\em input-output function} $\II\mapsto x_o(\II)$ associated to a given input-output network $\GG$.

A straightforward application of Cramer's rule (Lemma~\ref{L:det}) gives a useful method for computing infinitesimal homeostasis points: infinitesimal homeostasis occur at $\II_0$, namely, $\frac{dx_o}{d\II}(\II_0)=0$, if and only if $\det\!\big(H(\II_0)\big)=0$ (Section~\ref{sec:cramer}). This result motivates the introduction of the \emph{homeostasis matrix} $H(\II)$ (see equation \eqref{xo'_reduced2}), whose entries are linearized coupling strengths and linearized self-coupling strengths associated with the input-output network.  The homeostasis matrix $H$ -- which has appeared in the literature under different names and notations (cf.~\cite{MTELT09,AM13,TM16,GS17,AL18,ALGBSK19}) -- is the central object in our theory. 
As an aside: In our math networks arrows are identical and represent couplings and nodes are identical and represent differential equations, but neither the couplings nor the equations are assumed to be identical.

Our main result states that the homeostasis types that occur in admissible systems of differential equations associated with the network $\GG$ are classified by the topology of certain subnetwork motifs of $\GG$.  Moreover, there is an algorithm (Section~\ref{sec:algorithm}) for determining all the homeostasis subnetwork motifs and the corresponding homeostasis conditions, which also can be used for designing network topologies that display infinitesimal homeostasis.
  
In order to prove our results we introduce new concepts and techniques.  The notion of {\em core network} (Section~\ref{sec:core-main}) allows one to go from a general input-output network to a `minimal network' that retains all essential features of homeostasis.  We define {\em core equivalence} of core networks in such a way that the determinant of a homeostasis matrix is determined by its core equivalence class.  Combinatorial matrix theory~\cite{BR91} lets us put $H$ into block upper triangular form and each diagonal block $B_\eta$ is irreducible (no further triangularization is possible) and corresponds to a {\em homeostasis type} (Section~\ref{sec:homeostasis-blocks}).  The {\em degree} of the homeostasis type is defined as the size $k$ of the square block $B_\eta$ and we prove that each block $B_\eta$ has either $k$ or $k-1$ self-couplings.  In the first case we call the homeostasis type {\em appendage class} and in the second {\em structural class} (Section~\ref{SS:homeostasis_vs_networks}).  We characterize combinatorially both homeostasis types by identifying homeostasis subnetwork motifs and associating a subnetwork motif to each homeostasis type (Section~\ref{SS:topological}). We also give an algorithm that determines the homeostasis blocks and their respective homeostasis types (Section~\ref{sec:algorithm}).

In the biochemical network literature on homeostasis (or adaptation) it is usual to find designations attached to the networks, such as {\em negative feedback loop}, {\em antithetical integral feedback}, {\em incoherent feedforward loop}, etc. \cite{MTELT09,TM16,F16}.  These names refer to the presence of a certain mechanism that is responsible for the occurrence of homeostasis in a particular network.   Ma et al.~\cite{MTELT09} suggest that studies of these mechanisms can yield design principles for constructing network topologies that exhibit homeostasis. This could be called a `bottom-up' approach for constructing homeostasis.  It starts by identifying small building blocks that are associated with homeostasis and then how the blocks can be combined to build-up increasingly  more complex networks that exhibit homeostasis. Here we take a 'top-down' approach. We start with an input-output network $\GG$ and have an algorithm that shows us how homeostasis in $\GG$ can be generated from homeostasis in certain subnetworks.

Fundamental to our approach is the discovery that homeostasis in $\GG$ can be associated with only two `classes of mechanisms' that we called {\em structural} and {\em appendage}, each associated with certain topological properties (Section~\ref{SS:topological}).  In addition to classifying homeostasis types in a given network, these topological constraints also provide insights into the `bottom-up' construction of homeostasis systems.  The structural and appendage classes are abstract generalizations of the usual `feedforward' and `feedback' mechanisms \cite{MTELT09,F16}. More precisely, for each homeostasis type (in each class), there is a corresponding `network motif' and an associated homeostasis mechanism.  For instance, {\em negative feedback loop} and {\em antithetical integral feedback} are types in the appendage class, and {\em incoherent feedforward loop} is a type in the structural class.

The motivation for the term {\em structural homeostasis} comes from \cite{RBGSN17}, where the authors identify the feedforward loop as one of the homeostastic motifs in three-node biochemical networks. In general, structural homeostasis corresponds to a balancing of two or more excitatory/inhibitory sequence of couplings from the input node to the output node; that is, a generalized feedforward loop.  There is a degenerate case where the role of the balancing is played by neutral coupling, a transition state between excitation and inhibition.  This homeostasis type is called {\em Haldane}, because Haldane \cite{H30} seems to have been the first to observe this homeostatic mechanism.

The intuition behind the term {\em appendage homeostasis} is that homeostasis is generated by a cycle of regulatory nodes; that is, a generalized feedback loop.  This loop functions as {\em controller nodes} on a system that does not by itself exhibit homeostasis. There is a degenerate case of appendage homeostasis that we call {\em null degradation} where the role of the controller is played by a neutral node that balances between degradation and production.  See Subsection~\ref{SS:FFFB} for additional detail.

A striking outcome of our approach is that we do not need to specify any homeostasis generating mechanisms at the outset.  However, we find \emph{a posteriori} that (given the appropriate generalizations) there are essentially only the two well-known {\em feedback} / {\em feedforward} types of homeostasis generating mechanisms.

Our work is unusual in that it combines ideas from combinatorial matrix theory and graph theory adapted to input-output networks to determine properties of equilibria of differential equations.  Specifically, the \emph{determinant formula} (Theorem~\ref{L:summand_form}) connects the nonzero summands of $\det(H)$ with simple paths from the input node to the output node of the network $\GG$.  It is reminiscent of the connection between a directed graph and its adjacency matrix~\cite{BC09}.  These simple paths allow us to identify both structural and appendage homeostasis. 
Finally, our theoretical results also allow us to derive formulas for determining the {\em chair singularities} \cite{NBR14,RBGSN17}.

\subsection{Input-output networks and infinitesimal homeostasis}
\label{sec:input-output}

We now define the basic objects: {\em input-output networks}, {\em network admissible systems of differential equations}, and {\em input-output functions}.

An {\em input-output network} $\GG$ has a distinguished {\em input node} $\iota$, a distinguished {\em output node} $o$ (distinct from $\iota$), and $n$ {\em regulatory nodes} $\rho = (\rho_1,\ldots,\rho_n)$.  The network $\GG$ also has a specified set of arrows (or directed edges) connecting nodes $\ell$ to nodes $j$.  The associated network systems of differential equations have the form 
\begin{equation} \label{eq:ad_coord}
\begin{array}{rcl}
\dot{x}_\iota & = & f_\iota(x_\iota,x_\rho, x_o,\II) \\
\dot{x}_\rho & = & f_\rho(x_\iota,x_\rho, x_o) \\
\dot{x}_o & = & f_o(x_\iota,x_\rho, x_o) 
\end{array}
\end{equation}
where $\II\in\R$ is an {\em external input parameter}, $X=(x_\iota,x_{\rho},x_o)\in\R\times\R^n\times\R$ is the vector of state variables associated to the network nodes and $F(X,\II)=(f_\iota, f_{\rho}, f_o)$ is a smooth one-parameter family of {\em $\GG$-admissible vector fields} on the state space $\R\times\R^n\times\R$ (see~\cite{GS06} for the definition of the space of admissible vector fields attached to a given network $\GG$).
We write the network system \eqref{eq:ad_coord} as
\begin{equation}\label{eq:ad}
\dot{X} = F(X, \II)
\end{equation}

Let $f_{j,x_\ell}$ denote the partial derivative of the $j^{th}$ node function $f_j$ with respect to the $\ell^{th}$ node variable $x_\ell$.  We make the following assumptions about the vector field $F$ throughout:
\begin{enumerate}[label=(\alph*)]
\item $F$ has an asymptotically stable equilibrium at $(X_0,\II_0)$.
\item The partial derivative $f_{j,x_\ell}$ can be nonzero only if the network $\GG$ has an arrow $\ell\to j$.
\item Only the input node coordinate function $f_\iota$ depends on the external input parameter $\II$ and the partial derivative of $f_\iota$ with respect to $\II$ at the equilibrium point $(X_0,\II_0)$ satisfies
\begin{equation} \label{e:f_iota_I}
f_{\iota,\II}(X_0,\II_0) \neq 0
\end{equation}
\end{enumerate}

It follows from (a) and the implicit function theorem applied to
\begin{equation} \label{eq:F=0}
F(X,\II) = 0
\end{equation}
that there exists a unique smooth family of stable equilibria
\begin{equation}
X(\II) = \big(x_\iota(\II),x_\rho(\II),x_o(\II)\big)
\end{equation}
such that $F(X(\II),\II)\equiv 0$ and $X(\II_0) = X_0$.  

\begin{definition} \rm \label{D:IOF}
The mapping $\II \mapsto x_o(\II)$ is called the {\em input-output function}.
\end{definition}

Local homeostasis is defined near $\II_0$ when the input-output function $x_o$ is approximately constant near $\II_0$.  An important observation is that locally homeostasis occurs when the derivative of $x_o$ with respect to $\II$ is zero at $\II_0$.  More precisely: 

\begin{definition} \rm \label{D:homeostasis}
{\em Infinitesimal homeostasis} occurs at $\II_0$ if $x'_o(\II_0) = 0$ where $'$ indicates differentiation with respect to $\II$. 
\end{definition}

Terms that involve coupling in network systems are:
\begin{definition} \rm
Let $F=(f_\iota, f_{\rho}, f_o)$ be an admissible system for the network $\GG$.
\begin{enumerate}[label=(\alph*)]
\item The partial derivative $f_{j,x_\ell}(X_0,\II_0)$ is the linearized {\em coupling} associated with the arrow $\ell\to j$ at the equilibrium $(X_0,\II_0)$.
\item The partial derivative $f_{j,x_j}(X_0,\II_0)$ is the linearized {\em self-coupling} of node $j$ at the equilibrium $(X_0,\II_0)$.
\end{enumerate}
\end{definition}


\begin{remark} \rm
A notion similar to infinitesimal homeostasis, called {\em perfect homeostasis} or {\em perfect adaptation}, requires the stronger condition that the derivative of the input-output function be identically zero on an interval.  It follows from Taylor's theorem that infinitesimal homeostasis implies that the input-output function $x_o$ is approximately constant near $\II_0$, the converse is not valid in general~\cite{RBGSN17}.  This property is called {\em near perfect homeostasis} or {\em near perfect adaptation} in the literature (cf. \cite{F16,TM16}).  Hence, infinitesimal homeostasis is an intermediate notion between {\em perfect homeostasis} and {\em near perfect homeostasis}.
\end{remark}

\subsection{Infinitesimal homeostasis using Cramer's rule}
\label{sec:cramer}

As noted previously \cite{GS17,RBGSN17,GW19}, a straightforward application of Cramer's rule gives a formula for determining  infinitesimal homeostasis points.  See Lemma~\ref{L:det}.

We use the following notation.  Let $J$ be the $(n+2)\times(n+2)$ Jacobian matrix of \eqref{eq:ad} and let $\PH$ be the $(n+1)\times(n+1)$ {\em homeostasis matrix} given by dropping the first row and the last column of $J$:
\begin{equation} \label{xo'_reduced2}
J = \Matrixc{ f_{\iota, x_\iota}   &  f_{\iota, x_\rho} & f_{\iota, x_o} \\
	f_{\rho, x_\iota}   &  f_{\rho, x_\rho} & f_{\rho, x_o} \\
	f_{o, x_\iota} &  f_{o, x_\rho} & f_{o, x_o} }
\quad 
H = \Matrixc{ f_{\rho, x_\iota}   &  f_{\rho, x_\rho}\\ f_{o, x_\iota} &  f_{o, x_\rho}}
\end{equation}
Here all partial derivatives $f_{\ell,x_j}$ are evaluated at the equilibrium $X_0$. The next lemma has appeared in several places including \cite[Figure~5A]{MTELT09}, \cite[Lemma~6.1]{GS17}, \cite[Theorem~3]{RBGSN17},and \cite[Lemma~9]{GW19}, though not in this generality.  The proof is included here for completeness, even though it is virtually identical to the one in \cite{GW19}.

\begin{lemma} \label{L:det}
Let $(X_0,\II_0)$ be an asymptotically stable equilibrium of \eqref{eq:ad}.  The input-output function $x_o(\II)$ satisfies
\begin{equation} \label{xo'}
x_o' = \pm\frac{f_{\iota, \II}}{\det(J)} \det(\PH)
\end{equation}
Hence, $\II_0$ is a point of infinitesimal homeostasis if and only if 
\begin{equation} \label{xo'_reduced}
\det(\PH)= 0
\end{equation}
at $(X_0, \II_0)$.
\end{lemma}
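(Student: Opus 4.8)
The plan is to differentiate the equilibrium condition implicitly and then extract $x_o'$ by Cramer's rule. First I would differentiate the identity $F(X(\II),\II)\equiv 0$ underlying \eqref{eq:F=0} with respect to $\II$. By the chain rule this yields $J\,X'(\II) + F_\II = 0$, where $J$ is the full $(n+2)\times(n+2)$ Jacobian from \eqref{xo'_reduced2} and $F_\II = (f_{\iota,\II},f_{\rho,\II},f_{o,\II})^{T}$ is the vector of $\II$-partials. Assumption (c) forces $f_{\rho,\II}=0$ and $f_{o,\II}=0$, so $F_\II = (f_{\iota,\II},0,\ldots,0)^{T}$ has a single nonzero entry in its first slot. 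Asymptotic stability at $(X_0,\II_0)$ guarantees that every eigenvalue of $J$ has negative real part, hence $\det(J)\neq 0$ and $J$ is invertible; this is what lets me solve the linear system for $X'(\II)$.

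Next I would apply Cramer's rule to the system $J\,X'(\II) = -F_\II$ to isolate the last component $x_o'$. This expresses $x_o' = \det(J_o)/\det(J)$, where $J_o$ is $J$ with its last column replaced by the right-hand side $-F_\II$. Because $-F_\II$ has only its top entry nonzero, a Laplace (cofactor) expansion of $\det(J_o)$ along that last column collapses to the single term $(-f_{\iota,\II})\cdot(-1)^{1+(n+2)}\,M$, where $M$ is the minor obtained by deleting the first row and the last column of $J_o$.

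The key identification is that this minor $M$ equals $\det(H)$. Deleting the last column of $J_o$ removes exactly the modified column, so the remaining block agrees with that of $J$; deleting in addition the first row then produces precisely the matrix $H$ defined in \eqref{xo'_reduced2} as $J$ with its first row and last column dropped. Substituting gives \eqref{xo'}, namely $x_o' = \pm \frac{f_{\iota,\II}}{\det(J)}\det(H)$, with the sign absorbing the cofactor factor $(-1)^{n+3}$ together with the minus sign coming from $-F_\II$.

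Finally, to obtain the homeostasis criterion \eqref{xo'_reduced} I would evaluate at $(X_0,\II_0)$: there $\det(J)\neq 0$ by stability and $f_{\iota,\II}\neq 0$ by \eqref{e:f_iota_I}, so the scalar prefactor $\pm f_{\iota,\II}/\det(J)$ is nonzero. Hence $x_o'(\II_0)=0$ if and only if $\det(H)=0$ at $(X_0,\II_0)$. I expect no serious obstacle here; the only points requiring care are the bookkeeping of the cofactor sign (harmless, since the statement claims the formula only up to $\pm$) and the clean identification of the Cramer minor with $H$, which follows directly from the way $H$ is carved out of $J$ in \eqref{xo'_reduced2}.
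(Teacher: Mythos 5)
Your proposal is correct and follows essentially the same route as the paper: implicit differentiation of the equilibrium identity, invertibility of $J$ from asymptotic stability, and Cramer's rule applied to the last component, with the cofactor expansion along the modified last column identifying the relevant minor with $\det(\PH)$. The only difference is that you spell out the Laplace expansion and sign bookkeeping explicitly, a step the paper leaves implicit in the phrase ``which leads to \eqref{xo'}.''
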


\begin{proof}
Implicit differentiation of \eqref{eq:F=0} with respect to $\II$ yields the matrix system
\begin{equation} \label{imp_diff}
J \Matrixc{x_i' \\ x_\rho' \\ x_o'} = -\Matrixc{f_{\iota, \II} \\ 0 \\ 0}
\end{equation}
Since $X_0$ is assumed to be a stable equilibrium, it follows that $\det(J)\neq 0$. 
On applying Cramer's rule to \eqref{imp_diff} we can solve for $x_o'$ obtaining  
\begin{equation}
x_o'(\II_0) = \frac{1}{\det(J)}\det\Matrixc{
	f_{\iota, x_\iota} &  f_{\iota, x_\rho} & -f_{\iota, \II} \\ 
	f_{\rho, x_\iota}&  f_{\rho, x_\rho} & 0 \\
	f_{o, x_\iota} &  f_{o, x_\rho} & 0 }
\end{equation}
which leads to \eqref{xo'}.
By assumption \eqref{e:f_iota_I}, $f_{\iota, \II} \neq 0$.  Hence, the fact that infinitesimal homeostasis for \eqref{eq:ad} is equivalent to \eqref{xo'_reduced} follows directly from \eqref{xo'}.
\qed
\end{proof}

\subsection{Core networks}
\label{sec:core-main}

Homeostasis in a given network $\GG$ can be determined by analyzing a simpler network that is obtained by eliminating certain nodes and arrows from $\GG$.  We call the network formed by the remaining nodes and arrows the {\em core subnetwork}.

\begin{definition} \rm \label{D:updown}
A node $\tau$ in a network $\GG$ is \emph{downstream} from a node $\rho$ in $\GG$ if there exists a path in $\GG$ from $\rho$ to $\tau$.  Node $\rho$ is \emph{upstream} from node $\tau$ if $\tau$ is downstream from $\rho$. 
\end{definition}

These relationships are important when trying to classify infinitesimal homeostasis.  For example, if the output node $o$ is not downstream from the input node $\iota$, then the input-output function $x_o(\II)$ is identically constant in $\II$.  Although technically this is a form of infinitesimal homeostasis, it is an uninteresting form.

\begin{definition} \rm \label{D:core}
\begin{enumerate}[label=(\alph*)]
\item The input-output network is a {\em core network} if every node is both upstream from the output node and downstream from the input node.  
\item Every input-output network $\GG$ has a core subnetwork $\GG_c$ whose nodes are the nodes in $\GG$ that are both upstream from the output node and downstream from the input node and whose arrows are the arrows in $\GG$ whose head and tail nodes are both nodes in $\GG_c$. 
\end{enumerate}
\end{definition}

\begin{figure}[!ht]
\centering
\begin{subfigure}{0.45\textwidth}
\centering
\includegraphics[width=1\textwidth]{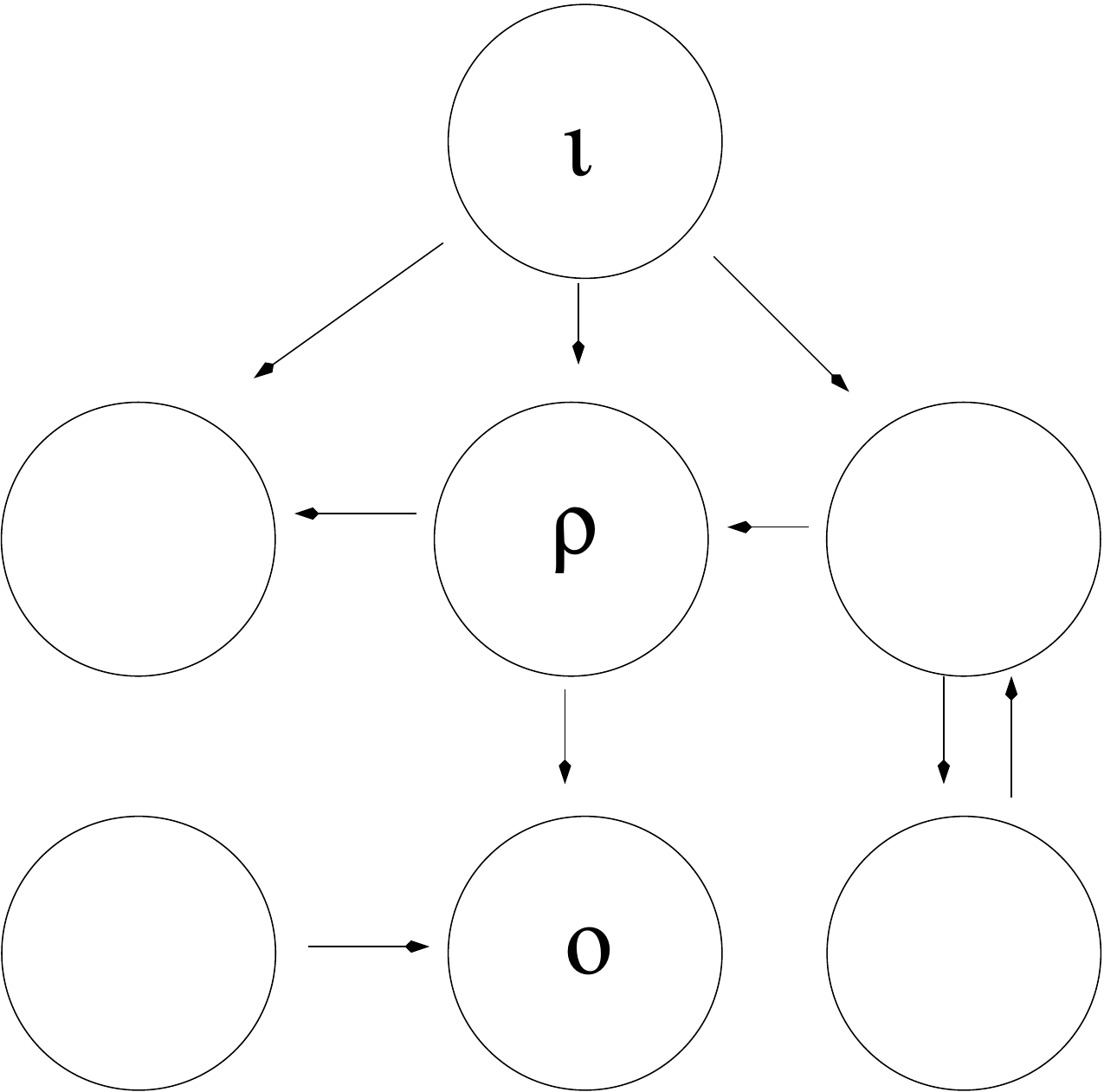}
\caption{Network $\GG$ \label{7nodeG}}
\end{subfigure} 
\qquad\quad
\begin{subfigure}{0.45\textwidth}
\centering
\includegraphics[width=0.6\textwidth]{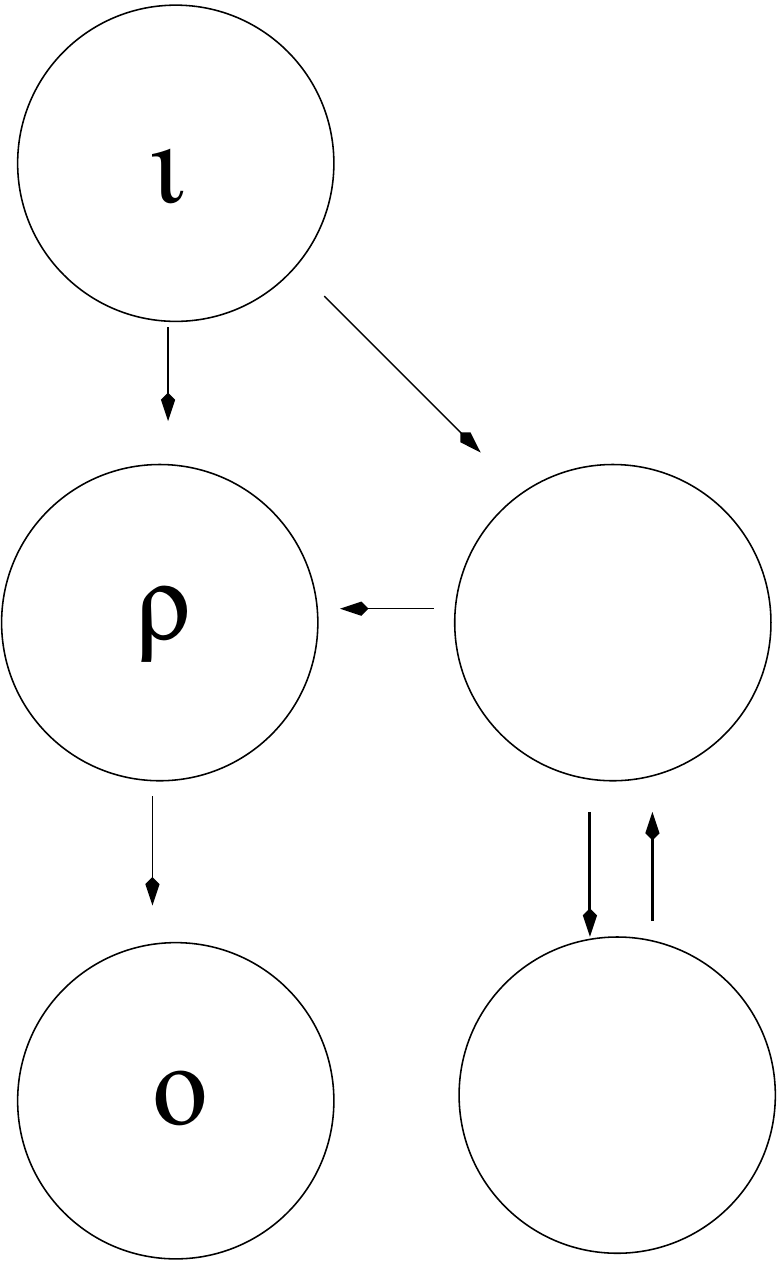}
\caption{Core network $\GG_c$ \label{7nodeGC}}
\end{subfigure} 
\caption{Example of a core subnetwork. $\GG_c$ obtained from $\GG$ by deleting nodes that are not both upstream from $o$ and downstream from $\iota$ and all arrows into and out of the deleted nodes. \label{F:7node}}
\end{figure}

The next result concerning core networks follows from Theorem~\ref{T:core}.

\begin{corollary} \label{T:coreA}
Let $\GG$ be an input-output network and let $\GG_c$ be the associated core subnetwork. The input-output function associated with $\GG_c$ has a point of infinitesimal homeostasis at $\II_0$ if and only if the input-output function associated with $\GG$ has a point of infinitesimal homeostasis at $\II_0$.
\end{corollary}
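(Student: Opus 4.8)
The plan is to reduce the statement to a single algebraic identity about the homeostasis matrix and then to exploit asymptotic stability. By Lemma~\ref{L:det}, a parameter value $\II_0$ is a point of infinitesimal homeostasis precisely when $\det(H)=0$ at the corresponding equilibrium. Writing $H_\GG$ and $H_{\GG_c}$ for the homeostasis matrices of $\GG$ and of its core subnetwork (whose induced admissible system is obtained by restricting to core nodes and holding the deleted variables at their equilibrium values), it therefore suffices to establish the factorization
\[
\det(H_\GG) \;=\; c\,\det(H_{\GG_c}), \qquad c\neq 0,
\]
evaluated at $(X_0,\II_0)$. The corollary then follows at once, since the nonzero factor gives $\det(H_\GG)=0 \iff \det(H_{\GG_c})=0$.

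I would obtain this factorization by two successive block triangularizations, one for each defining property of the core in Definition~\ref{D:core}. First delete the set $B$ of regulatory nodes that are \emph{not} downstream from $\iota$. By Definition~\ref{D:updown}, an arrow $\ell\to\rho$ with $\rho\in B$ would make $\rho$ downstream from $\iota$ whenever $\ell$ is; hence, in the non-degenerate case where $o$ is downstream from $\iota$, no arrow enters $B$ from outside $B$, so by assumption~(b) the $B$-rows of $H_\GG$ vanish off the $B$-columns. Ordering $B$ first puts $H_\GG$ in block triangular form with diagonal blocks $H_{BB}$ and $H_{\GG\setminus B}$, whence $\det(H_\GG)=\det(H_{BB})\det(H_{\GG\setminus B})$. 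A symmetric argument, deleting from $\GG\setminus B$ the nodes not upstream from $o$ (now no arrow \emph{leaves} this set, so the corresponding columns vanish off their own rows), peels off a second triangular block and leaves exactly $H_{\GG_c}$, since the surviving nodes are precisely those both downstream from $\iota$ and upstream from $o$.

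The crux is to show the two peeled-off diagonal blocks are invertible, and this is where stability enters. The block $H_{BB}$ is indexed by regulatory nodes in both rows and columns, so its entries coincide with those of the principal submatrix $J_{BB}$ of the Jacobian $J$ in \eqref{xo'_reduced2}. The same closure property that triangularized $H_\GG$ also triangularizes $J$: since no arrow enters $B$, the $B$-rows of $J$ vanish off the $B$-columns, so $J_{BB}$ is a diagonal block of a block-triangular $J$ and its spectrum is contained in that of $J$. By assumption~(a) every eigenvalue of $J$ has negative real part, so $\det(J_{BB})=\prod_i\lambda_i\neq 0$; the same reasoning (using that no arrow leaves the second set) handles the other block. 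This yields $c\neq 0$. The identical triangular picture, applied to the core block of $J$, shows that its spectrum too lies in that of $J$, so the induced core system is asymptotically stable and its input-output function is genuinely defined, which is what allows Lemma~\ref{L:det} to be applied on the $\GG_c$ side.

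I expect the main obstacle to be conceptual bookkeeping rather than computation: one must verify that ``upstream from $o$'' computed inside $\GG\setminus B$ agrees with the relation computed in $\GG$ (it does, because any realizing path stays downstream from $\iota$), and one must confirm the stability of the induced core system as above — these are the ingredients packaged in the referenced Theorem~\ref{T:core}. Finally, the degenerate case in which $o$ is not downstream from $\iota$ should be noted separately: there $x_o(\II)$ is constant, and both $\GG$ and $\GG_c$ trivially exhibit infinitesimal homeostasis at every $\II_0$, so the equivalence holds vacuously.
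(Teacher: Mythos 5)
Your proposal is correct and follows essentially the same route as the paper's proof of Theorem~\ref{T:core} (via Lemmas~\ref{L:missing_arrows} and~\ref{core1jacobian}): partition the regulatory nodes into those not downstream from $\iota$, those not upstream from $o$, and the core nodes; use the resulting absence of arrows to block-triangularize $H$ and $J$; factor $\det(H_\GG)$ as the product of $\det(H_{\GG_c})$ with the determinants of the peeled-off Jacobian blocks; and invoke asymptotic stability of $J$ to conclude those extra factors are nonzero. The only differences are cosmetic (two successive peels instead of one three-way split), and your explicit handling of the degenerate case and of the stability of the induced core system matches what the paper packages into Lemma~\ref{core1jacobian}.
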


It follows from Corollary~\ref{T:coreA} that classifying infinitesimal homeostasis for networks $\GG$ is equivalent to classifying infinitesimal homeostasis for the core subnetwork $\GG_c$.

Figure~\ref{F:7node} gives an example of reducing a network to a core network.  In this case the left two nodes in Figure~\ref{7nodeG} are deleted to get to the core network, which is illustrated in Figure~\ref{7nodeGC}.

\begin{definition} \rm \label{D:backward}
\begin{enumerate}[label=(\alph*)]
\item Two $(n+2)$-node core networks are \emph{core equivalent} if the determinants of their homeostasis matrices are identical polynomials of degree $n+1$.
\item A {\em backward arrow} is an arrow whose head is the input node $\iota$ or whose tail is the output node $o$.
\end{enumerate}
\end{definition}

\begin{corollary} \label{P:core_equivalent}
If two core networks differ from each other by the presence or absence of backward arrows, then the core networks are core equivalent.
\end{corollary}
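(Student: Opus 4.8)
The plan is to show that the homeostasis matrix $H$ is unchanged entry-by-entry when backward arrows are added to or removed from a core network, so that $\det(H)$ is literally the same polynomial and the criterion of Definition~\ref{D:backward}(a) holds. The whole content lies in matching the combinatorial definition of a backward arrow against the bookkeeping of which couplings survive in $H$ after the first row and last column of $J$ are deleted in \eqref{xo'_reduced2}.

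First I would fix the indexing of $J$. Its rows are indexed by the equations $\iota,\rho_1,\ldots,\rho_n,o$ and its columns by the variables $x_\iota,x_{\rho_1},\ldots,x_{\rho_n},x_o$, and by admissibility assumption (b) the $(j,\ell)$ entry $f_{j,x_\ell}$ is a (potentially nonzero) coupling exactly when $\GG$ has an arrow $\ell\to j$. Passing to $H$ deletes the $\iota$-row and the $x_o$-column, so the surviving rows are indexed by $\rho_1,\ldots,\rho_n,o$ and the surviving columns by $x_\iota,x_{\rho_1},\ldots,x_{\rho_n}$. Hence the coupling $f_{j,x_\ell}$ of an arrow $\ell\to j$ appears as an entry of $H$ if and only if $j\neq\iota$ and $\ell\neq o$.

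Next I would translate Definition~\ref{D:backward}(b): an arrow $\ell\to j$ is backward precisely when its head is $\iota$ (that is, $j=\iota$) or its tail is $o$ (that is, $\ell=o$). Comparing with the previous step, an arrow contributes an entry to $H$ if and only if it is \emph{not} backward; the couplings coming from backward arrows are exactly the entries $f_{\iota,x_\ell}$ lying in the deleted $\iota$-row together with the entries $f_{j,x_o}$ lying in the deleted $x_o$-column. Consequently every entry of $H$ depends only on the non-backward arrows of the network.

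Finally, if two core networks share the same set of non-backward arrows and differ only in their backward arrows, their homeostasis matrices are identical as matrices of coupling variables, so their determinants are the same polynomial of degree $n+1$, which is core equivalence in the sense of Definition~\ref{D:backward}(a). I do not anticipate a genuine obstacle: the entire argument is the alignment of the two index sets, and the only point needing care is to keep straight that the dropped row is the input \emph{equation} while the dropped column is the output \emph{variable}, so that a backward arrow of either type lands in precisely the part of $J$ that is discarded. Since both networks are assumed to be core networks, no separate verification of the core property is required.
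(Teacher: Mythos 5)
Your proof is correct and is essentially the paper's own argument: the couplings of backward arrows are exactly the entries $f_{\iota,x_\ell}$ and $f_{j,x_o}$ that lie in the deleted first row and last column of $J$, hence never appear in $H$. You have merely spelled out the index bookkeeping in more detail than the paper does.
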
 

\begin{proof}
The linearized couplings associated to backward arrows are of form $f_{\iota,x_k}$ and $f_{k,x_o}$, which do not appear in the homeostasis matrix~\eqref{xo'_reduced}. \qed
\end{proof}


Therefore, backward arrows can be ignored when computing infinitesimal homeostasis with the homeostasis matrix $\PH$.  However, backward arrows cannot be totally ignored, since they are involved in the determination of both the equilibria of \eqref{eq:ad} and their stability.

Corollary~\ref{P:core_equivalent} can be generalized to a theorem giving necessary and sufficient graph theoretic conditions for core equivalence.  See Theorem~\ref{T:core_equivalent}. 

\subsection{Infinitesimal homeostasis blocks}
\label{sec:homeostasis-blocks}

The previous results imply that the computation of infinitesimal homeostasis reduces to solving $\det(\PH)=0$, where $\PH$ is the homeostasis matrix associated with a core network.  From now on we assume that the input-output network $\GG$ is a core network.

It is important to observe that the nonzero entries of $\PH$ are the linearized coupling strengths $f_{j,x_\ell}$ for the network connected nodes $\ell\to j$ and the linearized self-coupling strengths $f_{j,x_j}$.  It follows that $h = \det(\PH)$ is a homogeneous polynomial of degree $n+1$ in the $(n+1)^2$ entries of $\PH$.   We use combinatorial matrix theory to show that in general $h$ is nonzero and can factor, and that there is a different type of infinitesimal homeostasis associated with each factor.
(Note that if $h\equiv 0$, then the input-output function is constant.)


Frobenius-K\"onig theory~\cite{BR91} (see~\cite{S77} for an historical account) applied to the homeostasis matrix $\PH$ implies that there are two constant $(n+1)\times(n+1)$ permutation matrices $P$ and $Q$ such that
\begin{equation} \label{eq:FK_form}
P \PH Q = \Matrixc{
	B_1 & *   & \cdots & * \\
	0   & B_2 & \cdots & * \\
	\vdots &  &  & \vdots \\
	0   &  0 & \cdots & B_m
}   
\end{equation}
where the square matrices $B_1,\ldots,B_m$ are unique up to permutation.  More precisely, each block $B_\eta$ cannot be brought into the form \eqref{eq:FK_form} by permutation of its rows and columns. Hence
\begin{equation} \label{eq:FK_factors}
\det(\PH) = \det(B_1) \cdots \det(B_m) \quad\mbox{or}\quad h = h_1\cdots h_m
\end{equation}
is a unique factorization since $h_\eta = \det(B_\eta)$ cannot further factor for each $\eta$; that is, each $\det(B_\eta)$ is an irreducible homogeneous polynomial.  Specifically:

\begin{theorem} \label{T:block}
The polynomial $h_\eta=\det(B_\eta)$ is {\em irreducible} (in the sense that it cannot be factored as a polynomial) if and only if the block submatrix $B_\eta$ is {\em irreducible} (in the sense that $B_\eta$ cannot be brought to the form \eqref{eq:FK_form} by permutation of rows and columns of $B_\eta$). 
\end{theorem}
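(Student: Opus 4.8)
The plan is to prove the two implications separately, treating the nonzero entries of $B_\eta$ as distinct indeterminates, so that $h_\eta=\det(B_\eta)$ is homogeneous of degree $k$ (the order of $B_\eta$) and \emph{multilinear}: Laplace expansion along any single entry shows that each indeterminate occurs to degree at most one, whence every monomial of $h_\eta$ is squarefree and equals $\pm\prod_i (B_\eta)_{i\sigma(i)}$ for a single nonzero diagonal, i.e.\ a permutation $\sigma$ with all entries $(B_\eta)_{i\sigma(i)}\neq 0$. One implication, ``$h_\eta$ irreducible $\Rightarrow B_\eta$ irreducible,'' I would prove in contrapositive form and it is easy: if $B_\eta$ can be brought to the form \eqref{eq:FK_form} with two diagonal blocks $B$ and $D$ of positive order, then $\det(B_\eta)=\pm\det(B)\det(D)$, a product of two homogeneous polynomials of positive degree, each nonzero since $h_\eta\neq 0$; hence $h_\eta$ factors nontrivially.

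For the reverse implication, ``$B_\eta$ irreducible $\Rightarrow h_\eta$ irreducible,'' I would again argue by contraposition: suppose $h_\eta=pq$ with $p,q$ nonconstant. Because $h_\eta$ is homogeneous, both $p$ and $q$ may be taken homogeneous. Multilinearity forces $\deg_x p+\deg_x q\leq 1$ for every indeterminate $x$, so the variable set $V_p$ of $p$ and the variable set $V_q$ of $q$ are disjoint. Moreover, an irreducible block $B_\eta$ has total support (a standard fact of combinatorial matrix theory \cite{BR91}), so every indeterminate of $B_\eta$ actually occurs in $h_\eta=pq$; thus the indeterminates partition as $V_p\sqcup V_q$.

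The heart of the argument is to convert this variable partition into a block structure of $B_\eta$. Since $V_p\cap V_q=\varnothing$, the products $m_pm_q$ of a monomial $m_p$ of $p$ with a monomial $m_q$ of $q$ are pairwise distinct and suffer no cancellation, so each is a genuine monomial of $h_\eta$ and hence corresponds to a full permutation, that is, a perfect matching of rows to columns. Fixing one monomial $m_q$ of $q$ and varying $m_p$ over all monomials of $p$ then shows that every monomial of $p$ occupies the \emph{same} set of rows $R_p$ and the same set of columns $C_p$, with $q$ supported on the complementary rows $R_q$ and columns $C_q$. As every nonzero entry of $B_\eta$ lies in some monomial of $p$ or of $q$, no nonzero entry can sit in rows $R_p$ and columns $C_q$, nor in rows $R_q$ and columns $C_p$. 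Listing $R_p$ before $R_q$ and $C_p$ before $C_q$ therefore brings $B_\eta$ to the form \eqref{eq:FK_form}, contradicting the irreducibility of $B_\eta$.

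I expect the crux to be this last passage from the algebraic factorization to the combinatorial decomposition. The delicate points are justifying that the factors may be taken homogeneous with disjoint variable supports, ruling out cancellation so that every product $m_pm_q$ is a bona fide diagonal, and above all the uniform-support claim that all monomials of $p$ lie on one common set of rows and columns, which is exactly what aligns the variable partition with a row/column partition of $B_\eta$. Total support enters essentially, guaranteeing that no indeterminate is left outside $V_p\cup V_q$; without it the alignment can break, and the implication genuinely requires it.
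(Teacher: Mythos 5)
Your proof is correct. The difference from the paper is that the paper does not actually argue Theorem~\ref{T:block} at all: its ``proof'' is a one-line citation of Theorems 4.2.6 and 9.2.4 of Brualdi--Ryser \cite{BR91}, which are precisely the statements that the Frobenius--K\"onig diagonal blocks are fully indecomposable and that the determinant of a fully indecomposable matrix of distinct indeterminates is an irreducible polynomial. What you have written is essentially a reconstruction of the classical Frobenius argument underlying those cited theorems: the easy direction via $\det(B_\eta)=\pm\det(B)\det(D)$, and the hard direction by passing from a factorization $h_\eta=pq$ to disjoint variable supports (using multilinearity), then to a common row/column support for all monomials of $p$ (using that every monomial of $h_\eta$ is a full nonzero diagonal and that there is no cancellation among the products $m_pm_q$), and finally to a forbidden zero block via total support. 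All the steps you flag as delicate are handled correctly: homogeneity of the factors, disjointness of $V_p$ and $V_q$, the uniform-support claim obtained by fixing one monomial of $q$, and the essential use of total support of a fully indecomposable matrix to ensure every nonzero entry of $B_\eta$ is accounted for. What your version buys is a self-contained, checkable proof that makes the hypotheses explicit (distinctness of the indeterminate entries, $h_\eta\not\equiv 0$, total support); what the paper's version buys is brevity, at the cost of leaving the reader to match its notion of irreducibility of $B_\eta$ (non-reducibility to the form \eqref{eq:FK_form}) with the textbook's notion of full indecomposability.
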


\begin{proof} 
The decomposition \eqref{eq:FK_form} corresponds to the irreducible components in the factorization \eqref{eq:FK_factors} follows from \cite[Theorem 4.2.6 (pp.~114--115) and Theorem 9.2.4 (p.~296)]{BR91}.
\qed
\end{proof}

A consequence of \eqref{eq:FK_factors} and \eqref{xo'_reduced} is that for each $\eta = 1,\ldots, m$ there is a {\em defining condition} for infinitesimal homeostasis given by the polynomial equation $\det(B_\eta) = 0$.  Recall that the input-output function is implicitly defined in terms of the external input $\II$ and $\det(B_\eta)$ is a homogeneous polynomial in the linearized coupling strengths $f_{j,x_\ell}$ evaluated at $X(\II)$.  Hence, there are $m$ different defining conditions for infinitesimal homeostasis, $h_\eta(\II) = 0$, where each one  gives a nonlinear equation that can be solved for some $\II=\II_0$.  In practice, for a given model, it is unlikely that these equations can be solved in closed form;  however, it is possible that each defining condition can be solved numerically.  So, the decomposition of the homeostasis matrix $\PH$ into blocks $B_\eta$ simplifies the solution of $\det(\PH)=0$. 

\begin{definition} \label{D:homo_type} \rm
Given the homeostasis matrix $\PH$ of an input-output network $\GG$, we call the unique irreducible diagonal blocks $B_\eta$ in the decomposition \eqref{eq:FK_form} {\em irreducible components}.  We say that homeostasis in $\GG$ is {\em of type $B_\eta$} if $\det(B_\eta) = 0$ and $\det(B_\xi) \neq 0$ for all $\xi\neq\eta$. 
\end{definition}

\subsection{Infinitesimal homeostasis classes}
\label{SS:homeostasis_vs_networks}

The next results assert that the irreducible components $B_\eta$ of $\PH$ determine two distinct {\em homeostasis classes} (appendage and structural) and that one can associate a subnetwork $\KK_\eta$ of $\GG$ with each $B_\eta$ (see Section~\ref{S:classification2}).

Let $B_\eta$ be an irreducible component in the decomposition \eqref{eq:FK_form}, where $B_\eta$ is a $k\times k$ diagonal block, that is, $B_\eta$ has degree $k$.  Since the entries of $B_\eta$ are entries of $\PH$, these entries have the form $f_{\rho,x_\tau}$; that is, the entries are either $0$ (if $\tau\to\rho$ is not an arrow in $\GG$), {\em self-coupling} (if $\tau = \rho$), or {\em coupling} (if $\tau\to\rho$ is an arrow in $\GG$).  

Since $P$ and $Q$ in \eqref{eq:FK_form} are constant permutation matrices, all entries in each row (resp.~column) of $B_\eta$ must lie in a single row (resp.~column) of $\PH$.  Hence, $B_\eta$ has the form
\begin{equation} \label{eq:K}
B_\eta = \Matrixc{f_{\rho_1,x_{\tau_1}} & \cdots & f_{\rho_1,x_{\tau_k}} \\ 
	\vdots & \ddots & \vdots \\
	f_{\rho_k,x_{\tau_1}} & \cdots & f_{\rho_k,x_{\tau_k}}}
\end{equation}
It follows that the number of self-coupling entries of $B_\eta$ are the same no matter which permutation matrices $P$ and $Q$ are used in \eqref{eq:FK_form} to determine $B_\eta$.  In Theorem~\ref{L:self_couplings} we show that a $k\times k$ submatrix $B_\eta$ has either $k$ or $k-1$ self-coupling entries.  

\begin{definition} \rm \label{D:classes}
The homeostasis class of an irreducible component $B_\eta$ of degree $k$ is {\em appendage} if $B_\eta$ has $k$ self-couplings and {\em structural} if $B_\eta$ has $k-1$ self-couplings.
\end{definition}

\begin{definition} \rm \label{D:K_eta}
The subnetwork $\KK_\eta$ of $\GG$ associated with the homeostasis block $B_\eta$ is defined as follows.  The {\em nodes} in $\KK_\eta$ are the union of nodes $p$ and $q$ where $f_{p,x_q}$ is a nonzero entry in $B_\eta$ and the {\em arrows} of $\KK_\eta$ are the union of arrows $q\to p$ where $p\neq q$. 
\end{definition}

Theorem~\ref{lem:associated_network} implies that when $B_\eta$ is appendage, the subnetwork $\KK_\eta$ has $k$ nodes and $B_\eta$ can be put in a \emph{standard Jacobian form} without any distinguished nodes (\eqref{eq:K_matrices(k)}). Also, when $B_\eta$ is structural, the subnetwork $\KK_\eta$ has $k+1$ nodes and $B_\eta$ can be put in a \emph{standard homeostasis form} with designated input node and output node (\eqref{eq:K_matrices(k-1)}).  Moreover, in this case, the subnetwork $\KK_\eta$ has no backward arrows.  That is, $\KK_\eta$ has no arrows whose head is the input node or whose tail is the output node.  See Remark~\ref{rm:associated_network} for details.

\subsection{Combinatorial characterization of homeostasis}
\label{SS:topological}

In Subsections~\ref{SS:SN} and \ref{ss:prop-appendage} we define a number of combinatorial terms. These terms are illustrated in the $12$-node network in Figure~\ref{F:homeostasis-example-comp}.

\subsubsection{Simple nodes}
\label{SS:SN}

Core input-output networks $\GG$ have combinatorial properties that we now define and exploit.  The key ideas are the concepts of $\iota o$-simple paths and super-simple nodes.

\begin{definition} \rm \label{D:simple_complementary}
Let $\GG$ be a core input-output network.
\begin{enumerate}[label=(\alph*)]
\item A directed path connecting nodes $\rho$ and $\tau$ is called a \emph{simple path} if it visits each node on the path at most once.  
\item \label{iota_o} An {\em $\iota o$-simple path} is a simple path connecting the input node $\iota$ to the output node $o$.
\item \label{simp_app} A node in $\GG$ is {\em simple} if the node lies on an $\iota o$-simple path and {\em appendage} if the node is not simple.
\item \label{super-simple} A {\em super-simple} node is a simple node that lies on every $\iota o$-simple path.
\end{enumerate}
\end{definition}
Nodes $\iota$ and $o$ are super-simple since by definition these nodes are on every $\iota o$-simple path. 
Lemma~\ref{L:supersimple} shows that super-simple nodes are well ordered (by downstream ordering) and hence adjacent super-simple pairs of nodes can be identified.

\subsubsection{Properties of appendage homeostasis}
\label{ss:prop-appendage}

Characterization of appendage homeostasis networks requires the following definitions.

\begin{definition} \rm \label{D:appendage_terms} 
Let $\GG$ be a core input-output network.
\begin{enumerate}[label=(\alph*)]
\item \label{appen} The {\em appendage subnetwork} $\AA_\GG$ of $\GG$ is the subnetwork consisting of all appendage nodes and all arrows in $\GG$ connecting appendage nodes. 
\item \label{compl} The {\em complementary subnetwork} of an $\iota o$-simple path $S$ is the subnetwork $\CC_S$ consisting of all nodes not on $S$ and all arrows in $\GG$ connecting those nodes.
\item \label{path_comp} Nodes $\rho_i,\rho_j$ in $\AA_\GG$ are {\em path equivalent} if there exists paths in $\AA_\GG$ from $\rho_i$ to $\rho_j$ and from $\rho_j$ to $\rho_i$.  An {\em appendage path component} (or an {\em appendage strongly connected component}) is a path equivalence class in $\AA_\GG$.
\item \label{cycle_def} A {\em cycle} is a path whose first and last nodes are identical.
\item  \label{no_cycle_def} Let $\KK\subset \AA_\GG$ be an appendage path component.  
A cycle is {\em $\KK$-bad} if it contains both nodes in $\KK$ and simple nodes, but it does not contain super-simple nodes.
$\KK$ satisfies the {\em no-cycle condition} if there are no $\KK$-bad cycles with nodes in $\KK$.
\end{enumerate}
\end{definition}

Note that the definition of the {\em no-cycle condition in Definition~\ref{D:appendage_terms}\ref{no_cycle_def} is equivalent to: Let $\KK\subset \AA_\GG$ be an appendage path component.  We say that $\KK$ satisfies the {\em no cycle condition} if for every $\iota o$-simple path $S$, nodes in $\KK$ do not form a cycle with nodes in $\CC_S\setminus\KK$.}

In Section~\ref{S:appendage_blocks} we prove that every subnetwork $\KK_\eta$ of $\GG$ associated with an irreducible appendage homeostasis block $B_\eta$ consists of appendage nodes (Lemma~\ref{T:KK_appendage}), is an appendage path component of $\AA_\GG$, and satisfies the no cycle condition (Theorem~\ref{thm:appendage}). The converse is proved in Theorem~\ref{T:appendage_blocks}.

\begin{remark} \rm \label{R:AiBi}
Nodes in the appendage subnetwork $\AA_\GG$ can be written uniquely as the disjoint union 
\begin{equation} \label{e:AiBi}
\AA_\GG = (\AA_1\dot{\cup}\cdots\dot{\cup}\AA_s)\; \dot{\cup}\;(\BB_1\dot{\cup}\cdots\dot{\cup}\BB_t)
\end{equation}
where each $\AA_i$ is an appendage path component that satisfies the no cycle condition and each $\BB_i$ is an appendage path component that violates the no cycle condition. Moreover, each $\AA_i$ (resp. $\BB_i$) can be viewed as a subnetwork of $\AA_\GG$ by including the arrows in $\AA_\GG$ that connect nodes in $\AA_i$ (resp. $\BB_i$).  We call $\AA_i$ a {\em no cycle appendage path component} and $\BB_i$ a {\em cycle appendage path component}.
\end{remark}

\subsubsection{A $12$-node example illustrating combinatorial terms}
\label{SSS:example}

\begin{figure}[!ht]
\centering
\includegraphics[width=.75\linewidth]{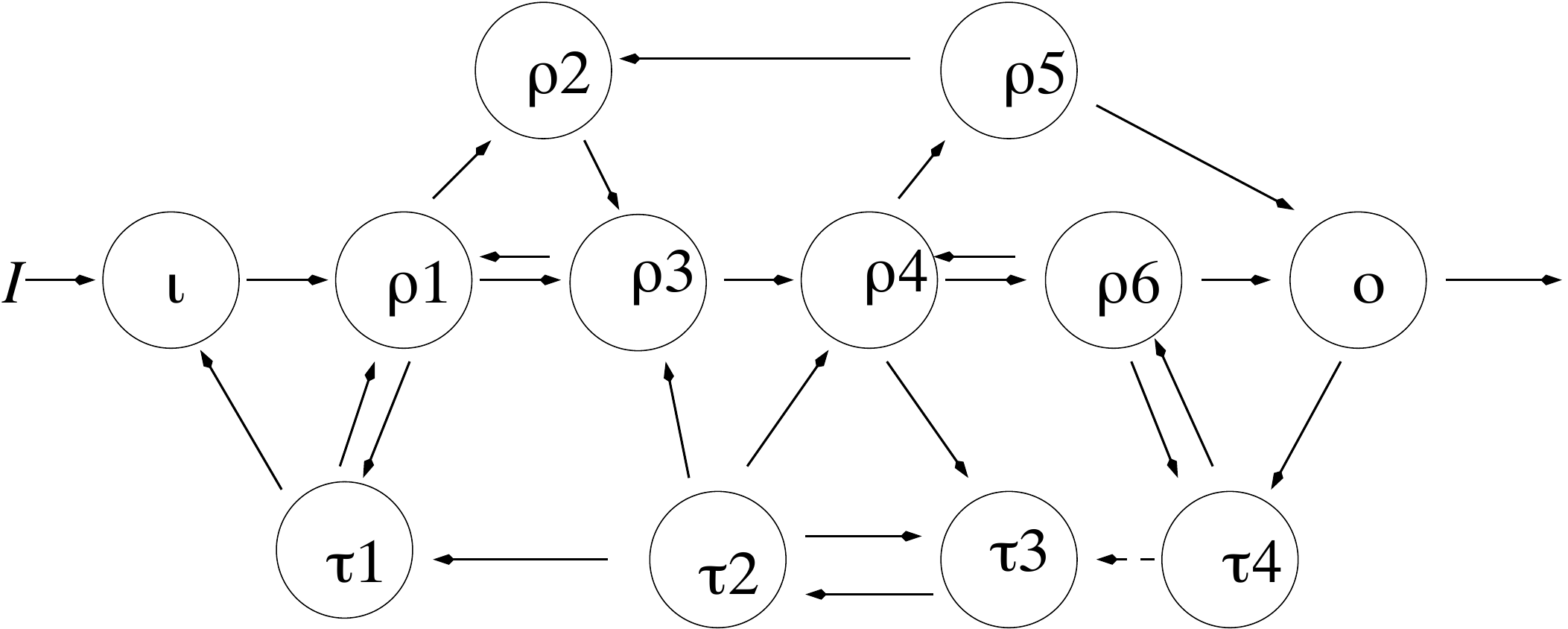}
\caption{{\bf The 12-node example.} \label{F:homeostasis-example-comp}} 
\end{figure}

The 12 nodes consist of the input node $\iota$, the output node $o$, six simple nodes $\rho_1,\ldots,\rho_6$, and four appendage nodes $\tau_1,\ldots,\tau_4$.  See Definition~\ref{D:simple_complementary}\ref{simp_app}. The network has four $\iota o$-simple paths (see Definition~\ref{D:simple_complementary}\ref{iota_o} and Table~\ref{T:iota_o_simple}) and five super-simple nodes $\iota,\rho_1,\rho_3,\rho_4,o$. See Definition~\ref{D:simple_complementary}\ref{super-simple}. 

\begin{table}[!h]
\begin{center}
\caption{{\bf Four $\iota o$-simple paths for network in Figure~\ref{F:homeostasis-example-comp}} \label{T:iota_o_simple}}
\begin{tabular}{|c|c|} \hline
Simple path ($S$) & Complementary subnetwork ($\CC_{S}$) \\ \thickhline
$\iota\to\rho_1\to\rho_2\to\rho_3\to\rho_4\to \rho_5\to o$ &
$\{\tau_1,\,\tau_2, \tau_3,\,\tau_4, \rho_6\}$ \\ \hline
$\iota \to \rho_1\to\rho_2\to\rho_3\to\rho_4\to \rho_6\to o$ &
$\{\tau_1, \tau_2,\tau_3,\,\tau_4,\,\rho_5\}$ \\ \hline
$\iota \to\rho_1\to\rho_3 \to\rho_4\to \rho_5\to o$ &
$\{\tau_1,\,\rho_2,\tau_2, \tau_3,\,\tau_4, \rho_6\}$ \\ \hline
$\iota \to \rho_1\to\rho_3 \to\rho_4\to \rho_6\to o$ &
$\{\tau_1,\tau_2,\tau_3, \,\tau_4,\,\rho_5,\rho_2\}$ \\ \hline
\end{tabular}
\end{center}
\end{table}

The appendage subnetwork is $\tau_1\leftarrow\tau_2\leftrightarrow\tau_3\leftarrow\tau_4$. See Definition~\ref{D:appendage_terms}\ref{appen}. The complementary subnetworks (see Definition~\ref{D:appendage_terms}\ref{compl}) are listed in Table~\ref{T:iota_o_simple}. There are three path components in the appendage subnetwork, namely, $\tau_1$, $\tau_2\leftrightarrow\tau_3$, and $\tau_4$.  See Definition~\ref{D:appendage_terms}\ref{path_comp}. 

An example of a cycle containing both appendage and simple nodes is $\tau_4\to\rho_6\to o\to\tau_4$.  See Definition~\ref{D:appendage_terms}\ref{cycle_def}.  Note that this cycle is $\tau_4$-bad. The two path components $\tau_1$ and $\tau_2\leftrightarrow\tau_3$ satisfy the no-cycle condition given in Definition~\ref{D:appendage_terms}\ref{no_cycle_def} since there are no $\KK$-bad cycles for $\KK=\{\tau_1\}$ or $\KK = \{\tau_2\leftrightarrow\tau_3\}$.

\subsubsection{Properties of structural homeostasis}
\label{ss:prop-structural}

Corollary \ref{C:adjacent} shows that if $B_\eta$ corresponds to an irreducible structural block, then $\KK_\eta$ has two adjacent super-simple nodes (Proposition~\ref{T:structural_supersimple_two}) and these super-simple nodes are the input node $\ell$ and the output node $j$ in $\KK_\eta$.  In addition, it follows from the standard homeostasis form (Theorem~\ref{lem:associated_network}) that the network $\KK_\eta$ contains no backward arrows. That is, no arrows of $\KK_\eta$ go into the input node $\ell$ nor out of the output node $j$.

We use the properties of structural homeostasis to construct all structural homeostasis subnetworks $\KK_\eta$ up to core equivalence. First, we introduce the following terminology. 

\begin{definition} \rm \label{D:structure_net}
The {\em structural subnetwork} $\sS_\GG$ of $\GG$ is the subnetwork whose nodes are either simple or in a cycle appendage path component $\BB_i$ (see Remark~\ref{R:AiBi}) and whose arrows are arrows in $\GG$ that connect nodes in $\sS_\GG$.
\end{definition}

Lemma \ref{L:H'block} implies that all structural homeostasis subnetworks are contained in $\sS_\GG$, which is an input-output network. That is,  $\GG$ and $\sS_\GG$ have the same simple, super-simple, input, and output  nodes.  Lemma~\ref{L:supersimpleb} shows that every non-super-simple simple node lies between two adjacent super-simple nodes.  Using this fact, we can define a subnetwork $\LL$ of $\sS_\GG$ for every pair of adjacent super-simple nodes.

\begin{definition}\rm \label{D:supsimpnet}
Let $\rho_1, \rho_2$ be adjacent super-simple nodes.
\begin{enumerate}[label=(\alph*)]
\item A simple node $\rho$ is {\em between} $\rho_1$ and $\rho_2$ if there exists an $\iota o$-simple path that includes $\rho_1$ to $\rho$  to $\rho_2$ in that order. 
\item The {\em super-simple subnetwork}, denoted $\LL(\rho_1,\rho_2)$, is the subnetwork whose nodes are simple nodes between $\rho_1$ and $\rho_2$ and whose arrows are arrows of $\GG$ connecting nodes in $\LL(\rho_1,\rho_2)$.
\end{enumerate}
\end{definition}

It follows that all $\LL(\rho_1,\rho_2)$ are contained in $\sS_\GG$. By Lemma \ref{L:structural} (d), each appendage node in $\sS_\GG$ connects to exactly one $\LL$. This lets us expand a super-simple subnetwork $\LL\subset \sS_\GG$ to a super-simple structural subnetwork $\LL'\subset \sS_\GG$ as follows.

\begin{definition} \rm \label{def:L'}
Let $\rho_1$ and $\rho_2$ be adjacent super-simple nodes in $\GG$.  The \textit{super-simple structural subnetwork} $\LL'(\rho_1,\rho_2)$ is the input-output subnetwork consisting of nodes in $\LL(\rho_1,\rho_2)\cup \BB$ where $\BB$ consists of all appendage nodes that form cycles with nodes in $\LL(\rho_1,\rho_2)$; that is, all cycle appendage path components that connect to $\LL(\rho_1,\rho_2)$.   Arrows of $\LL'(\rho_1,\rho_2)$ are arrows of $\GG$ that connect nodes in $\LL'(\rho_1,\rho_2)$.  Note that $\rho_1$ is the input node and $\rho_2$ is the output node of $\LL'(\rho_1,\rho_2)$. 
\end{definition}

In Section~\ref{S:combinatorial_blocks} we prove that every subnetwork $\KK_\eta$ of $\GG$ associated with an irreducible structural homeostasis block $B_\eta$ is a super-simple structural subnetwork (Theorem~\ref{L:KetaL'}).  The converse is proved in Theorem~\ref{T:structural_supersimple}.

\subsection{Algorithm for enumerating homeostasis subnetworks} 
\label{sec:algorithm}

Before finding homeostasis in a model (say a biochemical model) one must choose input and output nodes (a modeling assumption) and reduce the resulting input-output network to a core network. Then we apply the following algorithm.

\paragraph{Step 0:} We begin by identifying the $\iota o$-simple paths in the core network and thus identifying the simple, super-simple, and appendage nodes. We also identify the appendage subnetwork $\AA_\GG$.

\paragraph{Step 1:} Determining the appendage homeostasis subnetworks from $\AA_\GG$.  Let 
\begin{equation} \label{e:AHB}
\AA_1\;,\ldots,\;\AA_s
\end{equation}
be the no cycle appendage path components of $\AA_\GG$ (see Remark~\ref{R:AiBi}). Theorem \ref{T:appendage_blocks} implies that these appendage path components are the subnetworks $\KK_\eta$ that correspond to appendage homeostasis blocks. In addition, there are $s$ independent defining conditions for appendage homeostasis given by the determinants of the Jacobian matrices $\det(J_{\AA_i}) = 0$ for $i = 1,\ldots, s$.

\paragraph{Step 2:} Determining the structural homeostasis subnetworks from $\sS_\GG$ (see Definition \ref{D:structure_net}). 
Let $\iota = \rho_1 > \rho_2 > \ldots > \rho_{q+1} = o$ be the super-simple nodes in $\sS_\GG$ in downstream order. Theorems \ref{L:KetaL'} and \ref{T:structural_supersimple} imply that up to core equivalence the $q$ super-simple structural subnetworks 
\begin{equation} \label{e:SHB}
\LL'(\iota,\rho_2),\; \LL'(\rho_2,\rho_3) \;,\ldots,\; \LL'(\rho_{q-1},\rho_q),\; \LL'(\rho_q,o)
\end{equation}
are the subnetworks $\KK_\eta$ that correspond to structural homeostasis blocks. In addition, there are  $q$ defining conditions for structural homeostasis blocks given by the determinants of the homeostasis matrices of the input-output networks: $\det\big(H(\LL'(\rho_i,\rho_{i+1}))\big) = 0$ for $i = 1,\ldots, q$.

Therefore, the $m = s + q$ subnetworks listed in \eqref{e:AHB} and \eqref{e:SHB} enumerate the appendage and structural homeostasis subnetworks in $\GG$.

\subsection{Low degree homeostasis types}
\label{S:LDHT}

Here we specialize our discussion to the low degree cases $k=1$ and $k=2$ where we determine all such homeostasis types (see Figure~\ref{F:3node}).  The first three types appear in three node networks and are given in the classification in \cite{GW19}. The fourth type has degree $k = 2$, but can only appear in networks with at least four nodes (see Figure~\ref{E:2A}).  We note that the lowest degree of a structural homeostasis block with an appendage node (that is, $\LL'\supsetneq \LL$) is $k = 3$ (see Figure~\ref{E:SWA}).

\begin{figure}[!ht]
\centering
\begin{subfigure}{1\textwidth}
\centering
\includegraphics[width=0.4\textwidth]{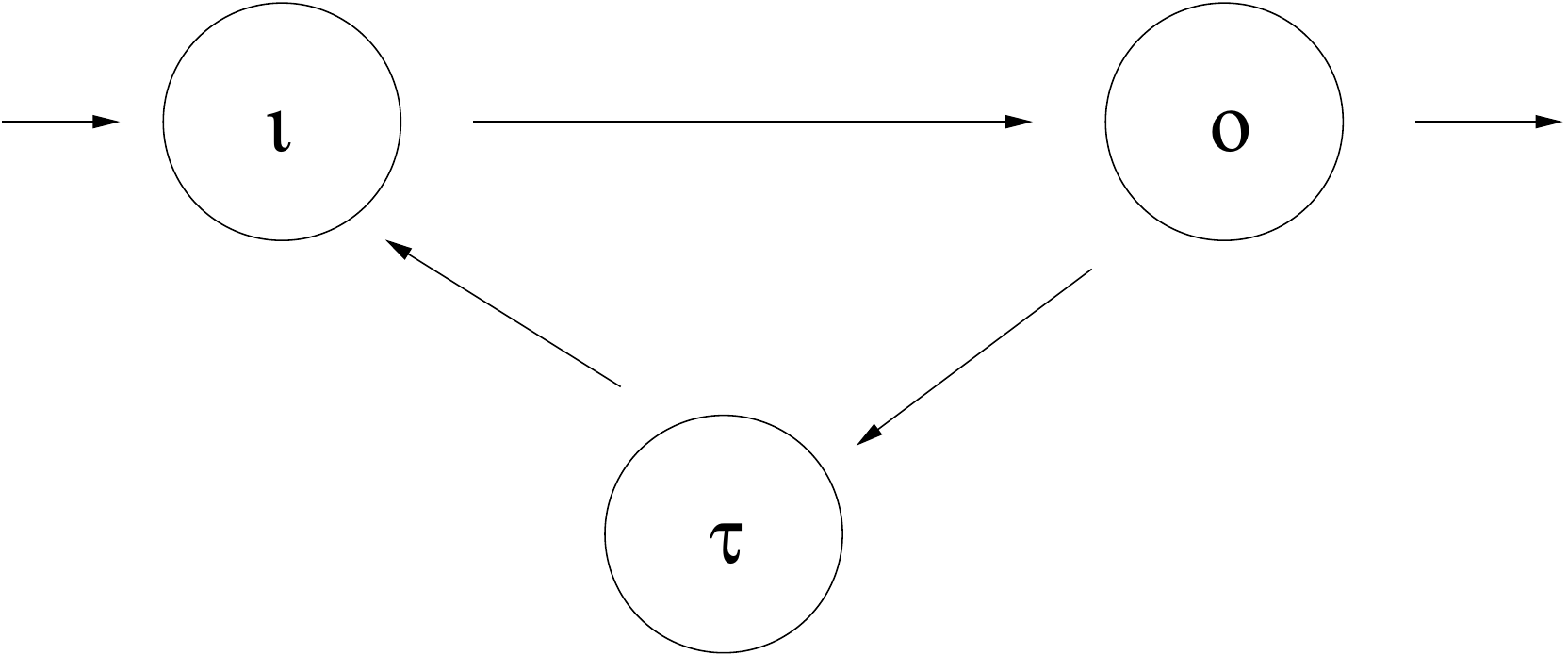}
\caption{\centering negative feedback loop \label{E:ND}}
\end{subfigure} \\[2mm]
\begin{subfigure}{0.4\textwidth}
\centering
\includegraphics[width=1.0\textwidth]{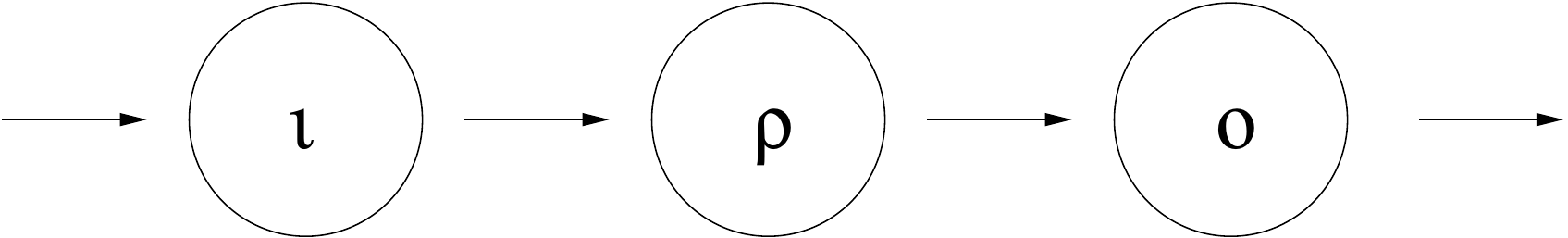}
\caption{kinetic motif \label{E:haldane}}
\end{subfigure} \qquad
\begin{subfigure}{0.4\textwidth}
\centering
\includegraphics[width=1.0\textwidth]{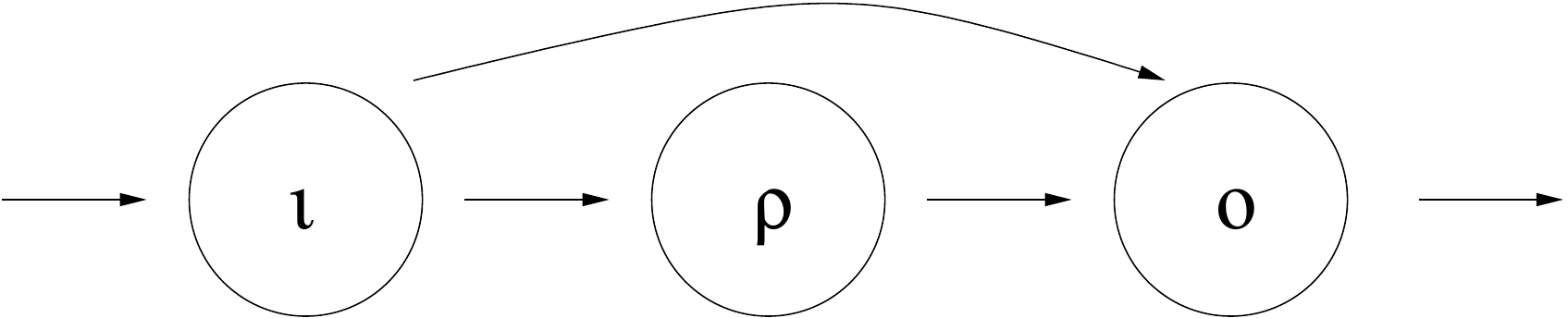}
\caption{(incoherent) feedforward loop \\[0mm] \rule{0mm}{0mm} \label{E:FFL}}
\end{subfigure}
\caption{{\bf Homeostasis types in three-node networks.}
(a) three-node core network exhibiting Haldane ($\iota\to o$) and null-degradation ($\tau$) homeostasis.
(b) three-node core network exhibiting Haldane $(\iota\to\rho$; $\rho\to o)$ homeostasis.
(c) three-node core network exhibiting degree 2 structural homeostasis.
According to \cite{GW19} this is a list of all three-node core networks up to core equivalence. \label{F:3node}}
\end{figure}

\subsubsection*{Degree 1 no cycle appendage homeostasis {\em (null-degradation)}} 
This corresponds to the vanishing of a degree $1$ irreducible factor of the form $(f_{\tau,x_\tau})$.
The single node $\tau$ is a no cycle appendage path component.  Apply Step 1 in the 
algorithm in Sections~\ref{SS:topological}-\ref{sec:algorithm} to Figure~\ref{E:ND}.

\subsubsection*{Degree 1 structural homeostasis {\em (Haldane)}} 
This corresponds to the vanishing of a degree $1$ irreducible factor of the form $(f_{j, x_\ell})$ whose associated subnetwork is $\LL'(\ell,j)$ of the form $\ell\to j$.  Apply Step 2 in the algorithm in Sections~\ref{SS:topological}-\ref{sec:algorithm} to Figure~\ref{E:haldane}.

\subsubsection*{Degree 2 structural homeostasis {\em (feedforward loop)}}
This corresponds to a three-node input-output subnetwork $\LL'(\ell,j)$ with input node $\ell$, output node $j$, and regulatory node $\rho$, where $\ell$ and $j$ are adjacent super-simple and $\rho$ is a simple node between the two super-simple nodes.  It follows that both paths $\ell\to\rho\to j$ and $\ell\to j$ are in $\LL' = \LL$. Hence, $\LL'$ is a feedforward loop motif. Homeostasis occurs when
\[
\det\big(H(\LL'(\ell,j))\big) = f_{\rho,x_{\ell}} f_{j,x_{\rho}} - f_{j,x_{\ell}}f_{\rho,x_{\rho}} = 0
\]
Apply Step 2 in the algorithm in Sections \ref{SS:topological}-\ref{sec:algorithm} to Figure~\ref{E:FFL}.

\subsubsection*{Degree 2 no cycle appendage homeostasis} 
This is associated with a two-node appendage path component 
$\AA = \{\tau_1, \tau_2\}$ with arrows $\tau_1 \to \tau_2$ and $\tau_2 \to \tau_1$. Homeostasis occurs when 
\[
\det\big(J(\AA)\big) = f_{\tau_1, x_{\tau_1}} f_{\tau_2, x_{\tau_2}} - f_{\tau_1, x_{\tau_2}}  f_{\tau_2, x_{\tau_1}} = 0
\]
Apply Step 1 in the algorithm in Sections~\ref{SS:topological}-\ref{sec:algorithm} to Figure~\ref{E:2A}.

\begin{figure}[!ht]
\centering
\begin{subfigure}{0.4\textwidth}
\centering
\includegraphics[width=1.0\textwidth]{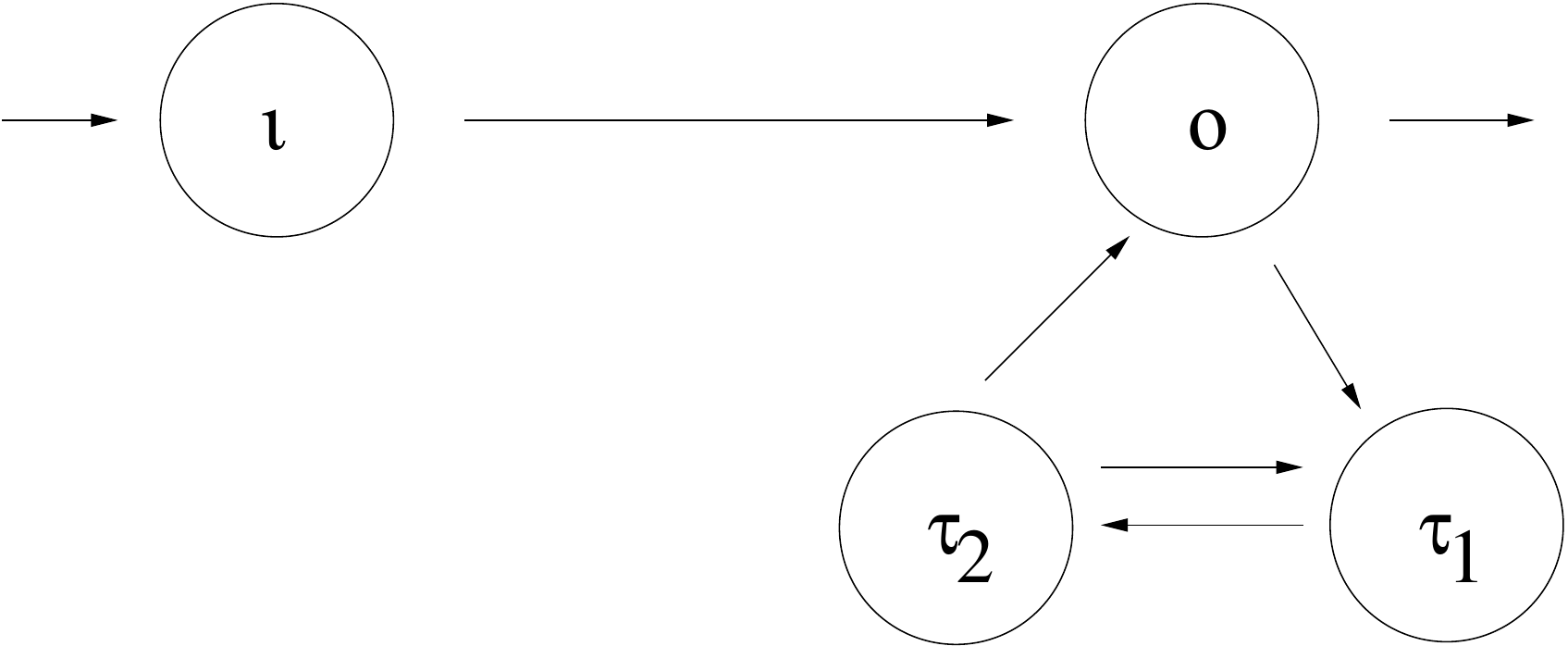}
\caption{Haldane ($\iota\to o$); Degree 2 no cycle appendage ($\tau_2\Leftrightarrow\tau_1$)  \label{E:2A}}
\end{subfigure} \qquad
\begin{subfigure}{0.4\textwidth}
\centering
\includegraphics[width=1.0\textwidth]{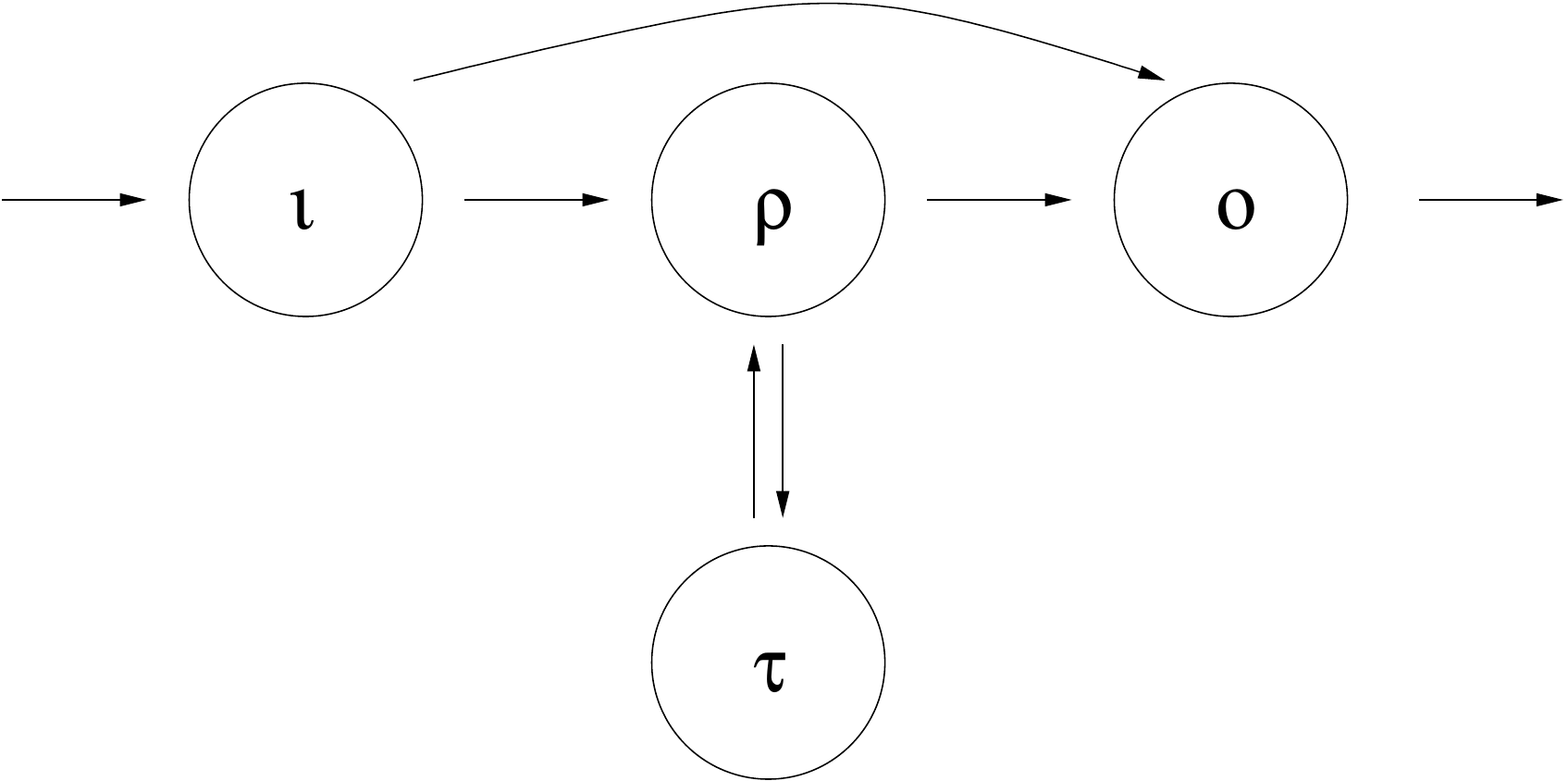}
\caption{Structural with appendage node \\[1mm] \rule{0cm}{0mm} \label{E:SWA}}
\end{subfigure}
\caption{{\bf Homeostasis types in four-node networks.}
	(a) smallest network exhibiting degree 2 no cycle appendage homeostasis. 
	(b) smallest network exhibiting appendage node in structural homeostasis. 
        \label{F:4node}}
\end{figure}

\subsection{12-node artificial network example}
\label{sec:12-node}

We now return to the artificial example in Subsection~\ref{SSS:example} to illustrate the algorithm for enumerating homeostasis blocks.  The network shown in Figure~\ref{F:homeostasis-example-comp} has input node ($\iota$), output node ($o$), six simple nodes ($\rho_1,\ldots,\rho_6$), and four appendage nodes ($\tau_1,\tau_2,\tau_3,\tau_4$).   The input-output network $\GG$ in Figure~\ref{F:homeostasis-example-comp} has four $\iota o$-simple paths (see Table~\ref{T:iota_o_simple}) and six homeostasis subnetworks that can be found in two steps using the algorithm in Section~\ref{sec:algorithm}.

\paragraph{Step 1:} $\GG$ has three appendage path components ($\AA_1 = \{\tau_1\}$, $\AA_2 = \{\tau_2,\tau_3\}$, $\BB_1 = \{\tau_4\}$) in $\AA_\GG$. Among these, $\AA_1$ and $\AA_2$ satisfy the no cycle condition, whereas $\BB_1$ does not since $\tau_4$ forms a cycle with simple node $\rho_6$.  Hence, there are two appendage homeostasis subnetworks given by $\AA_1$ and $\AA_2$. 

\paragraph{Step 2:} $\GG$ has five super-simple nodes (in downstream order, they are $\iota, \rho_1, \rho_3, \rho_4, o$). The five super-simple nodes lead to four structural homeostasis subnetworks given (up to core equivalence) by $\LL'(\iota,\rho_1)$, $\LL'(\rho_1,\rho_3)$,  $\LL'(\rho_3,\rho_4)$, $\LL'(\rho_4,o)$.

Table \ref{T:HT12} lists the six homeostasis subnetworks in $\GG$, which give the six irreducible factors of $\det(H)$ where $H$ is the $11\times 11$ homeostasis matrix of $\GG$. The factorization of the degree $11$ homogeneous polynomial $\det(H)$ is given by
\[
\det(\PH)  = \pm  f_{\tau_1,x_{\tau_1}}
\det(B_2) f_{\rho_1,x_{\iota}} 
\det(B_4) f_{\rho_4,x_{\rho_3}} 
\det(B_6) 
\]
where 
\begin{equation*} 
\begin{split}
B_2 = & \Matrixc{ f_{\tau_2,x_{\tau_2}}  & f_{\tau_2,x_{\tau_3}} \\ f_{\tau_3,x_{\tau_2}}  & f_{\tau_3,x_{\tau_3}}}
\qquad
B_4 = \Matrixc{ f_{\rho_2,x_{\rho_1}}  & f_{\rho_2,x_{\rho_2}} \\ f_{\rho_3,x_{\rho_1}}  & f_{\rho_3,x_{\rho_2}}} \\[2mm]
& B_6 = \Matrixc{
	f_{\rho_5,x_{\rho_4}} & f_{\rho_5,x_{\rho_5}} & 0 & 0 \\
	f_{\rho_6,x_{\rho_4}}  & 0 &f_{\rho_6,x_{\rho_6}}  &f_{\rho_6,x_{\tau_4}}  \\
	0 & 0 & f_{\tau_4,x_{\rho_6}} & f_{\tau_4,x_{\tau_4}} \\
	0 & f_{o,x_{\rho_5}} & f_{o,x_{\rho_6}} & 0 
}
\end{split}      
\end{equation*}

\begin{table}[!ht]
\begin{center}
\caption{{\bf Homeostasis subnetworks in Figure~\ref{F:homeostasis-example-comp}.} \label{T:HT12}}
\begin{tabular}{|c|c|l|}
\hline
Class & Homeostasis subnetworks & Name \\ \thickhline
appendage & $\AA_1 = \{\tau_1\}$ &  null-degradation\\ \hline
appendage & $\AA_2 = \{\tau_1\Leftrightarrow \tau_2\}$ & no cycle appendage \\ \hline 
structural & $\LL'(\iota,\rho_1) = \{\iota\to\rho_1\}$& Haldane \\ \hline
structural & $\LL'(\rho_1,\rho_3) = \{\rho_1, \rho_2\, \rho_3 \}$  & feedforward loop \\ \hline
structural & $\LL'(\rho_3,\rho_4) = \{\rho_3\to \rho_4\}$ & Haldane \\ \hline
structural  & $\LL'(\rho_4,o) = \{\rho_4, \rho_5, \rho_6, \tau_4, o\}$ & degree 4 structural \\ \hline
\end{tabular}
\end{center}
\end{table}

\subsection{A biological example}
\label{sec:bionetwork}

The first step in applying our algorithm for finding infinitesimal homeostasis to a biological input-output network is to convert the network to a mathematical input-output network and then, if necessary, reducing the math network to a core network. See \cite{GW19} for the application of our methods to three-node biochemical systems. As another application, we consider the five-node {\em E. coli} chemotaxis network studied in Ma et al. \cite{MTELT09} (see Figure~\ref{F:biochemical-example}, left panel). In this example the input node is the {\em Receptor complex} and the output node is the response regulator {\em CheY}. 

The corresponding $5$-node mathematical network is shown in the middle panel of Figure~\ref{F:biochemical-example}. This network can be reduced to a $4$-node core network (Figure~\ref{F:biochemical-example}, right panel)  by removing the node $\tau_3$, which is not downstream from the input node, and the arrow $\tau_3 \to \tau_2$. The remaining nodes are both downstream from $\iota$ and upstream from $o$ and hence form a core network $\GG_c$ (see Definition \ref{D:core}). 

The core network $\GG_c$ has one $\iota o$-simple path $\iota\to o$ with $\iota$ and $o$ being the super-simple nodes. The appendage subnetwork $\AA_{\GG_c}$ consists of two appendage nodes $\tau_1$ and $\tau_2$. We enumerate the homeostasis blocks in two steps:
\begin{enumerate}
	\item $\AA_{\GG_c}$ has two appendage path components ($\AA_1 = \{ \tau_1\}$, $\AA_2=\{\tau_2\}$) and each component satisfies the no-cycle condition. Hence, there are two appendage homeostasis subnetworks given by $\AA_1$ and $\AA_2$.
	\item The two super-simple nodes lead to only one structural homeostasis subnetwork given by $\LL(\iota, o)=\LL'(\iota, o)$, which is the $\iota o$-simple path. 
\end{enumerate} 
These three homeostasis subnetworks give degree 1 factors of $\det(H)$ where $H$ is the $3\times 3$ homeostasis matrix of $\GG_c$. It follows that two types of homeostasis can occur in this {\em E. coli} network: null-degradation homeostasis occurs when $f_{\tau_1,x_{\tau_1}}$ or $f_{\tau_2,x_{\tau_2}}$ vanish (that is, the linearized
internal dynamics of {\em Methylation level} or {\em CheB} is zero) and Haldane homeostasis occurs when $f_{o,x_\iota} = 0$ (that is, the coupling from the input node {\em Receptor complex} to the output node {\em CheY} is $0$).

	\begin{figure}[!ht]
\centering
	\includegraphics[width=.33\linewidth]{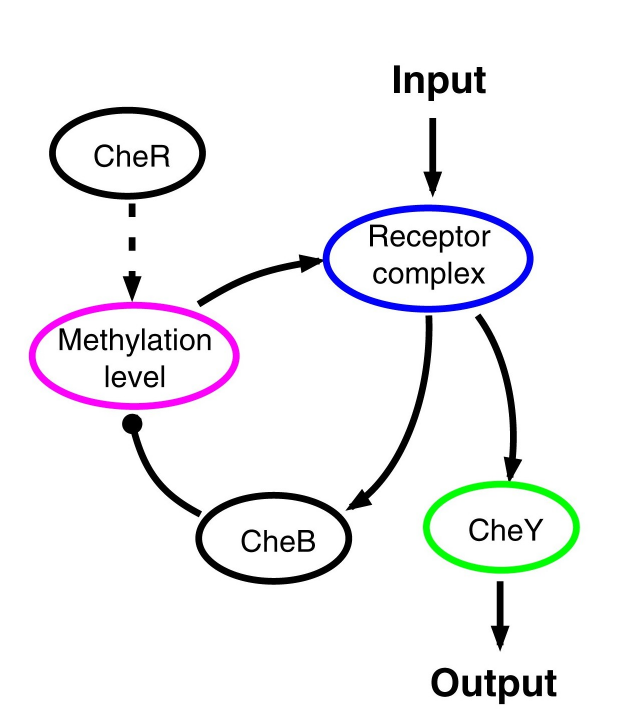}\hspace{.2in}
	\includegraphics[width=.26\linewidth]{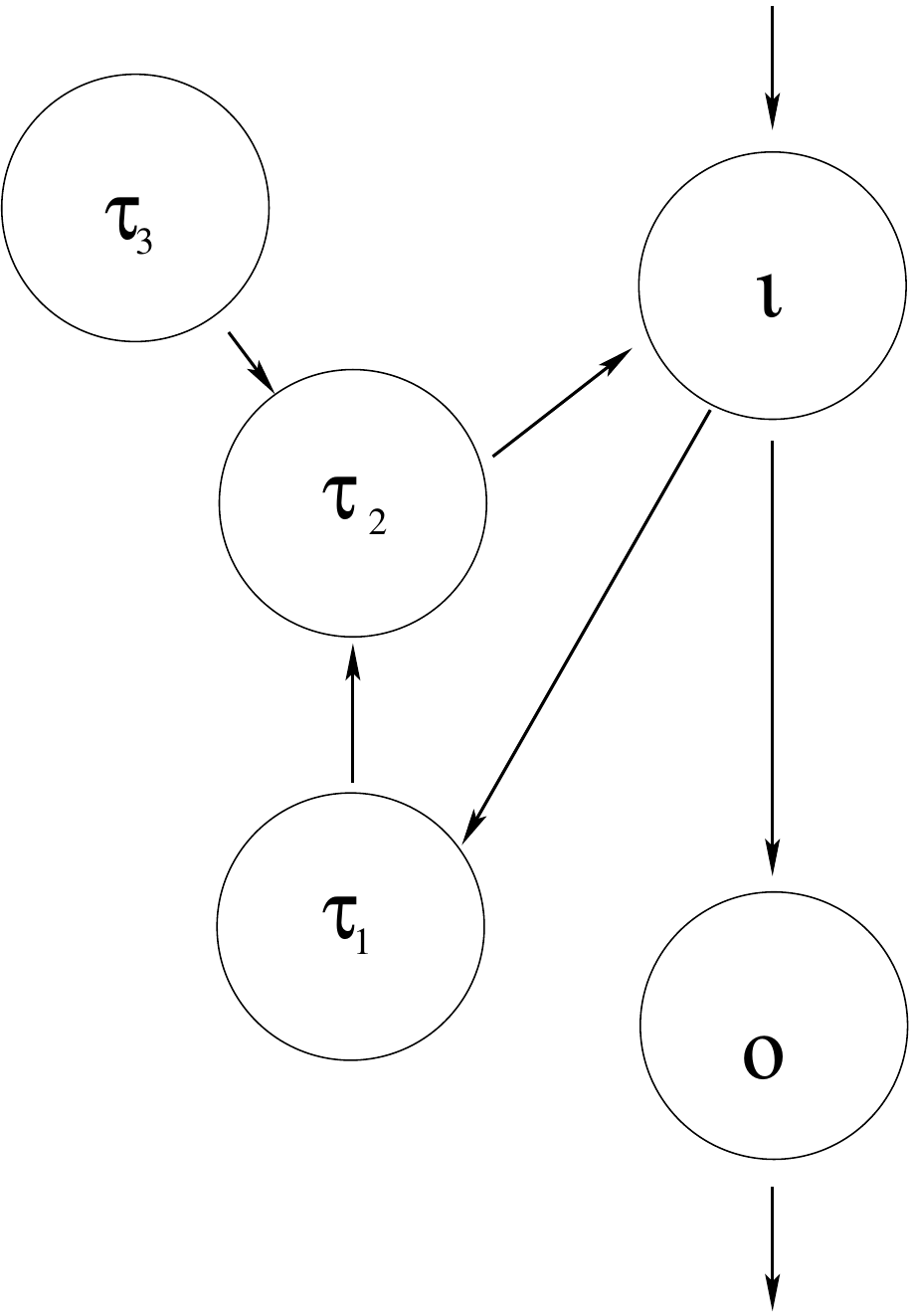}\hspace{.4in}
	\includegraphics[width=.2\linewidth]{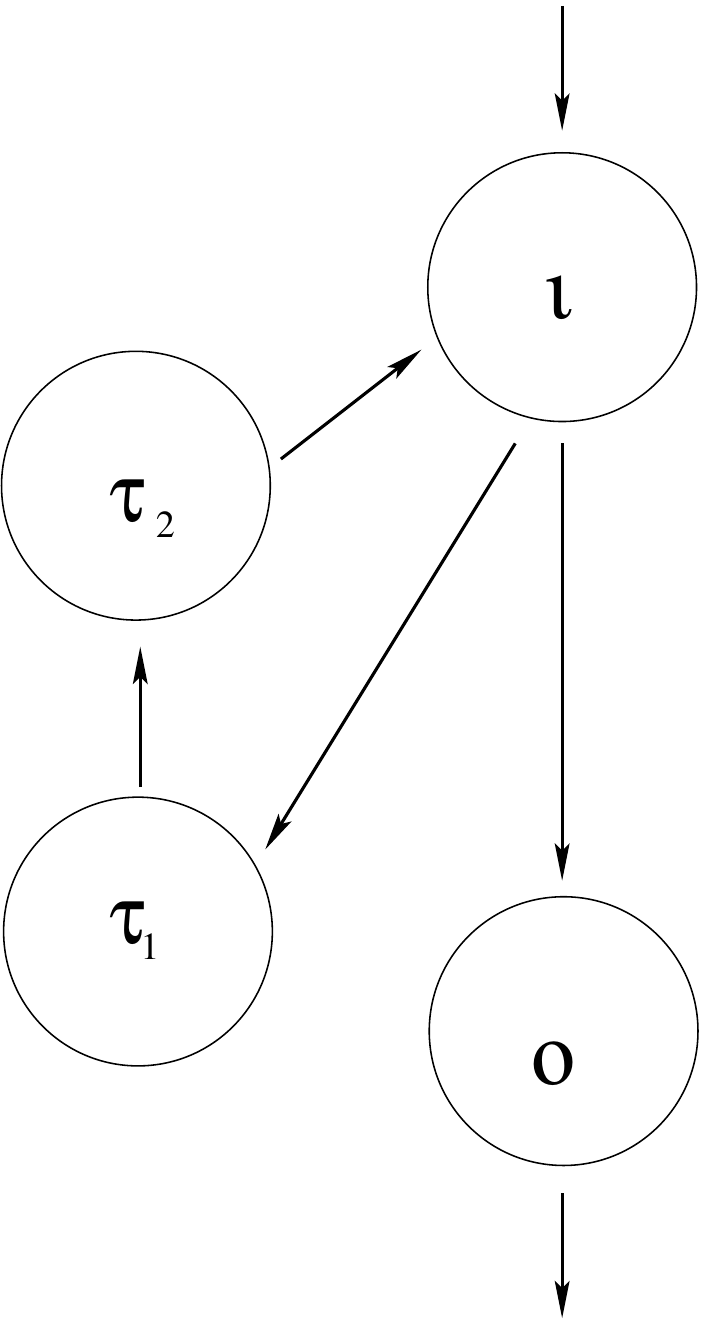}
		\caption{{\bf A biological network example.} (Left) The {\em E.~coli} chemotaxis network from Ma et al. \cite{MTELT09}. (Middle) The mathematical input-output network $\GG$ corresponding to the {\em E. coli} network. (Right) The core subnetwork $\GG_c$ of $\GG$.  \label{F:biochemical-example}} 
	\end{figure}

The analysis in Ma et al.~\cite{MTELT09} proceeds along a slightly different tack.  There the authors simplify the network to $3$-nodes by combining $\tau_1$ and $\tau_2$.  This changes the outcome whereby null-degradation can only occur in one way in their formulation and it is through the simultaneous occurrence of null-degradation in $\tau_1$ and $\tau_2$.

\subsection{Remark on chairs}

Nijhout, Best and Reed~\cite{NBR14} observed that homeostasis often appears in models in the form of a {\em chair}.  That is, as $\II$ varies, the input-output function $x_0(\II)$ has the piecewise linear description: increases linearly, is approximately constant, and then increases linearly again. Golubitsky and Stewart~\cite{GS17} observed that it follows from elementary catastrophe theory that smooth {\em chair singularities} have the normal form $\II^3$, defining conditions 
\[
x_o'(\II_0) = x_o''(\II_0) = 0
\]
and nondegeneracy condition $x_0'''(\II_0) \neq 0$.  Moreover, \cite{GW19} noted that if $x_0'(\II) = g(\II) h(\II)$, where $g(\II_0) \neq 0$, then the defining conditions for a chair singularity are equivalent to 
\begin{equation} \label{eq:chair_conditions}
h(\II_0) = h'(\II_0) = 0 \qquad\text{and}\qquad h''(\II_0) \neq 0
\end{equation}
It follows from Lemma~\ref{L:det} and \eqref{eq:FK_factors} that a chair singularity for infinitesimal homeostasis is of type $B_\eta$ if $h_\eta(\II)$ satisfies~\eqref{eq:chair_conditions} at $\II=\II_0$.

\subsection{Remarks on the interpretation of structural and appendage homeostasis}
\label{SS:FFFB}

We claim that structural homeostasis balances fluxes along simple paths from one super-simple node to the next.  This is defined by $\det(H(\LL'(\rho_j,\rho_{j+1})))$ and gives a feedforward interpretation to structural homeostasis.  See Definition~\ref{def:L'}. The balancing can be weighted by appendage nodes that appear in $\LL'(\rho_j,\rho_{j+1})\setminus\LL(\rho_j,\rho_{j+1})$.  

On the other hand, appendage homeostasis balances fluxes in a given appendage path component. Each of these  path components has input from a super-simple node and output to an upstream super-simple node, and this gives a feedback interpretation.

\subsection{Structure of the paper}
\label{SS:C}

In Section~\ref{sec:core} we show that infinitesimal homeostasis in the original system \eqref{eq:ad_coord} occurs in a network if and only if infinitesimal homeostasis occurs in the core network for the associated frozen system.  See Theorem~\ref{T:core}.  We discuss when backward arrows can be ignored when computing the determinant of the homeostasis matrix and the limitations of this procedure. See Corollary~\ref{P:core_equivalent}.  In Section~\ref{S:determinant} we relate the form of the summands of the determinant of the homeostasis matrix $H$ with the form of $\iota o$-simple paths of the input-output network.  See Theorem~\ref{L:summand_form}.  In Theorem~\ref{T:core_equivalent} we also discuss `core equivalence'.  In Section~\ref{S:classification2} we prove the theorems about the appendage and structural classes of homeostasis.  See Definition~\ref{D:homeostasis_types}, Theorem~\ref{L:self_couplings}, and the normal form Theorem~\ref{lem:associated_network}.  In Section~\ref{S:appendage_blocks} we prove the necessary conditions that appendage homeostasis must satisfy.  See Theorem~\ref{thm:appendage}.  In Section~\ref{S:combinatorial_blocks}, specifically Section~\ref{S:structural_blocks}, we introduce an ordering of super-simple nodes that leads to a combinatorial definition of structural blocks.  See Definition~\ref{D:supsimpnet} and Definition~\ref{def:L'}.  The connection of these blocks with the subnetworks $\KK_\eta$ obtained from the homeostasis matrix is given in Corollary~\ref{cor:H'} and Theorem~\ref{L:KetaL'}.  In Section~\ref{S:CC} we summarize our algorithm for finding infinitesimal homeostasis directly from the input-output network $\GG$.  It also gives a topological classification of the different types of infinitesimal homeostasis that the network $\GG$ can support.

\Section{Core networks}
\label{sec:core}

Let $\GG$ be an input-output network with input node $\iota$, output node $o$, and regulatory nodes $\rho_j$.
We use the notions of upstream and downstream nodes to construct a core subnetwork $\GG_c$ of $\GG$.

The stable equilibrium $(X_0,\II_0)$ of the system of differential equations \eqref{eq:ad_coord} satisfy a system of nonlinear equations~\eqref{eq:F=0}, that can be explicitly written as 
\begin{equation} \label{eq:ad_coord0}
\begin{array}{lcl}
f_\iota(x_\iota,x_\rho, x_o,\II) & = & 0\\
f_\rho(x_\iota,x_\rho, x_o) & = & 0\\
f_\iota(x_\iota,x_\rho, x_o) & = & 0
\end{array}
\end{equation}
We start by partitioning the regulatory nodes $\rho$ into three types: 
\begin{itemize}
\item those nodes $\sigma$ that are both upstream from $o$ and downstream from $\iota$, 
\item those nodes $d$ that are not downstream from $\iota$,  
\item those nodes $u$ that are downstream from $\iota$ and not upstream from $o$.
\end{itemize}
Based on this partition, the system \eqref{eq:ad_coord0} has the form 
\begin{equation} \label{eq:ad_coord2}
\begin{array}{lcl}
f_\iota(x_\iota,x_\sigma,x_u, x_d, x_o,\II) & = & 0\\
f_\sigma(x_\iota,x_\sigma,x_u, x_d, x_o) & = & 0 \\
f_u(x_\iota,x_\sigma,x_u, x_d, x_o) & = & 0 \\
f_d(x_\iota,x_\sigma,x_u, x_d, x_o) & = & 0 \\
f_o(x_\iota,x_\sigma,x_u, x_d, x_o)  & = & 0
\end{array}
\end{equation}
In Lemma~\ref{L:missing_arrows} we make this form more explicit.

\begin{lemma} \label{L:missing_arrows}
The definitions of $\sigma$, $u$, and $d$ nodes imply the admissible system \eqref{eq:ad_coord2} has the form
\begin{equation} \label{eq:ad_coord2A}
\begin{array}{rcl}
\dot{x}_\iota & = & f_\iota(x_\iota,x_\sigma, x_d, x_o,\II) \\
\dot{x}_\sigma & = & f_\sigma(x_\iota,x_\sigma, x_d, x_o) \\
\dot{x}_u & = & f_u(x_\iota,x_\sigma,x_u, x_d, x_o) \\
\dot{x}_d & = & f_d(x_d) \\
\dot{x}_o & = & f_o(x_\iota,x_\sigma, x_d, x_o)
\end{array}
\end{equation}
Specifically, arrows of type $\sigma\to d$, $\iota\to d$, $u\to d$, $o\to d$, $u\to\sigma$,  $u\to o$, $u\to\iota$ do not exist.
\end{lemma}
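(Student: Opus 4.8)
The plan is to prove the claimed form of the admissible system by translating each forbidden arrow type into a statement about partial derivatives vanishing, using assumption (b) (the partial derivative $f_{j,x_\ell}$ can be nonzero only if $\GG$ has an arrow $\ell\to j$) together with the defining upstream/downstream properties of the node types $\sigma$, $u$, $d$. Since an admissible vector field depends on a variable $x_\ell$ only through couplings corresponding to arrows $\ell\to j$, it suffices to show that each listed arrow type is impossible in $\GG$; then the corresponding variable simply does not appear in the relevant node function, yielding \eqref{eq:ad_coord2A}.

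\medskip

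\textbf{Key steps.} First I would record the three defining properties: a $\sigma$-node is both downstream from $\iota$ and upstream from $o$; a $d$-node is \emph{not} downstream from $\iota$; and a $u$-node is downstream from $\iota$ but \emph{not} upstream from $o$. The crucial observation is that ``downstream from $\iota$'' and ``upstream from $o$'' are each closed under taking successors/predecessors along arrows: if there is an arrow $\ell\to j$ and $\ell$ is downstream from $\iota$, then $j$ is downstream from $\iota$; dually, if $j$ is upstream from $o$, then $\ell$ is upstream from $o$. I would then dispatch each forbidden arrow by contradiction with one of these closure facts. For the arrows into $d$ (types $\sigma\to d$, $\iota\to d$, $u\to d$, $o\to d$): each tail ($\sigma$, $\iota$, $u$, or $o$) is downstream from $\iota$ (the input node is trivially downstream from itself, and $o$ is downstream from $\iota$ in a core-reduced analysis), so by downstream-closure the head $d$ would be downstream from $\iota$, contradicting the definition of $d$. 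For $u\to\sigma$ and $u\to o$: the head ($\sigma$ or $o$) is upstream from $o$, so by upstream-closure the tail $u$ would be upstream from $o$, contradicting the definition of $u$. For $u\to\iota$: the head $\iota$ is upstream from $o$ (since every node in the relevant setting reaches $o$, and $\iota$ reaches $o$ through the core), forcing $u$ upstream from $o$, again a contradiction; alternatively $\iota\to$ targets are handled by noting $\iota$ has no in-arrows that create new downstream dependence relevant here. Having eliminated all seven arrow types, assumption (b) gives that $f_d$ depends only on $x_d$, that $f_\iota$, $f_\sigma$, $f_o$ do not depend on $x_u$ (no arrows $u\to\iota,\sigma,o$) nor on the $d$-variables via the eliminated arrows, and that the $d$-variables drop out of the other functions as claimed, producing exactly \eqref{eq:ad_coord2A}.

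\medskip

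\textbf{Main obstacle.} The only genuinely delicate point is the arrow $u\to\iota$ (and more generally the status of $\iota$ and $o$ as ``downstream/upstream'' nodes), since $\iota$ and $o$ are distinguished rather than ordinary regulatory nodes and the upstream/downstream definitions were phrased for pairs of nodes. I expect the cleanest route is to observe that in the present partition we are analyzing the equilibrium system \emph{before} passing to the core, so I must be careful about whether $\iota$ is deemed upstream from $o$. The resolution is that the elimination of $u\to\iota$ does not actually require $\iota$ to be upstream from $o$: rather, it follows because $\iota$ by convention depends on $\II$ and the listed arrow would contribute a coupling $f_{\iota,x_u}$; but $u$ is by definition downstream from $\iota$, so an arrow $u\to\iota$ would place $\iota$ downstream from $u$ and hence (by transitivity, since $u$ is downstream from $\iota$) on a directed cycle through $\iota$, which does not by itself force a contradiction unless we invoke that $u$ being downstream from $\iota$ and $\iota$ downstream from $u$ makes $\iota$ and $u$ mutually reachable --- making $u$ upstream from $o$ exactly when $\iota$ is, the case we must rule out. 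I would therefore state the needed reachability conventions explicitly at the start (treating $\iota$ as downstream from itself and handling $o$ symmetrically) and verify each of the seven cases against those conventions; once the conventions are fixed, every case reduces to a one-line closure-or-transitivity argument, and the structural conclusion about which variables appear in which $f_j$ is immediate from assumption (b).
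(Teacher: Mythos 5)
Your proposal follows essentially the same route as the paper: a case-by-case elimination of the seven forbidden arrow types using the closure of ``downstream from $\iota$'' under successors and of ``upstream from $o$'' under predecessors, followed by assumption~(b) to convert the absence of arrows into the absence of variables. The seven arrow arguments are all correct and match the paper's (including $u\to\iota$, where the paper's one-line argument ``if $u$ connects to $\iota$, then $u$ connects to $o$'' rests on the same implicit convention that $\iota$ is upstream from $o$ which you propose to make explicit; your detour through $\II$-dependence and directed cycles is unnecessary, but you land on the right resolution). One slip in your final assembly: you assert that ``the $d$-variables drop out of the other functions,'' but in \eqref{eq:ad_coord2A} the variable $x_d$ remains an argument of $f_\iota$, $f_\sigma$, $f_u$, and $f_o$ --- arrows \emph{out of} $d$ are permitted; only arrows \emph{into} $d$ (and out of $u$, other than $u\to u$) are forbidden, so the only functional restrictions are that $f_d$ depends solely on $x_d$ and that $x_u$ is absent from $f_\iota$, $f_\sigma$, $f_o$.
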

\begin{proof} 
We list the restrictions on \eqref{eq:ad_coord2} given first by the definition of $d$ and then by the definition of $u$.
	
\begin{description} 
\item[$\sigma\not\to d$]  If a node in $\sigma$ connects to a node in $d$, then there would be a path from $\iota$ to a node in $d$ and that node in $d$ would be downstream from $\iota$, a contradiction.  Therefore, $f_d$ is independent of $x_\sigma$.
\item[$\iota\not\to d$] Similarly, the node $\iota$ cannot connect to a node in $d$, because that node would then be downtream from $\iota$, a contradiction.  Therefore, $f_d$ is independent of $x_\iota$.  
\item[$o\not\to d$] If there is an arrow $o\to d$, then there is a path $\iota\to\sigma\to o\to d$.  Hence there is a path $\iota \to d$ and that is not allowed.  Therefore, $f_d$ is independent of $x_o$.  
\item[$u\not\to d$] Note that nodes in $u$ must be downstream from $\iota$. Hence, there cannot be a connection from $u$ to $d$ or else there would be a connection from $\iota$ to $d$.  Therefore, $f_d$ is independent of $x_u$.
\item[$u\not\to\sigma$] if a node in $u$ connects to a node in $\sigma$, then there would be a path from $u$ to $o$ and $u$ would be upstream from $o$, a contradiction.  Therefore, $f_\sigma$ is independent of $x_u$.  
\item[$u\not\to o$] Suppose a node in $u$ connects to $o$. Then that node is upstream from $o$, a contradiction.  Therefore, $f_o$ is independent of $x_u$.  
\item[$u\not\to\iota$] Finally, if $u$ connects to $\iota$, then $u$ connects to $o$, a contradiction.  Therefore, $f_\iota$ is independent of $x_u$.  
\end{description}
The remaining types of connections can exist in $\GG_c$.
Nodes and arrows that can exist in $\GG_c$ are shown in Figure~\ref{F:isudoGC}. 
\qed
\end{proof}

\begin{figure}[!ht]
\centering
\begin{subfigure}{0.4\textwidth}
\includegraphics[width=1\textwidth]{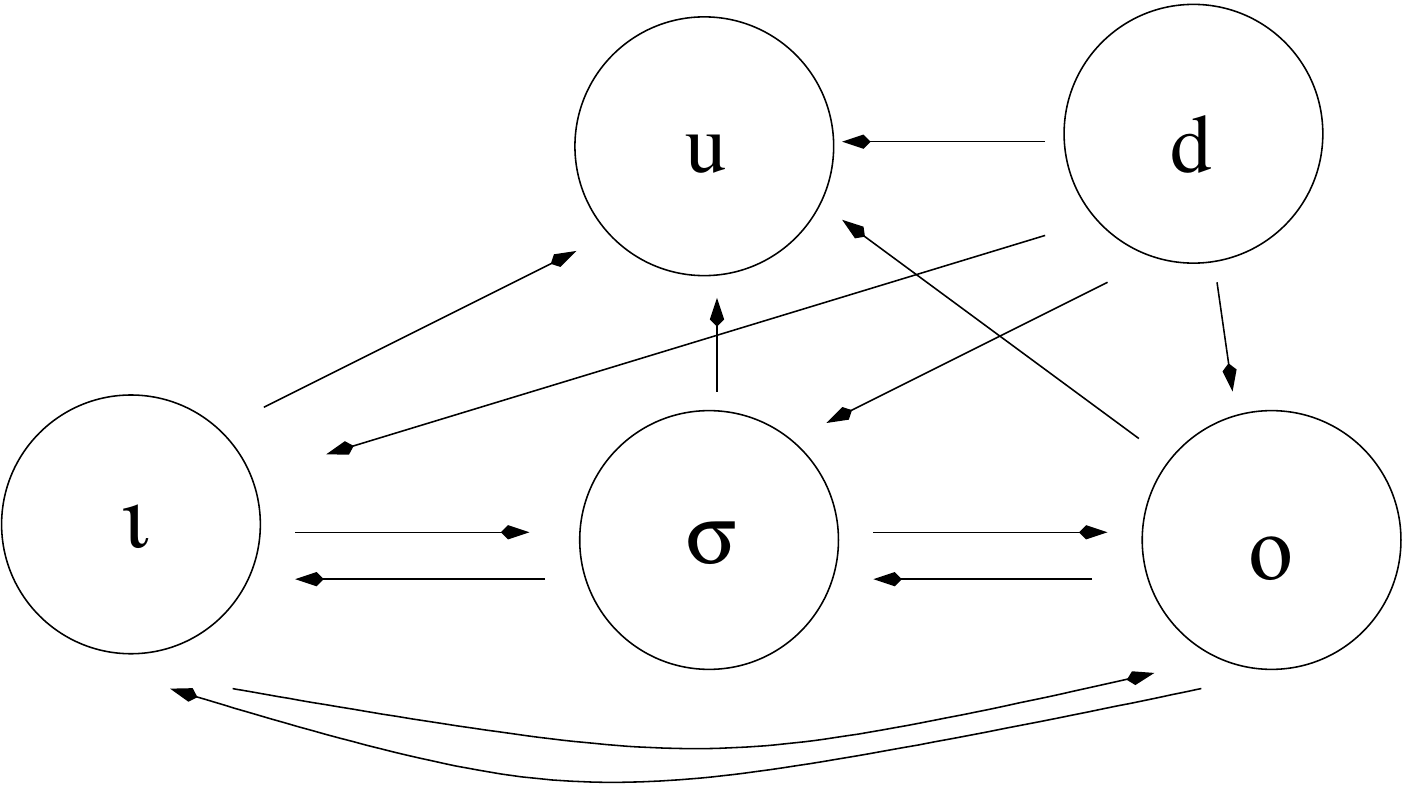}
\caption{Partision of input-output network into node types. \label{F:isudoG}}
\end{subfigure}\qquad 
\begin{subfigure}{0.5\textwidth}
\includegraphics[width=1\textwidth]{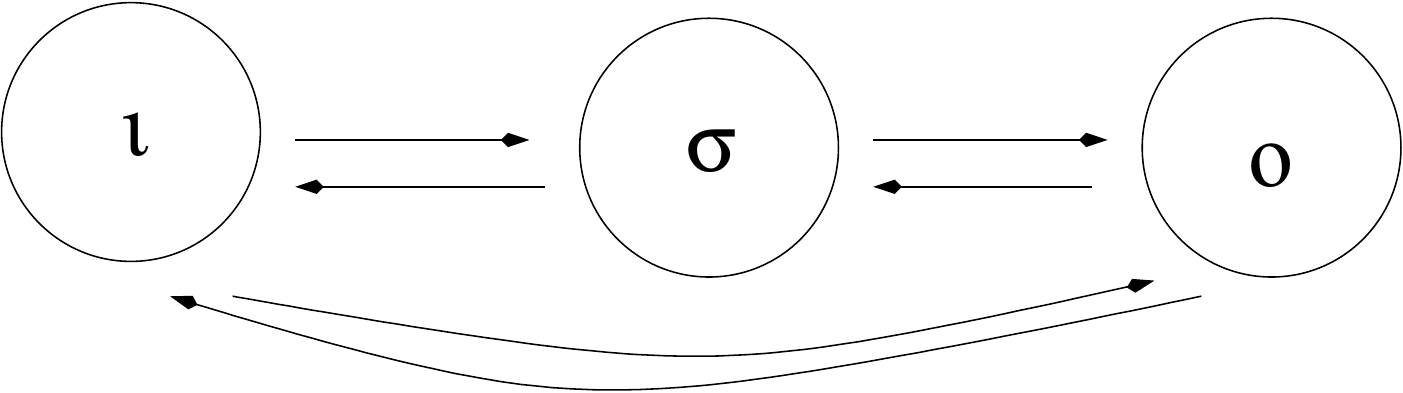}
\caption{Partition of core network into node types. \label{F:isudoGC}}
\end{subfigure}
\caption{Nodes and arrows in general network and core network.  \label{F:isudo}} 
\end{figure}

\begin{lemma} \label{core1jacobian}
Suppose $X_0=(x^*_\iota, x^*_\sigma, x^*_u, x^*_d, x^*_o)$ is a stable equilibrium of \eqref{eq:ad_coord2A}.   Then the core admissible system (obtained by freezing the $x_d$ nodes at $x^*_d$ and deleting the $x_u$ nodes)
\begin{equation} \label{eq:ad_coord3}
\begin{array}{rcl}
\dot{x}_\iota & = & f_\iota(x_\iota,x_\sigma,x^*_d, x_o,\II) \\
\dot{x}_\sigma & = & f_\sigma(x_\iota,x_\sigma,x^*_d, x_o) \\
\dot{x}_o & = & f_o(x_\iota,x_\sigma,x^*_d, x_o) 
\end{array}
\end{equation}
has a stable equilibrium at  $Y_0=(x^*_\iota,x^*_\sigma,x^*_o)$.  
\end{lemma}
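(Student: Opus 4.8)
The plan is to prove the two assertions of the lemma in turn: first that $Y_0$ is an equilibrium of the core system \eqref{eq:ad_coord3}, and then that this equilibrium is stable. The first is immediate. Since $X_0$ is an equilibrium of \eqref{eq:ad_coord2A}, the equations $f_\iota = f_\sigma = f_o = 0$ hold at $X_0$. By Lemma~\ref{L:missing_arrows} the functions $f_\iota$, $f_\sigma$, $f_o$ do not depend on $x_u$, so freezing $x_d = x^*_d$ and dropping the $x_u$ dependence reproduces exactly the right-hand sides of \eqref{eq:ad_coord3} evaluated at $Y_0 = (x^*_\iota, x^*_\sigma, x^*_o)$. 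Hence each vanishes and $Y_0$ is an equilibrium.

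The substantive step is stability, which I would obtain by comparing the Jacobian $J_c$ of the core system at $Y_0$ with the Jacobian $J$ of the full system at $X_0$. Ordering the state variables as $(x_\iota, x_\sigma, x_o, x_u, x_d)$, so that the core variables come first, and invoking the missing arrows from Lemma~\ref{L:missing_arrows} --- namely $u\not\to\iota$, $u\not\to\sigma$, $u\not\to o$ (so that $f_\iota, f_\sigma, f_o$ are independent of $x_u$) together with $\iota\not\to d$, $\sigma\not\to d$, $o\not\to d$, $u\not\to d$ (so that $f_d$ depends only on $x_d$) --- the full Jacobian takes the block form
\[
J = \begin{pmatrix} J_c & 0 & C_1 \\ R & D_u & C_2 \\ 0 & 0 & D_d \end{pmatrix},
\]
where $J_c$ is the $3\times 3$ core Jacobian in the rows and columns $\iota,\sigma,o$, while $D_u = f_{u,x_u}$ and $D_d = f_{d,x_d}$. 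The two crucial zero blocks are the (core, $u$) block, which vanishes because $f_\iota, f_\sigma, f_o$ ignore $x_u$, and the off-diagonal part of the $d$-rows, which vanishes because $f_d = f_d(x_d)$.

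From here the argument is pure linear algebra. The $d$-rows have zeros in the core and $u$ columns, so $J$ is block upper triangular for the split $\{\text{core}, u\}$ versus $\{d\}$; moreover the $\{\text{core}, u\}$ diagonal block is itself block lower triangular because $f_\iota, f_\sigma, f_o$ do not involve $x_u$. Consequently the characteristic polynomial factors as
\[
\det(J - \lambda I) = \det(J_c - \lambda I)\,\det(D_u - \lambda I)\,\det(D_d - \lambda I),
\]
so the spectrum of $J_c$ is contained in the spectrum of $J$. Since $X_0$ is a stable equilibrium, every eigenvalue of $J$ has negative real part, hence so does every eigenvalue of $J_c$, and $Y_0$ is a stable equilibrium of the core system.

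I do not expect a genuine obstacle: the content is entirely encoded in the block-triangular structure forced by Lemma~\ref{L:missing_arrows}, and the conclusion follows from the factorization of the characteristic polynomial. The only points requiring care are bookkeeping ones --- confirming that the partial derivatives comprising $J_c$ agree whether computed for the core system at $Y_0$ or extracted from $J$ at $X_0$ (they do, since those partials involve only core variables and the frozen value $x^*_d$), and fixing the convention that ``stable'' means all Jacobian eigenvalues have negative real part, so that passing to a subset of the spectrum preserves stability.
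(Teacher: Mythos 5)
Your proposal is correct and follows essentially the same route as the paper: both arguments use the missing arrows from Lemma~\ref{L:missing_arrows} to put the full Jacobian into a block triangular form (the paper reorders coordinates to $(\iota,\sigma,o,d,u)$, you to $(\iota,\sigma,o,u,d)$, which is immaterial), conclude that the spectrum of $J$ is the union of the spectra of $J_c$, $f_{u,x_u}$, and $f_{d,x_d}$, and deduce stability of $Y_0$ from stability of $X_0$. Your explicit remarks on the equilibrium check and on the agreement of the core block of $J$ with the Jacobian of the frozen system are sound and, if anything, slightly more careful than the paper's write-up.
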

\begin{proof}
It is straightforward that $Y_0$ is an equilibrium of \eqref{eq:ad_coord3}.  Reorder coordinates $(\iota, \sigma, u, d, o)$ to $(\iota,\sigma,o,d,u)$.  Then Lemma~\ref{L:missing_arrows} implies that the Jacobian $J$ of \eqref{eq:ad_coord2A} has the form
\begin{equation} \label{e:J3}
J = \Matrixc{f_{\iota, x_\iota} &  f_{\iota, x_\sigma} &  0 & f_{\iota, x_d} & f_{\iota, x_o} \\ 
		f_{\sigma, x_\iota} &  f_{\sigma, x_\sigma} &  0 & f_{\sigma, x_d} & f_{\sigma, x_o} \\
		f_{u, x_\iota} &  f_{u, x_\sigma} &  f_{u, x_u}  &  f_{u, x_d}  & f_{u, x_o} \\
		0 &  0 &  0 & f_{d, x_d}  & 0 \\
		f_{o, x_\iota} &  f_{o, x_\sigma} & 0 &  f_{o, x_d} & f_{o, x_o}
}
\end{equation}
and on swapping the $u$ and $o$ coordinates we see that $J$ is similar to 
\begin{equation} \label{e:J4}
J_1 = \Matrixc{ f_{\iota, x_\iota}  & f_{\iota, x_\sigma}  & f_{\iota, x_o}  & f_{\iota, x_d}   & 0 \\ 
		f_{\sigma, x_\iota} & f_{\sigma, x_\sigma} & f_{\sigma, x_o}  & f_{\sigma, x_d}  & 0 \\
		f_{o, x_\iota}     & f_{o, x_\sigma}      & f_{o, x_o}      & f_{o, x_d}      & 0 \\
		0                 & 0                  & 0              & f_{d, x_d}      & 0 \\
		f_{u, x_\iota}     &  f_{u, x_\sigma}     & f_{u, x_o}      &  f_{u, x_d}     & f_{u, x_u}
}
\end{equation}
It follows that the eigenvalues of $J$ at $X_0$ are the eigenvalues of $f_{d,x_d}$, $f_{u,x_u}$, and the eigenvalues of the Jacobian of \eqref{eq:ad_coord3} at $Y_0$.  Since the eigenvalues of  $J_1$ have negative real part, the equilibrium $Y_0$ is stable.
\qed
\end{proof}

\begin{lemma} \label{L:core->general}
Suppose that $\GG$ is an input-output network with core network $\GG_c$.  Suppose that the core admissible system  
\begin{equation} \label{eq:ad_coord4}
\begin{array}{rcl}
\dot{x}_\iota & = & f_\iota(x_\iota,x_\sigma, x_o,\II) \\
\dot{x}_\sigma & = & f_\sigma(x_\iota,x_\sigma, x_o) \\
\dot{x}_o & = & f_o(x_\iota,x_\sigma, x_o) 
\end{array}
\end{equation}
has a stable equilibrium at  $Y_0=(x^*_\iota,x^*_\sigma,x^*_o)$ and a point of infinitesimal homeostasis at $\II_0$.  Then the admissible system for the original network $\GG$ can be taken to be
\begin{equation} \label{eq:ad_coord5}
\begin{array}{rcl}
\dot{x}_\iota & = & f_\iota(x_\iota,x_\sigma, x_o,\II) \\
\dot{x}_\sigma & = & f_\sigma(x_\iota,x_\sigma, x_o) \\
\dot{x}_d & = & -x_d \\
\dot{x}_u & = & -x_u \\
\dot{x}_o & = & f_o(x_\iota,x_\sigma, x_o) 
\end{array}
\end{equation} 
has a stable equilibrium at $X_0 = (x^*_\iota, x^*_\sigma, 0,0, x^*_o)$ and infinitesimal homeostasis at $\II_0$.  
\end{lemma}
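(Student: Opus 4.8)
The plan is to exhibit that the explicitly written system \eqref{eq:ad_coord5} does the job: it is a legitimate $\GG$-admissible family, $X_0$ is an asymptotically stable equilibrium, and its input-output function coincides with that of the core system \eqref{eq:ad_coord4}, so infinitesimal homeostasis transfers verbatim. The single conceptual point to get right is admissibility. The functions $f_\iota,f_\sigma,f_o$ in \eqref{eq:ad_coord5} are taken to be exactly the core functions of \eqref{eq:ad_coord4}, now regarded as functions on the larger state space that happen to be independent of $x_d$ and $x_u$. Since the core network is a subnetwork of $\GG$ (Definition~\ref{D:core}), every coupling appearing in these functions corresponds to an arrow of $\GG_c$, hence of $\GG$; the only remaining couplings are the self-couplings $f_{d,x_d}=-1$ and $f_{u,x_u}=-1$, which are always permitted. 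The observation worth stating carefully is that $\GG$-admissibility only requires $f_{j,x_\ell}$ to \emph{vanish} when no arrow $\ell\to j$ exists, and never forces a coupling to be nonzero. We are therefore free to set to zero all couplings out of the $d$-nodes and into the $u$-nodes, even when the corresponding arrows are present in $\GG$. This is precisely what the phrase ``can be taken to be'' licenses, and it is what makes \eqref{eq:ad_coord5} admissible.

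Next I would establish the equilibrium and its stability. Substituting $X_0=(x^*_\iota,x^*_\sigma,0,0,x^*_o)$ into \eqref{eq:ad_coord5}, the $\iota$-, $\sigma$-, and $o$-equations vanish because $Y_0=(x^*_\iota,x^*_\sigma,x^*_o)$ is an equilibrium of \eqref{eq:ad_coord4}, while $\dot x_d=-x_d$ and $\dot x_u=-x_u$ vanish at $x_d=x_u=0$. For stability I would compute the Jacobian at $X_0$ in the coordinate order $(\iota,\sigma,o,d,u)$. Because $f_\iota,f_\sigma,f_o$ are independent of $x_d,x_u$ and the $d$- and $u$-equations are decoupled, the Jacobian is block diagonal with three diagonal blocks: the core Jacobian $J_c$ of \eqref{eq:ad_coord4} at $Y_0$, the block $-I_{n_d}$, and the block $-I_{n_u}$, where $n_d,n_u$ are the numbers of $d$- and $u$-nodes. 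Its spectrum is thus that of $J_c$ together with $-1$ repeated $n_d+n_u$ times. Since the eigenvalues of $J_c$ have negative real part by hypothesis, so do all eigenvalues of the full Jacobian, and $X_0$ is asymptotically stable. This is the block-triangularization device of Lemma~\ref{core1jacobian}, run in reverse.

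Finally I would transfer the homeostasis. The equilibrium equations $F(X,\II)=0$ for \eqref{eq:ad_coord5} force $x_d=0$ and $x_u=0$ identically, and the three remaining equations are exactly the core equilibrium equations \eqref{eq:ad_coord4}. Hence the unique stable family $X(\II)$ furnished by the implicit function theorem satisfies $x_d(\II)\equiv 0$, $x_u(\II)\equiv 0$, and agrees with the core family in its $(x_\iota,x_\sigma,x_o)$ components near $\II_0$; in particular the full input-output function coincides with the core one, so $x_o'(\II_0)=0$ by hypothesis, and Definition~\ref{D:homeostasis} yields infinitesimal homeostasis. As a consistent alternative one may instead invoke Lemma~\ref{L:det}: reordering rows and columns of the homeostasis matrix $H$ of \eqref{eq:ad_coord5} to isolate the decoupled $-I_{n_d}$ and $-I_{n_u}$ blocks factors the determinant as $\det(H)=\pm\det(H_c)$, with $H_c$ the core homeostasis matrix, so $\det(H)=0$ exactly when $\det(H_c)=0$.

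I do not anticipate a serious obstacle; the content lies entirely in applying the admissibility convention correctly (so that zeroing the $d$- and $u$-couplings is legitimate) and in reading off the block structures of $J$ and of $H$. The only place requiring mild attention is the sign bookkeeping in the determinant factorization, but since we need only the \emph{vanishing} of $\det(H)$, those signs are inessential.
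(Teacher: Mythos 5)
Your proof is correct. The paper in fact states Lemma~\ref{L:core->general} without giving a proof, so there is nothing to diverge from; your argument (admissibility via the freedom to set permitted couplings to zero, the block-diagonal Jacobian $\mathrm{diag}(J_c,-I,-I)$ giving stability, and the forced vanishing of $x_d,x_u$ along the equilibrium branch so that the input-output functions coincide) is exactly the intended one, being the reverse of the block-triangularization used in Lemma~\ref{core1jacobian} and Theorem~\ref{T:core}.
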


\begin{theorem} \label{T:core}
Let $x_o(\II)$ be the input-output function of the admissible system \eqref{eq:ad_coord2A} and let $x_o^c(\II)$ be the input-output function of the associated core admissible system \eqref{eq:ad_coord3}.  Then the input-output function $x_o^c(\II)$ associated with the core subnetwork has a point of infinitesimal homeostasis at $\II_0$ if and only if the input-output function $x_o(\II)$ associated with the original network has a point of infinitesimal homeostasis at $\II_0$.  More precisely, 
\begin{equation}
x_o'(\II) = k(\II)\,x_o^{c\;\prime}(\II)
\end{equation}
where $k(\II_0)\neq 0$. 
\end{theorem}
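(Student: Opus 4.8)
The plan is to prove Theorem~\ref{T:core} by relating the input-output functions of the full system \eqref{eq:ad_coord2A} and the core system \eqref{eq:ad_coord3} through their respective homeostasis matrices, using Lemma~\ref{L:det} as the main computational tool. Since Lemma~\ref{L:det} expresses each derivative $x_o'$ (resp.\ $x_o^{c\,\prime}$) in terms of $f_{\iota,\II}$, $\det(J)$, and $\det(H)$, it suffices to track how these three quantities for the full network relate to their core counterparts. The key structural input is Lemma~\ref{L:missing_arrows}, which tells us precisely which coupling arrows are forbidden, and Lemma~\ref{core1jacobian}, which already established the block-triangular structure \eqref{e:J4} after reordering coordinates.

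First I would apply Lemma~\ref{L:det} to both systems to write
\begin{equation*}
x_o' = \pm\frac{f_{\iota,\II}}{\det(J)}\det(H), \qquad
x_o^{c\,\prime} = \pm\frac{f_{\iota,\II}}{\det(J^c)}\det(H^c),
\end{equation*}
where $J^c$ and $H^c$ are the Jacobian and homeostasis matrices of the core system. Note that $f_{\iota,\II}$ is the same for both, since the input node function $f_\iota$ and its $\II$-dependence are unchanged under freezing the $x_d$ nodes and deleting the $x_u$ nodes (the arrows into $\iota$ from $u$ are forbidden by Lemma~\ref{L:missing_arrows}). The heart of the argument is then to show that both the Jacobian determinant and the homeostasis matrix determinant factor compatibly. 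From \eqref{e:J4}, the block-triangular form gives $\det(J) = \det(f_{d,x_d})\det(f_{u,x_u})\det(J^c)$, reusing the eigenvalue decomposition already obtained in the proof of Lemma~\ref{core1jacobian}.

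Next I would carry out the analogous factorization for the homeostasis matrix $H$. Here $H$ is obtained by dropping the first row and last column of $J$, so I would apply the same coordinate reordering $(\iota,\sigma,u,d,o)\to(\iota,\sigma,o,d,u)$ to $H$ and exploit that the $d$ and $u$ blocks decouple: Lemma~\ref{L:missing_arrows} forbids the arrows $u\to\sigma$, $u\to o$, $d$-outputs to non-$d$ nodes, etc., so the rows and columns corresponding to $x_d$ and $x_u$ form isolated diagonal blocks $f_{d,x_d}$ and $f_{u,x_u}$ within $H$ (the $u$ node retains its self-coupling row but has no outgoing couplings to $\sigma$, $o$, or $\iota$). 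This yields $\det(H) = \det(f_{d,x_d})\det(f_{u,x_u})\det(H^c)$ with exactly the same scalar factors that appeared in $\det(J)$. Consequently these factors cancel in the ratio $\det(H)/\det(J)$, leaving
\begin{equation*}
x_o' = \pm\frac{f_{\iota,\II}}{\det(J^c)}\det(H^c) = k(\II)\,x_o^{c\,\prime},
\end{equation*}
with $k(\II)$ equal to a sign together with the quotient of the common factors, which is $\pm 1$; in any case $k(\II_0)\neq 0$ because stability forces $\det(f_{d,x_d})$ and $\det(f_{u,x_u})$ to be nonzero (all their eigenvalues have negative real part by Lemma~\ref{core1jacobian}).

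The main obstacle I anticipate is being careful about the precise placement of the $x_u$ row and column in $H$ and confirming that $H$ genuinely inherits the block-triangular structure of $J$ rather than merely a weaker partial decoupling. Unlike $J$, the matrix $H$ has had a row and column removed, so I must verify that the removed first row (the $\iota$ equation) and last column (the $x_o$ variable) do not break the $d$/$u$ block isolation --- specifically that no surviving entry couples the $u$ or $d$ indices back into the $\sigma$, $\iota$, or $o$ indices in a way that obstructs the determinant factorization. The forbidden-arrow list in Lemma~\ref{L:missing_arrows} is exactly tailored to rule this out, so the verification reduces to matching each potential off-block entry against that list. A secondary, more routine point is tracking the overall sign in $k(\II)$, which does not affect the vanishing condition and hence is immaterial to the infinitesimal-homeostasis equivalence; I would simply absorb it into $k$ and note $k(\II_0)\neq 0$.
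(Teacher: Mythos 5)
Your proposal is correct and follows essentially the same route as the paper: both hinge on Lemma~\ref{L:det} together with the observation that the forbidden arrows of Lemma~\ref{L:missing_arrows} force the $u$-column and $d$-row of $H$ to isolate, so that $\det(H)=\pm\det(f_{u,x_u})\det(f_{d,x_d})\det(H^c)$ with nonzero extra factors. Your additional step of factoring $\det(J)$ in parallel and cancelling the common factors to get $k(\II)=\pm 1$ is a nice refinement that makes the ``more precisely'' clause explicit, but it is not a different method.
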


\begin{proof}
It follows from Lemma~\ref{L:det} that $x'_o(\II_0) = 0$ if and only if 
\[
\det\Matrixc{
  f_{\sigma, x_\iota} &  f_{\sigma, x_\sigma} &  0 & f_{\sigma, x_d} \\
  f_{u, x_\iota} &  f_{u, x_\sigma} &  f_{u, x_u}  &  f_{u, x_d}   \\
  0 &  0 &  0 & f_{d, x_d}   \\
  f_{o, x_\iota} &  f_{o, x_\sigma} & 0 &  f_{o, x_d} } = 0
\]
if and only if 
\[
\det(f_{u,x_u})  \det\Matrixc{
  f_{\sigma, x_\iota} &  f_{\sigma, x_\sigma} & f_{\sigma, x_d} \\
  0 &  0 &  f_{d, x_d}   \\
  f_{o, x_\iota} &  f_{o, x_\sigma} &  f_{o, x_d} } = 0
\]
if and only if 
\[
\det(f_{u,x_u}) \det(f_{d,x_d}) \det\Matrixc{
  f_{\sigma, x_\iota} &  f_{\sigma, x_\sigma}  \\
  f_{o, x_\iota} &  f_{o, x_\sigma} } = 0
\]
Both matrices $f_{u,x_u}$ and $f_{d,x_d}$ are triangular with negative diagonal entries and thus have nonzero determinants. It then follows from Lemma~\ref{L:det} that $x_o^{c\;\prime}(\II_0) = 0$ if and only if 
\begin{equation} \label{eq:core1J}
\det\Matrixc{
  f_{\sigma, x_\iota} &  f_{\sigma, x_\sigma}  \\
  f_{o, x_\iota} &  f_{o, x_\sigma} } = 0
\end{equation}
is satisfied.
\qed
\end{proof}

It follows from Theorem~\ref{T:core} and Lemma~\ref{L:core->general} that classifying infinitesimal  homeostasis for networks $\GG$ is identical to classifying infinitesimal homeostasis for the core subnetwork $\GG_c$.  Specifically, an admissible system with infinitesimal homeostasis for the core subnetwork yields, by Lemma~\ref{L:core->general}, an admissible system with infinitesimal homeostasis for the original network which in turn yields the original system for the core subnetwork with infinitesimal homeostasis by Theorem~\ref{T:core}.

\ignore{
\begin{remark} \rm \label{R:backward}
Corollary~\ref{P:core_equivalent} does not imply that backward arrows can be eliminated in order to define the core network.  The reason is that stability of the equilibrium on the network without the backward arrows might not be the same as stability on the original network.  The stability of the equilibrium is necessary to get a well defined input-output function.  See Example~\ref{E:backward}.
\end{remark}  
}

\begin{remark} \rm \label{R:backward}
Corollary~\ref{P:core_equivalent} implies that backward arrows can be eliminated when computing zeros of $\det(H)$. These arrows cannot be eliminated when computing equilibria of the network equations or their stability. See \eqref{e:stab_ex} in Example~\ref{E:backward}.
\end{remark}  

\begin{example} \rm \label{E:backward}
Consider the network in Figure~\ref{F:backward}.  Assume WLOG that an admissible vector field for this network
\begin{equation} \label{eq:ad_net_back}
\begin{array}{rcl}
\dot{x}_\iota & = & f_\iota(x_\iota,x_\rho,\II) \\
\dot{x}_\rho & = & f_\rho(x_\iota,x_\rho) \\
\dot{x}_o & = & f_o(x_\rho, x_o) 
\end{array}
\end{equation}
has an equilibrium at the origin $(X_0,\II_0)=(0,0)$; that is 
\[
f_\iota(0,0,0) = f_\rho(0,0) = f_o(0,0) = 0.
\]
Begin by noting that the Jacobian of \eqref{eq:ad_net_back} is
\begin{equation} \label{e:J}
J = \Matrixc{f_{\iota, x_\iota} &  f_{\iota, x_\rho} & 0  \\ 
f_{\rho, x_\iota} &  f_{\rho, x_\rho} & 0 \\
0 &  f_{o, x_\rho} & f_{o, x_o}}
\end{equation}

\begin{figure}[!ht]
\centering
\includegraphics[width=0.5\textwidth]{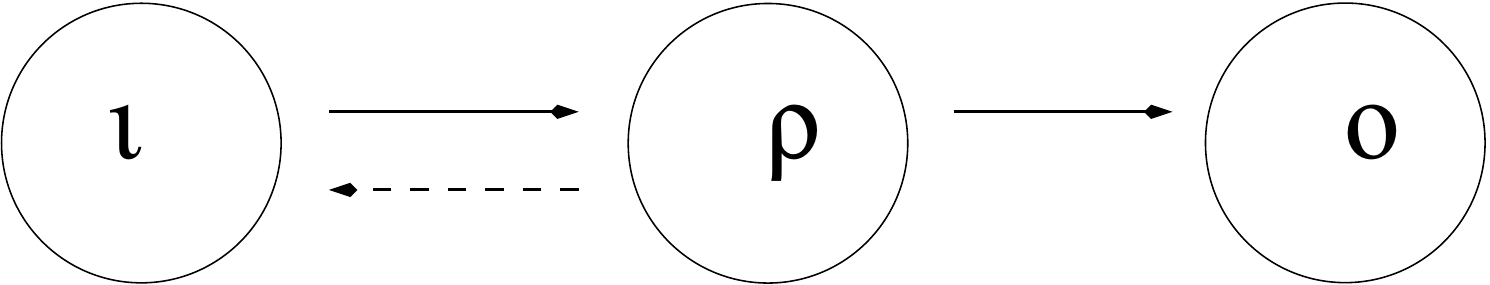}
\caption{{\bf Backward arrow.} Network with a (dashed) backward arrow. \label{F:backward}} 
\end{figure}
	
The origin is a linearly stable equilibrium if and only if 
\begin{equation} \label{e:stab_ex}
f_{o,x_0} < 0 \qquad f_{\iota, x_\iota}  + f_{\rho, x_\rho} < 0 \qquad f_{\iota, x_\iota}  f_{\rho, x_\rho} -  f_{\iota, x_\rho} f_{\rho, x_\iota} >  0
\end{equation}
Whether the third inequality in \eqref{e:stab_ex} holds depends on the value of the backward coupling $f_{\iota, x_\rho} = 0$.  However, whether infinitesimal homeostasis ($x_o'(0) = 0$) occurs is independent of the backward coupling since
\[
\det\Matrixc{ f_{\rho, x_\iota} &  f_{\rho, x_\rho} \\ f_{o, x_\iota} &  f_{o, x_\rho} } = 
\det\Matrixc{ f_{\rho, x_\iota} &  f_{\rho, x_\rho} \\ 0  &  f_{o, x_\rho} }  = 
f_{\rho, x_\iota} f_{o, x_\rho} = 0
\]
\end{example}

\Section{Determinant formulas}
\label{S:determinant}

Let $\PH$ be the $(n+1)\times(n+1)$ homeostasis matrix \eqref{xo'_reduced2} of the input-output network $\GG$ with input node $\iota$, $n$ regulatory nodes $\rho_j$, and output node $o$, and admissible system \eqref{eq:ad_coord}.

\begin{lemma} \label{L:summand_factors}
Every nonzero summand of $\det(\PH)$ corresponds to a unique $\iota o$-simple path and has all coupling strengths within this $\iota o$-simple path as its factors. 
\end{lemma}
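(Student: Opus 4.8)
The plan is to expand $\det(\PH)$ using the Leibniz formula as a sum over permutations, and to show that each nonzero term in this expansion is forced to trace out an $\iota o$-simple path. Recall that $\PH$ is the $(n+1)\times(n+1)$ matrix obtained from the Jacobian by deleting the row corresponding to $\iota$ and the column corresponding to $o$; so the rows of $\PH$ are indexed by $\{\rho_1,\ldots,\rho_n,o\}$ and the columns by $\{\iota,\rho_1,\ldots,\rho_n\}$. The $(p,q)$ entry is $f_{p,x_q}$, which is nonzero only if $q\to p$ is an arrow of $\GG$ (or is a self-coupling when $p=q$). A nonzero summand of the determinant corresponds to a bijection $\pi$ from the row-index set to the column-index set for which every factor $f_{\text{row},\,\pi(\text{row})}$ is nonzero.

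First I would set up the bookkeeping carefully. Since the row set and column set differ only by the swap of $o$ (a row but not a column) for $\iota$ (a column but not a row), a bijection $\pi$ between them can be recorded as a function on the common nodes $\{\rho_1,\ldots,\rho_n\}$ together with the data of where $o$ and $\iota$ go. The key structural observation is that a nonzero summand defines a directed graph on the node set whose edges are $\pi(p)\to p$ for each row $p$; each such edge is a genuine arrow of $\GG$ (when $\pi(p)\neq p$) or a self-coupling loop (when $\pi(p)=p$). Because $\pi$ is a bijection, every node that appears as a column value $\pi(p)$ has out-degree exactly one in this edge set, and every node appearing as a row has in-degree exactly one. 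The asymmetry at $\iota$ and $o$ is precisely what breaks the picture into cycles plus one path: $\iota$ occurs only as a column value (out-degree one, in-degree zero) and $o$ occurs only as a row (in-degree one, out-degree zero), so following edges forward from $\iota$ cannot return to $\iota$ and must terminate at $o$.

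Next I would make that last sentence rigorous. Starting at $\iota$, repeatedly follow the unique outgoing edge $\pi^{-1}$-style until forced to stop. Since the node set is finite and every interior node has a unique successor, the walk either closes into a cycle or reaches a node with no successor; the only node with no outgoing edge is $o$, and $\iota$ has no incoming edge so the walk cannot revisit $\iota$. A standard argument shows the walk visits distinct nodes (if it revisited an interior node we would contradict the unique-predecessor/unique-successor property that makes the functional graph a disjoint union of paths and cycles), hence the walk from $\iota$ to $o$ is a simple path. All remaining edges, on the nodes not used by this $\iota o$-path, must pair up into cycles (possibly self-loops), since on those nodes $\pi$ restricts to a bijection with no source/sink asymmetry. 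Every edge of the $\iota o$-path is a coupling $f_{p,x_q}$ with $q\to p$ an arrow of $\GG$, and these are exactly the coupling-strength factors of the summand lying along the path, which gives the claimed statement. Uniqueness of the associated path follows because the summand determines $\pi$, and $\pi$ determines the walk from $\iota$ deterministically.

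The main obstacle I anticipate is the clean combinatorial accounting of the row/column index mismatch: because $\PH$ is not indexed by a single node set but by two sets differing in the $\iota$/$o$ swap, one must be precise about how a permutation in the Leibniz expansion translates into a functional digraph, and verify that the in-degree/out-degree profile indeed forces exactly one open path (from $\iota$ to $o$) with the rest decomposing into cycles. Once that decomposition is established, identifying the path as \emph{simple} and reading off that its factors are precisely the coupling strengths along it is routine. I would also note that self-couplings contribute only through the cycle/self-loop part and hence never appear among the coupling factors attributed to the $\iota o$-simple path, which is consistent with the statement referring specifically to coupling strengths within the path.
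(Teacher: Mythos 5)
Your proposal is correct and follows essentially the same route as the paper: both arguments exploit the fact that a nonzero determinant summand selects exactly one entry per row and per column, and then trace the unique successor starting from the $\iota$ column until the walk is forced to terminate at $o$ (the paper does this as a direct induction with the pigeonhole principle, you phrase it as the path-plus-cycles decomposition of the functional digraph induced by the permutation). Your additional observation that the leftover edges decompose into cycles on the complementary nodes is not needed for this lemma but matches what the paper establishes separately in Theorem~\ref{L:summand_form}(b).
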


\begin{proof}
Each nonzero summand in $\det(\PH)$ has $n+1$ factors and each factor is the strength of a coupling arrow or of the linearized internal dynamics of a node.  We can write $\PH$ as 
\begin{equation}\label{eq:H-expand}
\PH = \begin{gmatrix}[b]
f_{1,\iota} & f_{1,1} &  \cdots & f_{1,n-1} & f_{1,n}  \\
f_{2,\iota} & f_{2,1} &  \cdots & f_{2,n-1} & f_{2,n}\\
\vdots & \vdots & \vdots & \vdots  & \vdots\\
f_{n,\iota} & f_{n,1} & \cdots & f_{n,n-1} & f_{n,n}\\
f_{o,\iota} & f_{o,1} & \cdots & f_{o,n-1} & f_{o,n}
\colops
\mult{0}{\begin{array}{c}\mbox{$\iota$}\\ \downarrow\end{array}}
\mult{1}{\begin{array}{c}\mbox{$\rho_1$}\\ \downarrow\end{array}}
\mult{3}{\begin{array}{c}\mbox{$\rho_{n-1}$}\\ \downarrow\end{array}}
\mult{4}{\begin{array}{c}\mbox{$\rho_{n}$}\\ \downarrow\end{array}}
\rowops
\mult{0}{\leftarrow\mbox{$\rho_1$}}
\mult{1}{\leftarrow\text{$\rho_2$}}
\mult{3}{\leftarrow\text{$\rho_n$}}
\mult{4}{\leftarrow\text{$o$}}
\end{gmatrix}
\end{equation}
The columns of $\PH$ correspond to $n+1$ nodes in the order $\iota,\rho_1,\ldots,\rho_n$ and the rows of $\PH$ correspond to $n+1$ nodes in the order $\rho_1,\ldots,\rho_n,o$. The entry $f_{j,\iota} = f_{\rho_j,x_\iota}$ in column $\iota$ is the linearized coupling strength of an arrow $\iota\to\rho_j$. The entry $f_{o,k} = f_{o,x_{\rho_k}}$ in row $o$ is the linearized coupling strength of an arrow $\rho_{k}\to o$. The entry $f_{j,k} = f_{\rho_j,x_{\rho_k}}$ is the linearized coupling strength of an arrow $\rho_k\to\rho_j$.  If $j=k$, the entry $f_{k,k} = f_{\rho_k,x_{\rho_k}}$ is the linearized internal dynamics of node $k$. Note that each summand in the expansion of $\det(\PH)$ has one factor associated with each column of $\PH$ and one factor associated with each row of $\PH$. 
	
Fix a summand.  By assumption there is a unique factor associated with the first column.  If this factor is $f_{o,\iota}$, we are done and the simple path is $\iota\to o$.  So assume the factor in the first column is $f_{k,\iota}$, where $1\leq k\leq n$. This factor is associated with the arrow $\iota\to\rho_k$.
	
Next there is a unique factor in the column called $\rho_k$ and that factor corresponds to an arrow $\rho_k\to\rho_j$ for some node $\rho_j$. If node $\rho_j$ is $o$, the summand includes $(f_{k,\iota} f_{o,k})$ and the associated simple path is $\iota\to\rho_k\to o$. Hence we are done.  If not, we assume $1\leq j\leq n$. Since there is only one summand factor in each row of $\PH$, it follows that $k\neq j$. This summand is then associated with the path $\iota\to \rho_k\to\rho_j$ and contains the factors $(f_{k,\iota} f_{j, k})$.
	
Proceed inductively.  By the pigeon hole principle we eventually reach a node that connects to $o$. The simple path that is associated to the given summand is unique because we start with the unique factor in the summand that has an arrow whose tail is $\iota$ and the choice of $\rho_i$ is unique at each step.  Moreover, every coupling within this simple path is a factor of the given summand. 
\qed
\end{proof}

The determinant formula \eqref{eq:detP-form} for $\det(\PH)$ in Theorem~\ref{L:summand_form} is obtained by indexing the sum by the $\iota o$-simple paths of $\GG$ as described in Lemma~\ref{L:summand_factors}.

\begin{theorem}\label{L:summand_form}
Suppose $\GG$ has $k$ $\iota o$-simple paths $S_1,\ldots, S_k$ with corresponding complementary subnetworks $C_1,\ldots,C_k $.  Then
\begin{enumerate}[label=(\alph*)]
\item The \emph{determinant formula} holds:
\begin{equation} \label{eq:detP-form}
\det(\PH) = \sum_{i=1}^k F_{S_i}\, G_{C_i}
\end{equation}
where $F_{S_i}$ is the product of the coupling strengths within the $\iota o$-simple path $S_i$ and $G_{C_i}$ is a function of coupling strengths (including self-coupling strengths) from $C_i$.  
\item Specifically, 
\begin{equation}\label{eq:J_Ci}
G_{C_i} = \pm \det(J_{C_i})
\end{equation}
where $J_{C_i}$ is the Jacobian matrix of the admissible system corresponding to the complementary subnetwork $C_i$.  Generically, a coupling strength in $\GG$ cannot be a factor of $G_{C_i}$.
\end{enumerate}
\end{theorem}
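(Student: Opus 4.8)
The plan is to prove part (a) by sorting the nonzero summands of $\det(\PH)$ according to the $\iota o$-simple path each one determines, as already identified in Lemma~\ref{L:summand_factors}, and then to prove part (b) by recognizing the ``leftover'' of each group as the determinant of the complementary Jacobian. First I would fix bookkeeping for the determinant expansion. Since the columns of $\PH$ are indexed by $\iota,\rho_1,\dots,\rho_n$ and the rows by $\rho_1,\dots,\rho_n,o$ (see \eqref{eq:H-expand}), each nonzero summand is specified by a bijection $\phi$ from the column-nodes onto the row-nodes, contributing the monomial $\prod_c f_{\phi(c),x_c}$ with sign equal to the signature $\sgn(\hat\phi)$ of the induced permutation $\hat\phi$ of positions. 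By Lemma~\ref{L:summand_factors} every such $\phi$ determines a unique $\iota o$-simple path $S:\ \iota=v_0\to v_1\to\cdots\to v_\ell=o$, forced by $\phi(v_0)=v_1,\dots,\phi(v_{\ell-1})=v_\ell$. Grouping the summands by their path $S_i$ gives $\det(\PH)=\sum_i\big(\sum_{\phi:\,S(\phi)=S_i}\sgn(\hat\phi)\prod_c f_{\phi(c),x_c}\big)$. The path factor $F_{S_i}=\prod_{j}f_{v_{j+1},x_{v_j}}$ pulls out of the inner sum, and this already proves (a), with $G_{C_i}$ defined as the remaining inner sum; it remains only to identify that sum.

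Next I would observe that, for a fixed path $S_i$, the column-nodes and the row-nodes left unused by the path are both exactly the set $W$ of regulatory nodes off $S_i$ — that is, the node set of the complementary subnetwork $C_i$ (recall $\iota,o\in S_i$ always). Since $W$ is a common subset of the column-node set and the row-node set, $\phi$ restricts to a genuine permutation $\tau$ of $W$, and the inner sum becomes $F_{S_i}\sum_\tau \sgn(\hat\phi)\prod_{w\in W}f_{\tau(w),x_w}$. To identify this with $\pm\det(J_{C_i})$ I would fix a reference matching $\phi_0$ having the same path with $\tau=\mathrm{id}$ (so $\phi_0$ selects the diagonal self-coupling entries $f_{w,x_w}$ on $W$), and write any other $\phi$ over $S_i$ as $\phi=T\circ\phi_0$, where $T$ is the permutation of the row-nodes acting as $\tau$ on $W$ and fixing the path-rows $v_1,\dots,v_\ell$. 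Because the signature of a permutation is intrinsic to its action on a set (independent of the labeling used to order positions), this yields $\sgn(\hat\phi)=\sgn(\hat\phi_0)\,\sgn(\tau)$, where $\sgn(\hat\phi_0)=:\epsilon_i$ is a constant depending only on $S_i$. Summing over $\tau$ then reconstitutes the full determinant, $\sum_\tau\sgn(\tau)\prod_{w}f_{\tau(w),x_w}=\det(J_{C_i})$, since the submatrix of $\PH$ on rows $W$ and columns $W$ has entries $f_{p,x_q}$ that are precisely the couplings and self-couplings of $C_i$. Hence $G_{C_i}=\epsilon_i\det(J_{C_i})=\pm\det(J_{C_i})$, which is \eqref{eq:J_Ci}.

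For the genericity claim I would note that $\tau=\mathrm{id}$ contributes the monomial $\prod_{w\in W}f_{w,x_w}$, a product of self-couplings only, to $\det(J_{C_i})$ with coefficient $\pm1$. Since this monomial contains no off-diagonal coupling $f_{p,x_q}$ with $p\neq q$, no such coupling can divide every monomial of $\det(J_{C_i})$; therefore no coupling strength of $\GG$ is a factor of $G_{C_i}$ whenever the self-couplings are nonzero, i.e.\ generically.

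I expect the main obstacle to be the sign bookkeeping in the middle step. Because the row- and column-index sets of $\PH$ differ (the column set contains $\iota$ but not $o$, and the row set contains $o$ but not $\iota$), the path portion of $\phi$ is not a $\sgn$-invariant block in the usual sense, so one cannot simply split $\sgn(\hat\phi)$ as a product over an invariant block decomposition of the positions. The device of factoring $\phi=T\circ\phi_0$ through a fixed reference matching, together with the labeling-independence of the signature, is exactly what isolates the path-dependent constant $\epsilon_i$ from the varying $\sgn(\tau)$; carefully verifying this factorization is the crux of the argument.
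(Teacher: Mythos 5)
Your proposal is correct and follows essentially the same route as the paper: both group the nonzero summands of $\det(\PH)$ by the $\iota o$-simple path supplied by Lemma~\ref{L:summand_factors}, factor out $F_{S_i}$, identify the residual signed sum over permutations of the off-path node set as $\pm\det(J_{C_i})$, and obtain the genericity claim from the purely self-coupling diagonal monomial of $\det(J_{C_i})$. Your reference-matching device $\phi=T\circ\phi_0$ just makes explicit the sign bookkeeping that the paper handles by relabeling the nodes along $S_i$.
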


\begin{proof}
\begin{enumerate}[label=(\alph*)]
\item  Let $S_i$ be the $r+2$ node $\iota o$-simple path $\iota\to j_1\to\cdots\to j_r\to o$ and let 
\[
F_{S_i} = f_{i_1,x_\iota} f_{i_2,x_{i_1}} \cdots f_{i_r,x_{i_{r-1}}} f_{o,x_{i_r}}
\]
be the product of all coupling strengths in $S_i$.  By Lemma~\ref{L:summand_factors}, $\det(\PH)$ has the form \eqref{eq:detP-form}.  We claim that $G_{C_i}$ is a function depending only on the coupling strengths (including self-coupling strengths) from the complementary subnetwork $C_i$.  Since each summand in the expansion of $\det(\PH)$ has only one factor in each column of $\PH$ and one factor in each row of $\PH$, the couplings in $G_{C_i}$ must have different tails and heads from the ones that appear in the simple path. Hence, $G_{C_i}$ is a function of couplings (including self-couplings) between nodes that are not in the simple path $S_i$, as claimed.  
\item Next we show that up to sign $G_{C_i}$ is the determinant of the Jacobian matrix of the admissible system for the subnetwork $C_i$ (see \eqref{eq:J_Ci}).  To this end, relabel the nodes so that the $\iota o$-simple path $S_i$ is
\[
\iota \to 1 \to \cdots \to r \to o
\]
and the nodes in the complementary subnetwork $C_i$ are labeled $r+1, \ldots, n$.  Then 
\[
F_{S_i} = (-1)^\chi f_{1, x_\iota} f_{2, x_1} \cdots f_{r,x_{r-1}} f_{o, x_r}
\]
where $\chi$ permutes the nodes of the $\iota o$-simple path $S_i$ to $1,\ldots, r$. The summands of $\det(\PH)$ associated with $S_i$ are $F_{S_i} G_{C_i}$, where 
\begin{equation} \label{e:detG}
G_{C_i} = \sum_{\sigma} (-1)^\sigma f_{r+1,x_{\sigma(r+1)}} \cdots f_{n,x_{\sigma(n)}}
\end{equation}
and $\sigma$ is a permutation of the indices $r+1,\ldots,n$. Observe that the right hand side of \eqref{e:detG} is just $\det(J_{C_i})$ up to sign.
		
Lastly, we show that no coupling strength in $\GG$ can be a factor of $\det(J_{C_i})$.  The coupling strengths correspond to the arrows and the self-coupling strengths correspond to the nodes.  The self-coupling strengths are the diagonal entries of $J_{C_i}$, which are generically nonzero.  If we set all coupling strengths to $0$ (that is, assume they are neutral), then the off-diagonal entries of $\det(J_{C_i})$ are $0$ and $\det(J_{C_i}) \neq 0$.  Now suppose that one coupling strength is a factor of $\det(J_{C_i})$, then $\det(J_{C_i}) = 0$ if that coupling is neutral and we have a contradiction.  It follows that no coupling strength can be a factor of $\det(J_{C_i})$.
\end{enumerate}
\qed
\end{proof}

\begin{theorem} \label{T:core_equivalent}
Two core networks are core equivalent if and only if they have the same set of $\iota o$-simple paths and the Jacobian matrices of the complementary subnetworks to any simple path have the same determinant up to sign.
\end{theorem}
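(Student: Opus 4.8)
The plan is to deduce both implications from the determinant formula of Theorem~\ref{L:summand_form}, which writes $\det(\PH) = \sum_{i=1}^k F_{S_i}\,G_{C_i}$ with $G_{C_i} = \pm\det(J_{C_i})$. By Definition~\ref{D:backward}(a), two core networks are core equivalent exactly when their homeostasis matrices have identical determinant polynomials, so the whole statement reduces to showing that the data \emph{(the set of $\iota o$-simple paths together with the polynomials $\det(J_{C_i})$ up to sign)} carries exactly the same information as the polynomial $\det(\PH)$ itself. The backward implication is then immediate: if two core networks share the same $\iota o$-simple paths, the corresponding monomials $F_{S_i}$ coincide (a path is determined by its arrows, and the sign $(-1)^\chi$ attached to $F_{S_i}$ depends only on the path), so combining this with matching $\det(J_{C_i})$ forces each summand $F_{S_i}G_{C_i}$ to agree and hence $\det(\PH)=\det(\PH')$.

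For the forward implication I would show that the grouping $\det(\PH)=\sum_i F_{S_i}G_{C_i}$ can be \emph{read off intrinsically} from the polynomial, that is, without reference to the underlying network. First I would note (as in the proof of Lemma~\ref{L:summand_factors}) that each matrix position of $\PH$ holds a distinct variable, so distinct nonzero permutations select distinct position-sets and therefore distinct monomials; thus $\det(\PH)$ expands with no cancellation and its monomials biject with the nonzero summands, each being a squarefree product of arrow-couplings $f_{j,x_\ell}$ and self-couplings $f_{v,x_v}$. The crucial step is that the unique $\iota o$-simple path associated to a monomial $m$ by Lemma~\ref{L:summand_factors} is recoverable from $m$ alone, by ``following arrows from $\iota$'': interpreting each variable $f_{j,x_\ell}$ universally as an arrow $\ell\to j$, one starts at $\iota$, which contributes exactly one outgoing arrow-variable in $m$ (its column carries no self-coupling), and iterates. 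At each visited node $v$ the incoming arrow has already consumed row $v$, so the column-$v$ factor of $m$ cannot be the self-coupling $f_{v,x_v}$ and must be a genuine arrow out of $v$; the walk therefore proceeds unambiguously until it reaches $o$. Since this reconstruction uses only the variables present in $m$, it is identical for any two networks with the same $\det(\PH)$.

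Consequently, grouping the monomials of $\det(\PH)$ by their recovered path partitions the polynomial, the group attached to $S_i$ summing to $F_{S_i}G_{C_i}$; dividing by the monomial $F_{S_i}$ (valid in the polynomial ring) returns $G_{C_i}=\pm\det(J_{C_i})$. Thus $\det(\PH)$ determines both the set of $\iota o$-simple paths and each $\det(J_{C_i})$ up to sign, which is the forward implication. I expect the main obstacle to be the bookkeeping of signs: one must verify that the sign relating $G_{C_i}$ to $\det(J_{C_i})$ depends only on the path $S_i$ (through the permutation $\chi$ that reorders the path nodes), so that ``up to sign'' is used consistently in the two directions and the reconstructed $G_{C_i}$ genuinely matches across core-equivalent networks. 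The recoverability of the path from a single monomial is conceptually the heart of the argument, but it follows cleanly from the row-consumption observation above; note in particular that I group by the intrinsic ``follow arrows from $\iota$'' rule rather than by divisibility $F_{S_i}\mid m$, which is what rules out a monomial from a different path's group being miscounted into the $S_i$-group.
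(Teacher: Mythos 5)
Your proof is correct and follows essentially the same route as the paper's: both directions are deduced from the determinant formula of Theorem~\ref{L:summand_form}, with the backward implication immediate and the forward implication obtained by matching the summands $F_{S_i}G_{C_i}$ across the two networks. Your explicit monomial-by-monomial recovery of the simple path (the ``follow arrows from $\iota$'' rule, which is just Lemma~\ref{L:summand_factors} read intrinsically) is a more careful justification of the step the paper compresses into ``the $F_{S_i}$ are linearly independent,'' since it is exactly what rules out cross-cancellation between different path groups in $\sum_i F_{S_i}(G_{C_i}-G_{D_i})=0$.
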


\begin{proof}
\noindent ${\mathbf \Rightarrow}$ Let $\GG_1$ and $\GG_2$ be core networks and assume they are core equivalent.  Therefore, $\det(B_1) = \det(B_2)$ and by Theorem~\ref{L:summand_form}
\[
\det(B_1)\equiv \sum_{i=1}^k F_{S_i}\, G_{C_i} = \sum_{j=1}^\ell F_{T_j}\, G_{D_j} \equiv \det(B_2)
\] 
If a simple path of $\GG_1$ were not a simple path of $\GG_2$, the equality would fail; that is, the polynomials would be unequal.  Therefore, we may assume $\ell = k$ and (by renumbering if needed) that $T_i = S_i$ for all $i$.  It follows that 
\[
\sum_{i=1}^k F_{S_i}\, (G_{C_i} - G_{D_i}) = 0
\]
Since the $F_{S_i}$ are linearly independent it follows that $G_{C_i} = G_{D_i}$ for all $i$; that is, $\det(J_{C_i}) = \pm \det(J_{D_i})$ where $J_{C_i}$ and $J_{D_i}$ are the Jacobian matrices associated with $\GG_1$ and $\GG_2$. Hence the Jacobian matrices of the two complementary subnetworks have the same determinant up to sign.

\noindent ${\mathbf \Leftarrow}$ The converse follows directly from Theorem~\ref{L:summand_form}.
\qed
\end{proof}

\begin{corollary} \label{C:core_equivalence}
Two core networks are core equivalent if they have the same set of $\iota o$-simple paths and the same complementary subnetworks to these simple paths. 
\end{corollary}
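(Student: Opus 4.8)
The plan is to obtain this corollary immediately from the $\Leftarrow$ direction of Theorem~\ref{T:core_equivalent}. That theorem shows two core networks are core equivalent whenever they share the same $\iota o$-simple paths and the Jacobian matrices of the complementary subnetworks to each such path have equal determinants up to sign. Since the present hypothesis is strictly stronger---the complementary subnetworks themselves coincide, not merely the determinants---the entire task reduces to checking that equal complementary subnetworks force equal complementary Jacobian determinants.

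First I would fix the shared node labeling $\iota, \rho_1, \ldots, \rho_n, o$ common to both core networks (the hypothesis that the $\iota o$-simple paths and complementary subnetworks agree already forces the two networks to have the same node set, hence the same $n$). I would then make precise that ``the same complementary subnetworks'' means: for each common $\iota o$-simple path $S_i$, the complementary subnetwork $C_i$ of the first network and $D_i$ of the second are identical as labeled digraphs, with the same nodes and the same arrows. Under this identification the admissible systems attached to $C_i$ and $D_i$ depend on exactly the same coupling and self-coupling variables in exactly the same pattern, so $J_{C_i}$ and $J_{D_i}$ are literally the same matrix of indeterminates. Hence $\det(J_{C_i}) = \det(J_{D_i})$, which in particular gives equality up to sign.

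With both hypotheses of the converse direction of Theorem~\ref{T:core_equivalent} verified, that theorem yields core equivalence and the proof concludes. I do not expect a real obstacle here; the only point needing care is the bookkeeping in the middle step, namely confirming that identical \emph{labeled} complementary subnetworks produce identical Jacobian matrices rather than merely isomorphic ones. Once the common labeling is fixed this is immediate from the definition of the admissible Jacobian. It is worth noting that the corollary is stated as a sufficient condition only, since distinct complementary subnetworks can still share the same determinant up to sign, so the converse of the corollary need not hold.
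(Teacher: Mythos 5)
Your proposal is correct and follows the same route as the paper, which proves this corollary by direct appeal to the $\Leftarrow$ direction of Theorem~\ref{T:core_equivalent}. The extra bookkeeping you supply (identical labeled complementary subnetworks yield identical Jacobian matrices, hence equal determinants) is exactly the observation the paper leaves implicit.
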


\begin{proof}
Follows directly from Theorem \ref{T:core_equivalent}. \qed
\end{proof}

\Section{Infinitesimal homeostasis classes}
\label{S:classification2}

In this section we prove that there are two classes of infinitesimal homeostasis: {\em appendage} and {\em structural}.  See Definition~\ref{D:homeostasis_types} and Theorem~\ref{L:self_couplings}.  The section ends with a description of a `normal form' for appendage and structural homeostasis blocks.  These `normal forms' are given in Theorem~\ref{lem:associated_network}.  

Section~\ref{S:appendage_blocks} discusses graph theoretic attributes of appendage homeostasis and Section~\ref{S:structural_blocks} discusses graph theoretic attributes of structural homeostasis.  This material leads to the conclusions in Section~\ref{S:CC} where it is shown that each structural block is generated by two adjacent super-simple nodes and each appendage block is generated by a path component of the subnetwork of appendage nodes.

Recall from \eqref{eq:FK_form} that we can associate with each homeostasis matrix $H$ a set of $m$ irreducible square blocks $B_1,\ldots,B_m$ where 
\begin{equation} \label{eq:FK_form_repeat}
P \PH Q = \Matrixc{
B_1 & *   & \cdots & * \\
0   & B_2 & \cdots & * \\
\vdots &  &  & \vdots \\
0   &  0 & \cdots & B_m
}   
\end{equation}
and $P$ and $Q$ are $(n+1)\times(n+1)$ permutation matrices.  

\begin{lemma} \label{L:rows_summands}
Let $H$ be an $(n+1)\times(n+1)$ homeostasis matrix and let $P$ and $Q$ be $(n+1)\times (n+1)$ permutation matrices. Then the rows (and columns) of $PHQ$ are the same as the rows (and columns) of $H$ up to reordering. 
Moreover, the set of entries of $H$ are identical with the set of entries of $PHQ$.
\end{lemma}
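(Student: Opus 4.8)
The plan is to track the two permutation matrices separately, using the elementary fact that left-multiplication by a permutation matrix permutes the rows of $H$ while right-multiplication permutes its columns. Since each row and each column of a permutation matrix contains exactly one nonzero entry, equal to $1$, there exist permutations $\pi$ and $\tau$ of $\{1,\ldots,n+1\}$ such that the single entrywise identity
\[
(PHQ)_{ij} = H_{\pi(i),\,\tau(j)}
\]
holds for all $i,j$. This follows immediately from the definition of matrix multiplication, and it is the only computation in the argument; everything else is bookkeeping.

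From this formula I would first read off the rows. Fixing $i$ and letting $j$ range over $\{1,\ldots,n+1\}$, the index $\tau(j)$ also ranges over all of $\{1,\ldots,n+1\}$ because $\tau$ is a bijection; hence the entries appearing in row $i$ of $PHQ$ are exactly $\{H_{\pi(i),k}: k=1,\ldots,n+1\}$, which is precisely the set of entries of row $\pi(i)$ of $H$, merely permuted within the row by $\tau$. Thus $i\mapsto\pi(i)$ is a bijection matching each row of $PHQ$ with a row of $H$ having the same set of entries, which is exactly the meaning of ``the same up to reordering.'' The argument for columns is identical, with the roles of $\pi$ and $\tau$ interchanged.

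Finally, for the claim about the full set of entries, I would observe that $(i,j)\mapsto(\pi(i),\tau(j))$ is a bijection of $\{1,\ldots,n+1\}^2$ onto itself, since $\pi$ and $\tau$ are bijections. Hence as $(i,j)$ ranges over all index pairs, the right-hand side $H_{\pi(i),\tau(j)}$ runs over every entry of $H$ exactly once, so the (multi)set of entries of $PHQ$ coincides with that of $H$. The result is entirely formal and there is no genuine obstacle; the only point requiring care is the interpretation of ``up to reordering,'' namely that the common column permutation $\tau$ shuffles the entries \emph{within} each row identically across rows but leaves the unordered set of entries of that row unchanged.
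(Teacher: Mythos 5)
Your proof is correct and follows essentially the same route as the paper: the paper likewise splits the effect of $P$ (permuting the set of rows) from the effect of $Q$ (permuting entries within each row), and deduces the statement about the set of entries as a consequence. Your version merely makes the bookkeeping explicit via the identity $(PHQ)_{ij}=H_{\pi(i),\tau(j)}$, which is a fine (if more detailed) way to present the same argument.
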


\begin{proof}
The set of rows of $PH$ are identical to the set of rows of $H$.  A row of $HQ$ contains the same entries as the corresponding row of $H$---but with entries permuted.  The second statement follows from the first.
\qed
\end{proof}

Recall that the entries of the homeostasis matrix $H$, defined in \eqref{xo'_reduced2} for an admissible system of a given input-output network $\GG$, appear in three types: $0$, coupling, and self-coupling.  The following lemma is important in our discussion of homeostasis types.

\begin{lemma}
The number of self-coupling entries in each diagonal block $B_\eta$ is an invariant of the homeostasis matrix $H$. 
\end{lemma}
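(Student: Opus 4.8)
The plan is to reduce the statement to the uniqueness of the Frobenius--K\"onig decomposition, read at the level of the row-index and column-index sets of each block, together with the elementary observation that the number of self-couplings in a block is completely determined by those two index sets.

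First I would fix the bookkeeping. By Lemma~\ref{L:rows_summands}, passing from $H$ to $PHQ$ only reorders rows and columns, so every row of $PHQ$ carries the node label of a unique row of $H$ (a node in $\{\rho_1,\dots,\rho_n,o\}$) and every column carries the label of a unique column of $H$ (a node in $\{\iota,\rho_1,\dots,\rho_n\}$). Consequently each diagonal block $B_\eta$ in \eqref{eq:FK_form_repeat} is exactly the submatrix $H[R_\eta,C_\eta]$ cut out by a labelled row-set $R_\eta\subseteq\{\rho_1,\dots,\rho_n,o\}$ and a labelled column-set $C_\eta\subseteq\{\iota,\rho_1,\dots,\rho_n\}$. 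A self-coupling entry $f_{\rho_k,x_{\rho_k}}$ sits at the position indexed by row $\rho_k$ and column $\rho_k$ of $H$, so it is an entry of $B_\eta$ precisely when $\rho_k\in R_\eta$ and $\rho_k\in C_\eta$. Hence the number of self-couplings in $B_\eta$ equals the number of nodes $\rho_k$ lying in both $R_\eta$ and $C_\eta$; in particular it is a function of the pair $(R_\eta,C_\eta)$ alone.

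The crux is therefore to show that the collection $\{(R_\eta,C_\eta)\}_{\eta=1}^m$ is intrinsic to $H$, i.e. independent of the permutations $P,Q$. For a matrix with total support this is precisely the uniqueness of the fully indecomposable components (Theorem~\ref{T:block} and \cite{BR91}): ``the same block'' then means the same labelled row-set and column-set. The main obstacle is that $H$ need not have total support. Indeed, by the determinant formula (Theorem~\ref{L:summand_form}) every nonzero summand of $\det(H)$ has the form $F_{S}\,G_{C_S}$, and the self-coupling $f_{\rho_k,x_{\rho_k}}$ of a super-simple node $\rho_k$ appears in no such summand, since a super-simple node lies on every $\iota o$-simple path $S$ and therefore never in a complementary subnetwork $C_S$; that entry lies in no nonzero term of the determinant, and the textbook uniqueness statement cannot be quoted verbatim.

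I would dispose of this obstacle by passing to the support. Let $H'$ be obtained from $H$ by zeroing every entry that lies in no perfect matching of the bipartite graph of $H$; then $H'$ has total support and the same nonzero determinant terms as $H$. The key point is that in any form \eqref{eq:FK_form_repeat} such a ``non-matching'' entry can only occupy a strictly upper off-diagonal block: each irreducible diagonal block is fully indecomposable, hence has total support and a nonzero transversal, so any entry lying inside a diagonal block extends---using nonzero transversals of the remaining diagonal blocks---to a perfect matching of $H$ and is thus itself a matching entry. Therefore zeroing the non-matching entries changes only off-diagonal blocks, so the block-triangular form of $H$ restricts to one of $H'$ with identical diagonal blocks and identical pairs $(R_\eta,C_\eta)$. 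Uniqueness for the total-support matrix $H'$ then forces these pairs to be the same for every choice of $P,Q$, and combining this with the second paragraph shows that the number of self-couplings in each $B_\eta$ depends only on the invariant data $(R_\eta,C_\eta)$, hence is an invariant of $H$.
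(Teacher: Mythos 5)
Your proof is correct, and its skeleton is the same as the paper's: both arguments reduce the lemma to the uniqueness, up to permutation, of the irreducible diagonal blocks in \eqref{eq:FK_form_repeat}, and then use Lemma~\ref{L:rows_summands} to note that a block cut out by a labelled row-set $R_\eta$ and column-set $C_\eta$ contains exactly $|R_\eta\cap C_\eta|$ self-coupling positions, a count that permutations cannot change. The difference lies in how the uniqueness step is discharged. The paper simply asserts that any two decompositions satisfy $\tilde B_{M(\nu)}=P_\nu B_\nu Q_\nu$, leaning on the Frobenius--K\"onig results of \cite{BR91} cited in the proof of Theorem~\ref{T:block}. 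You prove it, and in doing so you isolate a genuine subtlety that the paper's citation glosses over: the textbook uniqueness theorem is stated for matrices with total support, and $H$ generally lacks total support (your example is apt: by the determinant formula of Theorem~\ref{L:summand_form}, the self-coupling of a super-simple regulatory node is a nonzero entry of $H$ lying on no nonzero diagonal). Your bridge --- every entry of a fully indecomposable diagonal block extends to a nonzero diagonal of $H$, so zeroing the non-matching entries alters only the strictly upper off-diagonal blocks and yields a total-support matrix with the same labelled components $(R_\eta,C_\eta)$ --- is the standard way to pass from the block-triangular case to the direct-sum case, and it makes the invariance claim airtight. In short: same approach, but your version supplies the justification that the paper delegates to a reference.
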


\begin{proof}
Suppose $H$ is transformed in two different ways to upper triangular form \eqref{eq:FK_form_repeat}. Then one obtains two sets of diagonal blocks $B_1,\ldots, B_m$ and $\tilde{B}_1,\ldots\tilde{B}_{\tilde{m}}$.  Since one set of blocks is transformed into the other by a permutation, it follows that the number of blocks in each set is the same.  Moreover, the blocks are related by
\[
\tilde{B}_{M(\nu)} = P_\nu B_\nu Q_\nu
\]  
where $M$ is a permutation of the index sets and for each $\nu$, $P_\nu$ and $Q_\nu$ are permutation matrices.  It follows from Lemma~\ref{L:rows_summands} that the size and the number of self-coupling entries of the square matrices $\tilde{B}_{M(\nu)}$ and $B_\nu$ are identical.
\qed
\end{proof}

Definition~\ref{D:classes} defined two homeostasis classes. We repeat that definition here but with more specificity.

\begin{definition} \rm \label{D:homeostasis_types}
Let $B_\eta$ be an irreducible $k\times k$ square block associated with the $(n+1)\times(n+1)$ homeostasis matrix $H$ in \eqref{eq:FK_form_repeat}.  The homeostasis class associated with $B_\eta$ is {\em appendage} if $B_\eta$ has $k$ self-coupling entries and {\em structural} if $B_\eta$ has $k-1$ self-coupling entries.  
\end{definition}

Theorem~\ref{L:self_couplings} shows that each square block is either appendage or structural. 

\begin{theorem} \label{L:self_couplings}
Let $\PH$ be an $(n+1)\times (n+1)$ homeostasis matrix and let $B_\eta$ be a $k\times k$ square diagonal block of the matrix $PHQ$ given in \eqref{eq:FK_form_repeat}, where $P$ and $Q$ are permutation matrices and $k\geq 1$.  Then  $B_\eta$ has either $k-1$ self-couplings or $k$ self-couplings.
\end{theorem}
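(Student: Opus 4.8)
The plan is to argue entirely at the level of the zero/nonzero pattern of $H$, treating each self-coupling $f_{\rho,x_\rho}$ and each coupling $f_{\rho,x_\tau}$ as an independent indeterminate; this is precisely the setting in which the Frobenius--K\"onig decomposition \eqref{eq:FK_form_repeat} is performed. Write $R_\eta$ for the set of (head) nodes indexing the rows of $B_\eta$ and $C_\eta$ for the set of (tail) nodes indexing its columns. The form \eqref{eq:K} shows that a self-coupling entry of $B_\eta$ occurs exactly for each node lying in $R_\eta\cap C_\eta$, so $B_\eta$ has $|R_\eta\cap C_\eta|$ self-couplings; since $|R_\eta|=|C_\eta|=k$, the theorem is equivalent to the single inequality $|R_\eta\setminus C_\eta|\le 1$. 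I will call the nodes of $R_\eta\setminus C_\eta$ the \emph{foreign} row-nodes of the block and bound their number. Recall that $\iota$ is a column-node but never a row-node, and $o$ is a row-node but never a column-node, while every regulatory node is potentially both.

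First I would fix a nonzero term of $\det(H)$ (one exists unless $h\equiv 0$), equivalently a transversal $\mu$ matching each row-node to a column-node along nonzero entries. By Lemma~\ref{L:summand_factors} the couplings used by $\mu$ assemble into a unique $\iota o$-simple path $S\colon \iota\to j_1\to\cdots\to j_r\to o$, while the remaining matched entries are either self-couplings (fixed points of $\mu$) or couplings forming directed cycles among regulatory nodes. A triangularity argument applied to \eqref{eq:FK_form_repeat}---the rows of the last block meet nonzero entries only in its own columns, and one peels off blocks inductively---shows that every transversal restricts to a perfect matching within each diagonal block $B_\eta$. In particular the two node-copies of any self-coupling fixed point, and of any regulatory cycle, lie in one and the same block, so every such \emph{balanced} node sits in $R_\eta\cap C_\eta$ for a single $\eta$. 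Consequently a foreign row-node must be matched by $\mu$ to a different node, hence must lie on the simple path $S$ (or be the output node $o$, which is never a column-node).

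The key step, and the main obstacle, is to show that $S$ meets the diagonal blocks monotonically. Each interior path node $j_i$ is a row-node of its block, say block $\eta_i$, and a column-node of the block containing the row-copy of $j_{i+1}$, say $\eta_{i+1}$ (because the transversal entry in row $j_{i+1}$, column $j_i$ lies in a diagonal block). The self-coupling indeterminate $f_{j_i,x_{j_i}}$ then occupies the block position $(\eta_i,\eta_{i+1})$ and is nonzero in the pattern, so the upper-triangular form \eqref{eq:FK_form_repeat} forces $\eta_i\le\eta_{i+1}$. Traversing $S$ from $\iota$ to $o$ therefore yields a non-decreasing sequence of block indices, so each block is visited along one contiguous segment of $S$. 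A foreign row-node of $B_\eta$ is exactly a place where this sequence strictly increases out of block $\eta$ (together with the terminal node $o$ in the block containing it), and contiguity permits at most one such exit per block. This gives $|R_\eta\setminus C_\eta|\le 1$, so each $B_\eta$ has either $k-1$ or $k$ self-couplings; the blocks met by $S$ are precisely the structural ones, and the remaining blocks, carrying only self-couplings and regulatory cycles, are appendage. The delicate points to execute carefully are the bookkeeping that identifies foreign row-nodes with the exit points of $S$, the restriction-to-blocks property of transversals, and the pattern-level nonvanishing of the diagonal self-couplings that drives the monotonicity.
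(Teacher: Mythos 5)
Your argument is correct in outline but takes a genuinely different route from the paper's. The paper's proof is a short double-counting argument: by Lemma~\ref{L:rows_summands}, $P\PH Q$ has exactly one self-coupling entry in every row but one and in every column but one, and a case analysis on the block column containing $B_\eta$ in \eqref{eq:PHQ} shows that $\ell\le k-2$ self-couplings in $B_\eta$ would force two self-couplings into a single row of $P\PH Q$, a contradiction. You instead work through the transversal structure of $\det(\PH)$: decompose a nonzero term into an $\iota o$-simple path plus fixed points plus regulatory cycles (Lemma~\ref{L:summand_factors}), note that every transversal restricts to a perfect matching inside each diagonal block, and then use the nonzero self-coupling pattern entries to force the block indices to be non-decreasing along the path. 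This is heavier machinery, but it buys a sharper structural picture for free: the structural blocks are exactly the ones met by every $\iota o$-simple path, each contributing one ``exit'' node, while the appendage blocks carry only fixed points and regulatory cycles --- facts the paper only recovers later (cf.\ Theorem~\ref{T:KK_appendage} and Proposition~\ref{T:adjacent}).

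One step is under-justified as written. You claim that the restriction-to-blocks property already places both node-copies of every node on a regulatory cycle in the same block. For a fixed point this is immediate, but for a cycle $\rho_1\to\cdots\to\rho_p\to\rho_1$ the restriction property only says that the matched entry in row $\rho_{i+1}$, column $\rho_i$ lies in a diagonal block, i.e.\ that the block of row $\rho_{i+1}$ equals the block of column $\rho_i$; it does not by itself identify the block of row $\rho_i$ with the block of column $\rho_i$. You must run your own monotonicity argument around the cycle: the nonzero pattern entry $f_{\rho_i,x_{\rho_i}}$ forces the block of row $\rho_i$ to be $\le$ the block of column $\rho_i$, which equals the block of row $\rho_{i+1}$, and chaining these inequalities around the closed cycle forces equality throughout. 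With that insertion every foreign row-node lies on the simple path (or is $o$), and your contiguity argument caps their number at one per block, completing the proof.
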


\begin{proof}
Note that either
\begin{equation} \label{eq:PHQ}
PHQ = \Matrixc{B_\eta & D \\ 0 & E}
\quad\text{or}\quad
PHQ = \Matrixc{A & B & C \\ 0 & B_\eta & D \\ 0 & 0 & E }                                                                                     
\end{equation}
where $A$ is an nonempty square matrix. In the first case in \eqref{eq:PHQ} $B_\eta$ has single self-coupling entries in each of either $k-1$ or $k$ columns.   

We assume the second case in \eqref{eq:PHQ}.  From Lemma \ref{L:rows_summands} it follows that $PHQ$ has exactly one row and exactly one column without a self-coupling entry.  Hence, if $B_\eta$ has more than $k$ self-couplings, then $B_\eta$ and hence $\PH$ have a row with at least two self-couplings, which is not allowed. 

We show by contradiction that $B_\eta$ has at least $k-1$ self-couplings. Suppose $B_\eta$ has $\ell\leq k-2$ self-coupling entries. Note that there are $\ell$ self-couplings in $B_\eta$ by assumption, and there are no self-couplings in the 0 block. Let $b$ be the number of self-couplings in $B$.  Then $b+\ell$ is the number of self-couplings in $[B \; B_\eta \; 0]^t$.  Now, either every column or every column but one in $[B \; B_\eta \; 0]^t$ has a self-coupling.  Therefore,
\[
k - 1 \leq b+\ell \leq k
\qquad\text{or}\qquad
k- \ell - 1 \leq b \leq k-\ell
\]
We consider the two cases:
\begin{itemize}
\item Assume $b = k-\ell-1$.  Then there exists one column in $[B \; B_\eta \; 0]^t$ that has no self-couplings.  Therefore, every column in $A$ has a self-coupling. Since $B$ has a self-coupling, it follows that one row in $[A\; B\; C]$ has two self-couplings -- a contradiction. 

\item Assume $b = k -\ell$. Since $[B \; B_\eta \; 0]^t$ has self-couplings in every column, it follows that $A$ has a self-coupling in every column save at most one.  It then follows that $A$ has a self-coupling in every row save at most one.  Since $k-\ell\geq 2$, at east one row in $[A\; B\; C]$  has two self-couplings -- also a contradiction. 
\end{itemize}
Therefore, $\ell =  k-1$ or $\ell = k$.
\qed
\end{proof}

We build on Theorem~\ref{L:self_couplings} by putting $B_\eta$ into a standard form of type \eqref{K_normal_form_k}.  Its proof uses the next two lemmas about shapes and summands.   A {\em shape} $\EE$ is a subspace of $m\times n$ matrices $E=(e_{ij})$, where $e_{ij} = 0$ for some fixed subset of indices $i,j$.  A square shape $\DD$ is {\em nonsingular} if $\det(D) \neq 0$ for some $D\in\DD$.  A {\em summand} of a nonsingular shape $\DD$ is a nonzero product in $\det(D)$ for some $D\in\DD$. 

\begin{lemma}  \label{L:rows_summands_c}
The nonzero summands of $\det(PHQ)$ and $\det(H)$ are identical.
\end{lemma}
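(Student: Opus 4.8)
The plan is to combine the Leibniz expansion of the determinant with the explicit way permutation matrices reindex entries. First I would record how $P$ and $Q$ act: left-multiplication by a permutation matrix permutes rows and right-multiplication permutes columns, so there are permutations $\alpha$ and $\beta$ of $\{1,\ldots,n+1\}$ with $(PHQ)_{ij}=H_{\alpha(i),\beta(j)}$ for all $i,j$. This refines the purely entrywise statement already proved in Lemma~\ref{L:rows_summands}, which told us that $H$ and $PHQ$ have the same set of entries but not yet that the same \emph{products} occur in the two determinant expansions.

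Next I would write out the Leibniz formula. A nonzero summand of $\det(PHQ)$ is a product $\prod_{i}(PHQ)_{i,\pi(i)}=\prod_i H_{\alpha(i),\beta(\pi(i))}$ indexed by a permutation $\pi$. Setting $k=\alpha(i)$ and defining $\tau=\beta\circ\pi\circ\alpha^{-1}$, the product rewrites as $\prod_k H_{k,\tau(k)}$, which is precisely the summand of $\det(H)$ associated to the permutation $\tau$. Since $\pi\mapsto\tau=\beta\circ\pi\circ\alpha^{-1}$ is a bijection of the symmetric group onto itself, the terms of the two expansions are matched one-to-one, and matched terms are the identical product of the identical entries of $H$. (Signs play no role here: the summands are compared as products of entries, and $\det P\,\det Q=\pm1$ at most changes the determinants by a global sign.)

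Finally I would check that this matching preserves nonvanishing, so that it genuinely identifies the \emph{nonzero} summands. The entries of $H$ are distinct symbolic coupling and self-coupling strengths, and each factor $f_{\rho_k,x_{\tau_k}}$ records both its row and its column position; hence every monomial arises from a unique permutation and no cancellation between terms can occur. Consequently a product is a nonzero summand of $\det(PHQ)$ exactly when the same product is a nonzero summand of $\det(H)$, because they are literally the same product of the same entries. This yields the equality of the two sets of nonzero summands.

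The only point that requires care — the main, if modest, obstacle — is fixing the direction in which $P$ and $Q$ permute indices so that the composition $\tau=\beta\circ\pi\circ\alpha^{-1}$ is formed correctly; once the reindexing $(PHQ)_{ij}=H_{\alpha(i),\beta(j)}$ is pinned down, the bijection on permutations and the identification of monomials follow immediately.
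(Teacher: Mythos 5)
Your proof is correct, and it takes a more explicit route than the paper's. The paper's argument is a one-liner: since $\det(P)=\det(Q)=\pm 1$, we have $\det(PHQ)=\pm\det(H)$ as polynomials in the symbolic entries, and therefore the two determinants have the same nonzero summands. That argument implicitly relies on exactly the point you make explicit at the end: because the entries $f_{\rho_k,x_{\tau_k}}$ are distinct symbols that record their row and column, each monomial in the Leibniz expansion determines its permutation uniquely, so no cancellation can occur and equality of the polynomials (up to sign) really does force equality of the sets of monomials. Your version instead builds the term-by-term bijection directly, via $(PHQ)_{ij}=H_{\alpha(i),\beta(j)}$ and the conjugation $\pi\mapsto\tau=\beta\circ\pi\circ\alpha^{-1}$ on the symmetric group, which matches each product $\prod_i (PHQ)_{i,\pi(i)}$ with the literal same product $\prod_k H_{k,\tau(k)}$. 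What the paper's route buys is brevity; what yours buys is that the no-cancellation step is stated rather than assumed, and the correspondence between summands is exhibited rather than inferred, which is arguably closer to how Lemma~\ref{L:rows_summands} and Lemma~\ref{L:summand_products} are used downstream. Both are complete proofs of the statement.
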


\begin{proof}
Since $\det(P) = \det(Q) = \pm 1$, it follows that $\det(PHQ) = \pm\det(H)$. Hence, the nonzero summands must be identical.
\qed
\end{proof}

\begin{lemma} \label{L:summand_products}
Suppose $\BB$ and $\CC$ are nonsingular shapes. Let $\EE$ be the shape whose size is chosen so that $\DD$ is the shape consisting of matrices
\[
D = \Matrix{B & E \\ 0 & C}
\]
where $B \in\BB$, $C\in\CC$, $E\in\EE$.
Then each summand of $\DD$ is the product of a summand of $\BB$ with a summand of $\CC$.
\end{lemma}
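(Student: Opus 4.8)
The plan is to expand $\det(D)$ by the Leibniz formula and read off the combinatorial effect of the zero block in the lower-left corner. Write $B$ as a $p\times p$ shape, $C$ as a $q\times q$ shape, and $E$ as the $p\times q$ shape, so that $D$ is $(p+q)\times(p+q)$ with entries $d_{ij}$. By definition a summand of $\DD$ is a nonzero product $\prod_{i=1}^{p+q} d_{i,\pi(i)}$ arising from some $D\in\DD$ and some permutation $\pi$ of $\{1,\ldots,p+q\}$. The nonsingularity of $\BB$ and $\CC$ guarantees that $\DD$ is nonsingular: choosing $B,C$ with nonzero determinant and $E=0$ gives $\det D=\det B\,\det C\neq 0$, so summands of $\DD$ exist and the statement is not vacuous.

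The key step is to show that any permutation $\pi$ contributing a nonzero product must respect the block decomposition. Index rows and columns so that $1,\ldots,p$ are the $B$-rows/columns and $p+1,\ldots,p+q$ are the $C$-rows/columns. Because the lower-left block is identically zero, every entry $d_{ij}$ with $i>p$ and $j\le p$ vanishes. Thus if $\prod_i d_{i,\pi(i)}\neq 0$, then $\pi(i)>p$ for every $i>p$; since $\pi$ is a bijection and there are exactly $q$ indices greater than $p$, the restriction of $\pi$ to $\{p+1,\ldots,p+q\}$ is a permutation $\pi_2$ of that set, and consequently $\pi$ restricts to a permutation $\pi_1$ of $\{1,\ldots,p\}$. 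In other words $\pi=\pi_1\oplus\pi_2$ is a block permutation.

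Given this decomposition, the product factors immediately:
\[
\prod_{i=1}^{p+q} d_{i,\pi(i)} = \Big(\prod_{i=1}^{p} b_{i,\pi_1(i)}\Big)\Big(\prod_{i=1}^{q} c_{i,\pi_2(i+p)-p}\Big).
\]
Since the full product is nonzero, each of the two factors is nonzero. The first factor is one of the Leibniz terms of $\det(B)$ for the matrix $B\in\BB$ obtained by restriction, hence it is a summand of $\BB$; likewise the second factor is a summand of $\CC$. This exhibits the chosen summand of $\DD$ as a product of a summand of $\BB$ and a summand of $\CC$, as required.

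I expect the only delicate point to be bookkeeping: confirming that the zero pattern \emph{forces} $\pi$ to split (rather than merely permitting it) and that the two resulting factors inherit exactly the prescribed zero patterns of $\BB$ and $\CC$, so that they qualify as genuine summands of those shapes. The signs in the Leibniz expansion play no role here, since a summand is defined as an unsigned nonzero product; they would matter only for the sharper identity $\det(D)=\pm\det(B)\det(C)$, which this lemma does not require.
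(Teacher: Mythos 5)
Your proof is correct and follows essentially the same route as the paper's: both arguments rest on the observation that the zero block forces any permutation contributing a nonzero Leibniz term to split into a block permutation $\pi_1\oplus\pi_2$, so the term factors as a nonzero product over $B$-entries times a nonzero product over $C$-entries. Your way of concluding that each factor is a summand (a nonzero product of entries is nonzero, and the diagonal blocks of any $D\in\DD$ lie in $\BB$ and $\CC$ by definition of the shape) is if anything a little more direct than the paper's construction of auxiliary matrices with prescribed nonzero entries, but it is not a different method.
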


\begin{proof}
Suppose ${\sf d}$ is a summand of $\DD$.  The product ${\sf d}$ cannot have any entries in the $0$ block of $\DD$.  Hence, ${\sf d} = {\sf bc}$.  Moreover, there is a matrix  $B\in\BB$ such that $\det(B) = {\sf b}$ and a matrix  $C\in\CC$ such that $\det(C) = {\sf c}$.  In fact, we can assume that the nonzero entries of $B$ are precisely the entries in the nonzero product ${\sf b}$. Similarly for ${\sf c}$. Since $\det(D)\neq 0$ and $\det(D) = \det(B)\det(C)$, it follows that $\det(B) = {\sf b} \neq 0$ and $\det(C) = {\sf c}\neq 0$.  Therefore, ${\sf b}$ and ${\sf c}$ are  summands of $\BB$ and $\CC$, respectively. Conversely, assume that ${\sf b}$ and ${\sf c}$ are summands and conclude that ${\sf d}$ is also a summand.   
\qed
\end{proof}

It follows from Lemma~\ref{L:rows_summands} that the number of each type of entry in $PHQ$ is the same as the number in $H$.  Moreover, generically, the coupling and self-coupling entries are nonzero.  It follows from \eqref{xo'_reduced2} that the $n$ superdiagonal entries of $H$ are self-coupling entries and these are the only self-coupling entries in $H$.  In addition, $H$ has one self-coupling entry in each row except the last  row, and one self-coupling in each column except the first column.  By Lemma~\ref{L:rows_summands} there are exactly $n$ self-coupling entries in $PHQ$ with one in each row but one, and one in each column but one.  We use these observations in the proof of Theorem~\ref{lem:associated_network}.

\begin{theorem} \label{lem:associated_network}
Let $\PH$ be an $(n+1)\times (n+1)$ homeostasis matrix.  Suppose $\det(H)$ has a degree $k \geq 1$ irreducible factor $\det(B_\eta)$, where $B_\eta$ be a $k\times k$ block diagonal submatrix of the matrix $PHQ$ given in \eqref{eq:FK_form_repeat} and $P$ and $Q$ are permutation matrices.  If $B_\eta$  has $k-1$ self-coupling entries, then we can assume that $B_\eta$ has the form
\begin{equation} \label{eq:K_matrices(k-1)}
\Matrixc{ f_{\rho_1,x_{\rho_1}} & \cdots & f_{\rho_1,x_{\rho_{k-1}}} & f_{\rho_1,x_\ell} \\ 
\vdots & \ddots & \vdots  & \vdots \\
f_{\rho_{k-1},x_{\rho_1}} & \cdots & f_{\rho_{k-1},x_{\rho_{k-1}}} & f_{\rho_{k-1},x_\ell} \\ 
f_{j,x_{\rho_1}} & \cdots & f_{j,x_{\rho_{k-1}}} & f_{j,x_\ell}}
\end{equation}
and if $B_\eta$ has $k$ self-coupling entries, then we can assume that $B_\eta$ has the form
\begin{equation} \label{eq:K_matrices(k)}
\Matrixc{f_{\rho_1,x_{\rho_1}} & \cdots & f_{\rho_1,x_{\rho_k}} \\ 
\vdots & \ddots & \vdots \\
f_{\rho_k,x_{\rho_1}} & \cdots & f_{\rho_k,x_{\rho_k}}}
\end{equation}
\end{theorem}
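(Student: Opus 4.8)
The plan is to read off the positions of the self-coupling entries of $B_\eta$ as a partial matching between its rows and columns, and then to permute rows and columns within the block so that this matching lies on the diagonal. Throughout I use that, by Lemma~\ref{L:rows_summands} and the form \eqref{eq:K}, the rows of $B_\eta$ correspond to distinct \emph{head} nodes $\rho_1,\ldots,\rho_k$ and the columns to distinct \emph{tail} nodes $\tau_1,\ldots,\tau_k$, with the $(i,j)$ entry equal to $f_{\rho_i,x_{\tau_j}}$. A self-coupling entry is exactly a position with $\rho_i=\tau_j$. Since the head nodes are pairwise distinct, a fixed column (tail node $\tau_j$) can be a self-coupling in at most one row; since the tail nodes are pairwise distinct, a fixed row can be a self-coupling in at most one column. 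Hence the self-coupling entries occupy \emph{at most one position in each row and at most one position in each column}, i.e. they form a partial matching of rows to columns. Recall also that permuting the rows (resp. columns) of $B_\eta$ amounts to enlarging $P$ (resp. $Q$) by a permutation supported on the index set of the block; this preserves the block-triangular shape \eqref{eq:FK_form_repeat} and changes $\det(B_\eta)$ only by a sign, so such permutations are free to use.

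First I would treat the case of $k$ self-couplings. With at most one self-coupling per row and $k$ of them among $k$ rows, every row carries exactly one self-coupling, and likewise every column. The matching is therefore a bijection, pairing the head node $\rho_i$ with the unique column whose tail node equals $\rho_i$. Reordering the columns so that this paired column becomes the $i$-th column puts $f_{\rho_i,x_{\rho_i}}$ in position $(i,i)$ for every $i$, and after recognizing that the tail label of the $i$-th column is now $\rho_i$ this is exactly \eqref{eq:K_matrices(k)}.

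For the case of $k-1$ self-couplings the same counting shows that exactly one row and exactly one column fail to meet the matching; write $j$ for the head node of that row and $\ell$ for the tail node of that column, and let $\rho_1,\ldots,\rho_{k-1}$ denote the matched head nodes. Moving the $j$-row to the last row, the $\ell$-column to the last column, and ordering the remaining rows and columns so that the matched pair for $\rho_i$ sits in position $(i,i)$ places $f_{\rho_i,x_{\rho_i}}$ on the first $k-1$ diagonal entries, with last column $f_{\cdot,x_\ell}$ and last row $f_{j,x_\cdot}$; this is precisely \eqref{eq:K_matrices(k-1)}. Here the nodes $\rho_1,\ldots,\rho_{k-1}$ are distinct from $\ell$ (distinct tail nodes) and from $j$ (distinct head nodes), and $j\neq\ell$: were $j=\ell$, the entry in the $j$-row and $\ell$-column would be $f_{j,x_j}$, a self-coupling, contradicting that this row and column carry no self-coupling. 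The main point requiring care is exactly this bookkeeping---establishing that the self-couplings form a partial matching from the distinctness of the head and tail labels, and then pinning down in the $k-1$ case that the unmatched row and column are labelled by genuinely new, distinct nodes $j$ and $\ell$; once these combinatorial facts are in hand, the normal forms follow by permutation alone. (The nonsingularity of $B_\eta$ guaranteed by the Frobenius--K\"onig decomposition, together with Lemmas~\ref{L:rows_summands_c} and~\ref{L:summand_products}, ensures $\det(B_\eta)\not\equiv 0$ so that the diagonal of self-couplings is actually a nonzero summand, but this is not needed for the shape itself.)
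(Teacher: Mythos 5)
Your proposal is correct and follows essentially the same route as the paper's proof: both arguments observe that the self-coupling entries of $B_\eta$ must occupy distinct rows and distinct columns (because the row and column labels inherited from $H$ are distinct node names), and then apply row/column permutations supported on the block --- which only enlarge $P$ and $Q$ and preserve the decomposition \eqref{eq:FK_form_repeat} --- to place the self-couplings on the diagonal, yielding \eqref{eq:K_matrices(k)} in the $k$-self-coupling case and \eqref{eq:K_matrices(k-1)} in the $(k-1)$-self-coupling case. Your write-up is somewhat more explicit than the paper's on the partial-matching bookkeeping and on why the unmatched row label $j$ and column label $\ell$ are distinct new nodes, but the underlying argument is the same.
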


\begin{proof}
Theorem \ref{L:self_couplings} implies that $B_\eta$ has either $k-1$ or $k$ self-couplings.  
Since $B_\eta$ is a $k\times k$ submatrix of $PHQ$ (a matrix that has  the same set of rows and the same set of columns as $H$), $B_\eta$ must consists of $k^2$ entries of the form
\begin{equation} \label{e:normal_K}
B_\eta = \Matrixc{f_{\rho_1,x_{\tau_1}} & \cdots & f_{\rho_1,x_{\tau_k}} \\ 
\vdots & \ddots & \vdots \\
f_{\rho_k,x_{\tau_1}} & \cdots & f_{\rho_k,x_{\tau_k}}}
\end{equation}
Since self-couplings must be in different rows and different columns we can use permutation matrices of the form  
\[
\Matrixc{I_p & 0 & 0 \\ 0 & S & 0 \\ 0 & 0 & I_q}
\]
where $S$ is a $k\times k$ permutation matrix to put $B_\eta$ in the form:
\begin{equation} \label{K_normal_form_k}
\Matrixc{ \ast & \text{sc} &\cdots &\ast\\  
\ast & \ast & \ddots & \vdots \\
\vdots & \vdots &\vdots &\text{sc} \\
\ast & \cdots & \cdots & \ast}
\quad\mbox{or} \quad
\Matrixc{ \text{sc} & \ast &\ast \\  
\ast & \ddots & \ast\\
\ast & \ast &\text{sc}}
\end{equation}
where $sc$ denotes a self-coupling entry and $\ast$ denotes either a $0$ entry or a coupling entry.  Note that we could just as well have put the self-coupling entries along the diagonal in \eqref{K_normal_form_k} (left).

If $B_\eta$ has $k-1$ self-couplings, as in \eqref{K_normal_form_k} (left), then $\rho_k \neq \tau_k$ and $\rho_j = \tau_j$ for $1\leq j \leq k-1$.  If $B_\eta$ has $k$ self-couplings, as in \eqref{K_normal_form_k} (right), then we may assume $\rho_j = \tau_j$ for all $j$. It follows that the matrices in \eqref{K_normal_form_k} have the form \eqref{eq:K_matrices(k-1)} or \eqref{eq:K_matrices(k)}.
\qed
\end{proof}

\begin{remark} \rm \label{rm:associated_network}
We use Theorem~\ref{lem:associated_network} to associate a subnetwork $\KK_\eta$ with each homeostasis $k\times k$ block $B_\eta$.  This construction implements the one in Definition~\ref{D:K_eta} for appendage and structural homeostasis blocks.   The network $\KK_\eta$ will be an input-output subnetwork with $k+1$ nodes when $B_\eta$ is structural and the network $\KK_\eta$ will be a standard subnetwork with $k$ nodes when $B_\eta$ is appendage. 

If $B_\eta$ is appendage, then the $k$ nodes in $\KK_\eta$ will correspond to the $k$ self-couplings in $B_\eta$ and the arrows in $K_\eta$ will be $\tau_i\to\tau_j$ if $h_{\tau_j,x_{\tau_j}}$ is a coupling entry in \eqref{eq:K_matrices(k)}. 

If $B_\eta$ is structural, then the $k-1$ regulatory nodes of $\KK_\eta$ will correspond to the self-couplings in $B_\eta$ and the input node $\ell$ and the output node $j$ of $\KK_\eta$ will be given by the coupling entry in \eqref{eq:K_matrices(k-1)}. The arrows in $\KK_\eta$ are given by the coupling entries of $B_\eta$.

Note that the constructions of $\KK$ from $H$ do not require that $H$ is a homeostasis block; the constructions  only require that $H$ has the form given in either \eqref{eq:K_matrices(k-1)} or \eqref{eq:K_matrices(k)}. 
\end{remark}

\Section{Appendage homeostasis blocks}
\label{S:appendage_blocks}

An appendage block $B_\eta$ has $k$ self-couplings and the form of a $k \times k$ matrix \eqref{eq:K_matrices(k)}, that is rewritten here as:
\begin{equation} \label{eq:K_matrices(k)tau}
B_\eta = \Matrixc{f_{\tau_1,x_{\tau_1}} & \cdots & f_{\tau_1,x_{\tau_k}} \\ 
\vdots & \ddots & \vdots \\
f_{\tau_k,x_{\tau_1}} & \cdots & f_{\tau_k,x_{\tau_k}}}
\end{equation}
As discussed in Remark~\ref{rm:associated_network} this homeostasis block is associated with a subnetwork $\KK_\eta$ consisting of distinct nodes $\tau_1,\ldots,\tau_k$ and arrows specified by $B_\eta$ that connect these nodes.  In this section we show that $\KK_\eta$ satisfies three additional conditions:
\begin{enumerate}[label=(\alph*)]
\item Each node $\tau_j\in\KK_\eta$ is an appendage node (Lemma~\ref{T:KK_appendage}).
\item For every $\iota o$-simple path $S$, nodes in $\KK_\eta$ do not form a cycle with nodes in $\CC_S\setminus\KK_\eta$ (Theorem~\ref{thm:appendage}(a)).
\item $\KK_\eta$ is a path component of the subnetwork of appendage nodes of $\GG$ (Theorem~\ref{thm:appendage}(b)).
\end{enumerate}

\begin{lemma} \label {C:cycle}
Suppose a nonzero summand $\beta$ of $\det(B_\eta)$ in \eqref{eq:K_matrices(k)tau} has $f_{\tau_j,{x_{\tau_i}}}$ as a factor, where $\tau_j \neq \tau_i$. Then the arrow $\tau_i \to \tau_j$ is contained in a cycle in $\KK_\eta$.
\end{lemma}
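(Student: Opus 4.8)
The plan is to exploit the standard correspondence between the summands of a determinant and the cycle structure of permutations. First I would recall the Leibniz expansion: each nonzero summand $\beta$ of $\det(B_\eta)$ is, up to sign, a product $\prod_{p=1}^{k} f_{\tau_p,x_{\tau_{\pi(p)}}}$ indexed by a permutation $\pi$ of $\{1,\ldots,k\}$, with every factor nonzero. The hypothesis that $f_{\tau_j,x_{\tau_i}}$ appears as a factor, with $\tau_i\neq\tau_j$, translates directly into $\pi(j)=i$ and $i\neq j$; that is, $j$ is not a fixed point of $\pi$.

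The second step is to pass to the disjoint-cycle decomposition of $\pi$ and isolate the nontrivial cycle containing $j$, of the form $(j\;\, i\;\, \pi(i)\;\,\cdots)$ and of length at least $2$. The crucial observation is that in a nontrivial permutation cycle no index is fixed, so for every index $p$ occurring in it the factor $f_{\tau_p,x_{\tau_{\pi(p)}}}$ is an off-diagonal entry of $B_\eta$; being nonzero, it is a genuine coupling. By the construction of $\KK_\eta$ in Remark~\ref{rm:associated_network} (equivalently Definition~\ref{D:K_eta}), each such nonzero coupling entry $f_{\tau_p,x_{\tau_{\pi(p)}}}$ records an arrow $\tau_{\pi(p)}\to\tau_p$ of $\KK_\eta$.

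Finally I would assemble these arrows into a single directed cycle. Since $\pi$ restricts to a bijection on the support of its cycle, every node $\tau_p$ indexed by the cycle is the head of exactly one of the arrows $\tau_{\pi(p)}\to\tau_p$ and the tail of exactly one; this incidence pattern is precisely that of a closed directed walk through distinct nodes, i.e.\ a cycle in $\KK_\eta$. The arrow $\tau_i\to\tau_j$ is the one arising from $p=j$ (because $\pi(j)=i$), so it lies on this cycle, which is exactly the claim.

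I do not anticipate a serious obstacle: the argument is essentially a dictionary translation from permutation-cycle language into graph-cycle language, reminiscent of the adjacency-matrix/digraph correspondence invoked in the introduction. The only point requiring care is confirming that \emph{every} factor associated with the chosen permutation cycle is off-diagonal, so that it genuinely encodes an arrow rather than a self-coupling; this follows automatically once the cycle has length at least $2$. A secondary bookkeeping check---that the head/tail incidences close up into one cycle rather than several---is immediate from $\pi$ acting as a single cycle on the relevant indices.
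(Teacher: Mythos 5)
Your proof is correct, but it takes a more direct route than the paper's. You work straight from the Leibniz expansion: the nonzero summand $\beta$ is indexed by a permutation $\pi$ with $\pi(j)=i\neq j$, you pass to the disjoint-cycle decomposition of $\pi$, observe that every index in the nontrivial permutation cycle through $j$ contributes a nonzero off-diagonal entry of $B_\eta$ (hence an arrow of $\KK_\eta$ by Definition~\ref{D:K_eta}), and check that these arrows chain up into a single directed cycle through distinct nodes containing $\tau_i\to\tau_j$. The paper instead deletes row $j$ and column $i$ from $B_\eta$ to form a minor $\tilde H$ with $k-2$ self-couplings, recognizes $\tilde H$ (via Remark~\ref{rm:associated_network}) as the homeostasis matrix of an input-output network with input node $\tau_j$ and output node $\tau_i$, and invokes Lemma~\ref{L:summand_factors} to conclude that the cofactor of $f_{\tau_j,x_{\tau_i}}$ in $\beta$ carries the couplings of a simple path from $\tau_j$ to $\tau_i$, which closes up with the arrow $\tau_i\to\tau_j$ into a cycle. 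The two arguments produce the same cycle, since Lemma~\ref{L:summand_factors} is itself proved by tracing the orbit of the underlying permutation; but yours is self-contained and avoids the detour through the normal form and the determinant-formula machinery, while the paper's buys uniformity (it reuses a lemma already needed elsewhere) and makes explicit that the complement of the arrow $\tau_i\to\tau_j$ within the cycle is a simple path in the associated subnetwork. Your two flagged checkpoints --- that all factors on the nontrivial permutation cycle are off-diagonal, and that the head/tail incidences close into one cycle rather than several --- are exactly the right ones, and both hold for the reasons you give.
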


\begin{proof}
To simplify notation we drop the subscript $\eta$ below on $\tilde{H}$, $\KK$, and $\tilde\KK$.  Let $\tilde H$ be the $(k-1) \times (k-1)$ submatrix obtained by eliminating the $j^{th}$ row and the $i^{th}$ column of $B_\eta$ in \eqref{eq:K_matrices(k)tau}. Since $\tau_i \neq \tau_j$, $\tilde H$ has $k-2$ self-coupling entries.  Specifically, the two self-couplings $f_{\tau_i,x_{\tau_i}}$ and $f_{\tau_j,x_{\tau_j}}$ have been removed when creating $\tilde{H}$ from $B_\eta$.  

It follows from Remark \ref{rm:associated_network} that since $\tilde{H}$ has the form \eqref{eq:K_matrices(k-1)}, we can associate an input-output network $\tilde \KK$ with $\tilde H$, where the input node is $\tau_j$ since it does not receive any input and the output node is $\tau_i$ since it does not output to any node in $\tilde \KK$. By Lemma~\ref{L:summand_factors}, every nonzero summand in $\det(\tilde H)$ corresponds to a simple path from $\tau_j \to \tau_i$. Hence, the nonzero summand $\beta$ is given by $f_{\tau_j,x_{\tau_i}} $ times a nonzero summand corresponding to a simple path from $\tau_j \to \tau_i$. Therefore, the arrow $\tau_i \to \tau_j$ coupled with the path $\tau_j \to \tau_i$ forms a cycle in $\KK$.
\qed
\end{proof}

\begin{lemma} \label{T:KK_appendage}
Let $\KK_\eta$ be a subnetwork of $\GG$ associated with an appendage homeostasis block $B_\eta$ that consists of a subset of nodes {$\tau_1,\cdots,\tau_k$} of $\GG$. Then $B_\eta$ equals the Jacobian $J_{\KK_\eta}$ of the network $\KK_\eta$ and each node $\tau_j$ is an appendage node.
\end{lemma}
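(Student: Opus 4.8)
The plan is to prove the two assertions separately, treating the matrix identity as essentially definitional and reserving the real work for the claim that every $\tau_j$ is appendage. For the identity $B_\eta = J_{\KK_\eta}$: by Theorem~\ref{lem:associated_network} we may assume $B_\eta$ is already in the form \eqref{eq:K_matrices(k)tau}, so its diagonal entries are the self-couplings $f_{\tau_j,x_{\tau_j}}$ and its $(i,j)$ off-diagonal entry is $f_{\tau_i,x_{\tau_j}}$, which is a genuine coupling of $\GG$ exactly when $\tau_j\to\tau_i$ is an arrow and is $0$ otherwise. By Definition~\ref{D:K_eta} the nodes of $\KK_\eta$ are $\tau_1,\dots,\tau_k$ and its arrows are precisely those $\tau_j\to\tau_i$ for which $f_{\tau_i,x_{\tau_j}}$ is a nonzero entry of $B_\eta$. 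Hence the Jacobian $J_{\KK_\eta}$ carries the same entries in the same positions as $B_\eta$, and the identity follows.

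For the appendage claim I would argue by contradiction, first recording two structural facts about an appendage block. Since $B_\eta$ has $k$ self-couplings $f_{\tau_j,x_{\tau_j}}$, each $\tau_j$ is a regulatory node (neither $\iota$ nor $o$ carries a self-coupling entry in $H$), and the set of rows of $H$ lying in $B_\eta$ coincides with the set of its columns lying in $B_\eta$, namely $\{\tau_1,\dots,\tau_k\}$. Second, by Lemma~\ref{L:rows_summands_c} the nonzero summands of $\det(H)$ are exactly those of $\det(PHQ)$, and in the block-triangular form \eqref{eq:FK_form_repeat} every nonzero summand uses only entries of the diagonal blocks (this is Lemma~\ref{L:summand_products} applied iteratively). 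Consequently the transversal underlying any nonzero summand matches the rows of each diagonal block bijectively onto the columns of that same block.

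Now suppose some $\tau_j\in\KK_\eta$ lies on an $\iota o$-simple path $S\colon \iota=v_0\to v_1\to\cdots\to v_r\to v_{r+1}=o$, say $\tau_j=v_p$ with $1\le p\le r$. I would exhibit a concrete nonzero summand of $\det(H)$ running along $S$: take the path couplings $f_{v_1,x_\iota},\dots,f_{o,x_{v_r}}$ together with the self-couplings $f_{w,x_w}$ of every node $w$ in the complementary subnetwork $\CC_S$. As in the analysis of Lemma~\ref{L:summand_factors}, these entries form a transversal of $H$ in which every factor is a nonzero entry, so their product is a summand whose underlying bijection $c$ satisfies $c(v_i)=v_{i-1}$ along $S$ and $c(w)=w$ on $\CC_S$. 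By the block-matching property this summand respects \eqref{eq:FK_form_repeat}. Since $v_p=\tau_j$ has both its row and its column in $B_\eta$, the matching forces the column $c(v_p)=v_{p-1}$ to lie in $B_\eta$; because $B_\eta$ is appendage its column-node set equals its row-node set, so $v_{p-1}\in\KK_\eta$ and the row $v_{p-1}$ also lies in $B_\eta$. Iterating down the path gives $v_{p-1},v_{p-2},\dots,v_0=\iota$ all in $B_\eta$, so in particular the column $\iota$ lies in $B_\eta$; but $\iota$ has no self-coupling and hence no column in an appendage block, a contradiction. Therefore no $\tau_j$ lies on an $\iota o$-simple path, i.e.\ each $\tau_j$ is appendage.

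The main obstacle is the inductive descent along $S$: it is exactly here that the appendage hypothesis (full diagonal, so that ``column in $B_\eta$'' can be upgraded to ``row in $B_\eta$'') is indispensable. For a structural block this upgrade fails at the input column, which is precisely why a simple non-super-simple node is permitted to sit in a structural block; so the argument must invoke the $k$-self-coupling property at each step rather than merely the irreducibility of $B_\eta$. The degenerate case $k=1$ is immediate and illuminating: there $\det(B_\eta)=f_{\tau_1,x_{\tau_1}}$ is the unique summand, forcing $f_{\tau_1,x_{\tau_1}}$ into every summand of $\det(H)$, which is impossible once $\tau_1$ lies on a simple path.
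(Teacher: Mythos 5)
Your proof is correct, and while the first half (the identity $B_\eta=J_{\KK_\eta}$) coincides with the paper's essentially definitional argument, your proof of the appendage claim takes a genuinely different route. The paper fixes $\tau_j$ and a simple path $S$, shows via Lemma~\ref{L:summand_products} that \emph{every} nonzero summand of $\det(H)$ contains either $f_{\tau_j,x_{\tau_j}}$ or some coupling $f_{\tau_j,x_{\tau_i}}$, uses the determinant formula $F_S\,g_{C_S}$ to place $\tau_j$ in $C_S$ in all but one case, and disposes of the remaining case (the arrow $\tau_i\to\tau_j$ lying on $S$) by invoking Lemma~\ref{C:cycle} and a further argument about cycles versus simple paths. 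You instead exhibit the single explicit ``diagonal'' summand $F_S\prod_{w\in C_S}f_{w,x_w}$ and trace its row-to-column bijection through the Frobenius--K\"onig block decomposition, using the fact that an appendage block has a self-coupling in every row and every column (so its row-node set equals its column-node set) to upgrade ``column of $B_\eta$'' to ``row of $B_\eta$'' and descend along $S$ until the column $\iota$ -- which carries no self-coupling -- is forced into $B_\eta$. Your version avoids Lemma~\ref{C:cycle} and the paper's somewhat delicate final case analysis, and it isolates precisely where the $k$-self-coupling hypothesis is indispensable (as you note, the descent breaks for structural blocks at the input column). The paper's version, in exchange, establishes the slightly stronger intermediate statement that $\tau_j$ lies in the complementary subnetwork of \emph{every} simple path and builds the cycle machinery that is reused in Theorem~\ref{thm:appendage}. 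Both arguments rest on the same two pillars -- the path interpretation of summands and the block-respecting structure of nonzero transversals -- but deploy them differently, and yours is arguably the cleaner of the two.
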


\begin{proof}
Admissible systems associated with the network $\KK_\eta$ have the form
\[
\begin{array}{rcl}
\dot{x}_{\tau_1} & = & f_{\tau_1}(x_{\tau_1},\ldots,x_{\tau_k}) \\
& \vdots & \\
\dot{x}_{\tau_k} & = & f_{\tau_k}(x_{\tau_1},\ldots,x_{\tau_k}) 
\end{array}
\]
where the variables that appear on the RHS of each equation correspond to the couplings in \eqref{eq:K_matrices(k)tau}.  It follows that the matrix $B_\eta$  in \eqref{eq:K_matrices(k)tau} equals the Jacobian $J_{\KK_\eta}$, as claimed.

We show that $\tau_j\in\KK_\eta$ is an appendage node for each $j$.  More specifically, we show that $\tau_j$ is in the complementary subnetwork $C_S$ of each $\iota o$-simple path $S$.  We now fix $\tau_j$ and $S$.

We make two claims.  First, every nonzero summand $\alpha$ of $\det(H)$ either contains the self-coupling $f_{\tau_j,{x_{\tau_j}}}$ as a factor or a coupling $f_{\tau_j,{x_{\tau_i}}}$ for some $i\neq j$ as a factor.  Second, this dichotomy is sufficient to prove the theorem.

First claim.  It follows from Lemma \ref{L:summand_products} that each summand of $\det(PHQ)$ has a summand of $\det(B_\eta)$ as a factor.  Therefore, each summand $\alpha$ of $\det(H)$ has a summand $\beta$ of $\det(B_\eta)$ as a factor. The claim follows from two facts. The first is that $B_\eta$ is the Jacobian $J_{\KK_\eta}$ and hence either the self-coupling is in $\beta$ or the off diagonal entry is in $\beta$; and the second is that once these entries are in $\beta$, they are also in $\alpha$.

Second claim. Recall that Theorem~\ref{L:summand_form} (the determinant theorem) implies that the summand $\alpha$ has the form $F_S g_{C_S}$ where $S$ is an $\iota o$-simple path,  $C_S$ is the complementary subnetwork to $S$, $F_S$ is the product of the coupling strengths within $S$, $J_{C_S}$ is the Jacobian matrix of the admissible system corresponding to  $C_S$, and $g_{C_S}$ is a summand in $\det(J_{C_S})$.

If the summand $\alpha$ has $f_{\tau_j,{x_{\tau_j}}}$ as a factor, it follows that $f_{\tau_j,{x_{\tau_j}}}$ is a factor of $g_{C_S}$ since it is a self-coupling and cannot be a factor of $F_S$. Hence, node $\tau_j$ is a node in $C_S$. 

If the summand $\alpha$ has $f_{\tau_j,{x_{\tau_i}}}$ as a factor, then $f_{\tau_j,{x_{\tau_i}}}$ is either not a factor of $F_S$ or is a factor of $F_S$. In the first case, $f_{\tau_j,{x_{\tau_i}}}$ is a factor of $g_{C_S}$. It follows that $\tau_j$ is a node in $C_S$. In the second case, the arrow $\tau_i\to\tau_j$ is on the simple path $S$. Recall that $f_{\tau_j,{x_{\tau_i}}}$ is also a factor of the summand $\beta$. It follows from Lemma~\ref{C:cycle} applied to $\beta$ that $\tau_i \to \tau_j$ is contained in a cycle in $K_\eta$. This is a contradiction since we show that $\tau_i\to\tau_j$ cannot be contained in both the simple path $S$ and a cycle in $\KK_\eta$.  

Since $\tau_i\to \tau_j$ is contained in a cycle in $\KK_\eta$, there exists an arrow $\tau_k\to \tau_i$ where $\tau_k$ is a node in $\KK_\eta$ ($\tau_k$ can be $\tau_j$). Since every nonzero summand of $\det(H)$ has a summand of $\det(B_\eta)$ as a factor, there exists a summand $F_S g_{C_S}$ having both $f_{\tau_j,x_{\tau_i}}$  and $f_{\tau_i,x_{\tau_k}}$ as factors. Note that $f_{\tau_j,x_{\tau_i}}$ is a factor of $F_S$ and $g_{C_S}$ is a summand in $\det(J_{C_S})$. Since $\tau_k\to \tau_i$ cannot be contained in $S$ it must be a factor of $g_{C_S}$. However, $C_S$ is the complementary subnetwork to $S$ that does not contain any arrow connecting to $\tau_i$ in the simple path $S$.
\qed
\end{proof}

\begin{lemma}\label{lem:J_C}
Let $\KK$ be a proper subnetwork of a subnetwork $C$ of $\GG$. If nodes in $\KK$ do not form a cycle with nodes in $C\setminus\KK$, then upon relabelling nodes $J_C$ is block lower triangular. 
\end{lemma}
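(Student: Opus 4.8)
The plan is to relabel the nodes of $C$ by the strongly connected components (SCCs) of the digraph underlying $C$, and to use the no-cycle hypothesis to show that each SCC lies entirely inside $\KK$ or entirely inside $C\setminus\KK$. Concretely, I would form the condensation of $C$ by contracting each SCC to a single vertex. This condensation is acyclic, so its vertices---the SCCs $\Sigma_1,\dots,\Sigma_r$ of $C$---admit a topological ordering in which every arrow joining two distinct SCCs points forward. I then relabel the nodes of $C$ by listing the nodes of $\Sigma_1$ first, then those of $\Sigma_2$, and so on. Recall that the $(p,q)$ entry of $J_C$ is $f_{p,x_q}$, which is nonzero only when $p=q$ or there is an arrow $q\to p$ in $C$. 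For an arrow $q\to p$ with $q,p$ in distinct SCCs, the topological ordering places $q$ before $p$, so $f_{p,x_q}$ lies strictly below the diagonal; every remaining nonzero entry lies within a single SCC, that is, inside a diagonal block. Hence, in the new labelling, $J_C$ is block lower triangular with the SCC Jacobians as its diagonal blocks.

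The heart of the argument is the claim that no SCC $\Sigma$ contains both a node $a\in\KK$ and a node $b\in C\setminus\KK$. I would prove this by contradiction: if such $a,b\in\Sigma$ existed, strong connectivity would supply a directed path from $a$ to $b$ and a directed path from $b$ to $a$, both lying in $\Sigma\subseteq C$. Concatenating these two paths produces a cycle in $C$ (in the sense of Definition~\ref{D:appendage_terms}\ref{cycle_def}, which permits repeated nodes) containing the $\KK$-node $a$ together with the $(C\setminus\KK)$-node $b$. This is precisely a cycle formed by a node of $\KK$ with nodes of $C\setminus\KK$, contradicting the hypothesis. Therefore every SCC is \emph{pure}, lying wholly in $\KK$ or wholly in $C\setminus\KK$, so the block lower triangular form constructed above respects the partition of $C$ into $\KK$ and $C\setminus\KK$.

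Because each diagonal block is pure, collecting block determinants yields $\det(J_C)=\pm\det(J_\KK)\,\det(J_{C\setminus\KK})$, the sign recording the relabelling permutation; this is the form in which the lemma is applied in the proof of Theorem~\ref{thm:appendage}. In the case that $\KK$ is strongly connected (the situation of an appendage path component), the same purity argument shows that all of $\KK$ lies in one SCC of $C$, so $J_\KK$ appears as a single diagonal block rather than as a union of blocks.

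I expect the main obstacle to be the purity step, and within it two points need care. First, the cycle extracted from mutual reachability must genuinely be of the type excluded by the no-cycle hypothesis; this is where the paper's non-simple notion of cycle is convenient, since the concatenated walk may repeat nodes and need not be a simple cycle. Second, I must match the arrow convention $f_{p,x_q}\leftrightarrow(q\to p)$ to the orientation of the triangular form so that inter-SCC entries fall below the diagonal rather than above it. Once purity is in hand, the condensation and block-triangularization steps are routine.
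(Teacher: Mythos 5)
Your proof is correct, but it takes a genuinely different route from the paper's. The paper does not use the strongly connected component condensation: it partitions the nodes of $C$ into three classes --- (i) nodes of $C\setminus\KK$ that are upstream from $\KK$, (ii) the nodes of $\KK$, (iii) nodes of $C\setminus\KK$ that are not upstream from $\KK$ --- and checks directly that the no-cycle hypothesis kills all arrows that would create entries above the block diagonal, yielding a $3\times 3$ block lower triangular form with $J_\KK$ sitting as the single middle diagonal block. Your SCC/condensation argument is more standard and in some ways sharper: the purity claim (each SCC lies wholly in $\KK$ or wholly in $C\setminus\KK$) is exactly the right use of the hypothesis, your attention to the fact that the concatenated walk need not be simple is well placed given Definition~\ref{D:appendage_terms}(d), and you get the full factorization $\det(J_C)=\pm\prod_\Sigma\det(J_\Sigma)$ for free, which is all that is actually needed downstream (Theorems~\ref{thm:appendage} and \ref{T:appendage_blocks} only use that $\det(J_\KK)$ divides $\det(J_C)$). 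The trade-off is that your block form is finer than the paper's: $J_\KK$ appears as a single contiguous diagonal block only when $\KK$ is strongly connected, whereas the paper's upstream/downstream partition manufactures that contiguity by construction. You correctly flag this and observe that in the intended applications ($\KK$ an appendage path component, or $\KK\cup\BB$ where $\BB$ consists precisely of the nodes closing cycles with $\KK$, which is then the full SCC containing $\KK$) strong connectivity holds, so nothing is lost.
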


\begin{proof}
The no cycle condition implies that we can partition nodes in $C$ into three classes: 
\begin{enumerate}[label=(\roman*)]
\item nodes in $C\setminus\KK$ that are strictly upstream from $\KK$, 
\item nodes in $\KK$, 
\item nodes in $C\setminus\KK$ that are not upstream from $\KK$.  
\end{enumerate}
By definition nodes in sets (i) and (iii) are disjoint from nodes in (ii).  Also, nodes in sets (i) and (iii) are disjoint because nodes in $\KK$ do not form a cycle with nodes in $C\setminus\KK$.  Finally, it is straightforward to see that $\CC = \mathrm{(i)}\cup \mathrm{(ii)}\cup \mathrm{(iii)}$. Using this partition of $\CC$, we claim that the Jacobian matrix of $C$ has the desired block lower triangular form:
\begin{equation} \label{e:J_C}
J_{C} = \left[\begin{array}{ccc|c|ccc}
\ast & \cdots &\ast & 0 & 0 & \cdots & 0\\
\vdots & \vdots & \vdots & \vdots & \vdots & \vdots & \vdots \\ 
\ast & \cdots & \ast & 0 & 0 & \cdots & 0\\ \hline
\ast & \cdots & \ast & J_{\KK} & 0 & \cdots & 0 \\ \hline
\ast & \cdots &\ast & \ast & \ast & \cdots & \ast\\
\vdots & \vdots & \vdots & \vdots & \vdots & \vdots & \vdots \\
\ast & \cdots & \ast & \ast & \ast & \cdots & \ast
\end{array}\right]
\end{equation}
Specifically, observe that there are no connections from (i) to (iii) because then a node in (iii) would be strictly upstream from $\KK$.  By definition there are no connections from (ii) to (iii).  Finally, the cycle condition implies that there are no connections from (i) to (ii). 
\qed
\end{proof}

\begin{theorem}  \label{thm:appendage}
Let $\KK_\eta$ be a subnetwork of $\GG$ associated with an appendage homeostasis block $B_\eta$. Then:
\begin{enumerate}[label=(\alph*)]
\item  For every $\iota o$-simple path $S$, nodes in $\KK_\eta$ do not form a cycle with nodes in $C_S\setminus\KK_\eta$.
\item $\KK_\eta$ is a path component of $\AA_\GG$.
\end{enumerate}
\end{theorem}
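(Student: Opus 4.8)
The plan is to prove (a) first and then deduce (b) from it. Throughout I use Lemma~\ref{T:KK_appendage}: $B_\eta$ is the Jacobian $J_{\KK_\eta}$ and every node of $\KK_\eta$ is appendage, so $\KK_\eta\subseteq C_S$ for every $\iota o$-simple path $S$ and it always makes sense to speak of $C_S\setminus\KK_\eta$. The one structural fact I lean on repeatedly is that, because $B_\eta$ is a diagonal block of the block upper triangular form \eqref{eq:FK_form_repeat}, Lemma~\ref{L:summand_products} (applied to the grouping of $PHQ$ into the blocks strictly before $B_\eta$, the block $B_\eta$ itself, and the blocks strictly after) forces every nonzero summand of $\det(\PH)$ to factor through a summand of $\det(B_\eta)$. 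Equivalently, in any nonzero summand, a factor sitting in a row indexed by a node of $\KK_\eta$ must sit in a column indexed by a node of $\KK_\eta$.

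For (a) I argue by contradiction. Suppose that for some $\iota o$-simple path $S$ there is a directed cycle $\Gamma\subseteq C_S$ meeting both $\KK_\eta$ and $C_S\setminus\KK_\eta$. Define a permutation $\sigma$ of the node set of $C_S$ that follows $\Gamma$ on the nodes of $\Gamma$ and fixes every other node of $C_S$; the fixed nodes contribute their (generically nonzero) self-couplings and the nodes of $\Gamma$ contribute the couplings of its arrows, all of which are entries of $J_{C_S}$. Thus $\sigma$ yields a nonzero summand $g_{C_S}$ of $\det(J_{C_S})$, and by the determinant formula (Theorem~\ref{L:summand_form}) the monomial $F_S\,g_{C_S}$ is a genuine nonzero summand of $\det(\PH)$. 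Since $\Gamma$ crosses the boundary of $\KK_\eta$, there is a node $v\in\KK_\eta\cap\Gamma$ with $\sigma(v)\notin\KK_\eta$, so $F_S\,g_{C_S}$ contains the factor $f_{v,x_{\sigma(v)}}$ in a row indexed by the $\KK_\eta$-node $v$ but in a column indexed by a node outside $\KK_\eta$. This contradicts the structural fact of the previous paragraph, proving (a).

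For (b) I first show $\KK_\eta$ is strongly connected. Since $B_\eta$ is an appendage block it carries $k$ self-couplings, so $J_{\KK_\eta}$ has a fully nonzero main diagonal; being an irreducible (fully indecomposable) block of the Frobenius--K\"onig form \eqref{eq:FK_form_repeat} with nonzero diagonal, its digraph $\KK_\eta$ is strongly connected (combinatorial matrix theory, \cite{BR91}). Hence all nodes of $\KK_\eta$ lie in one appendage path component $\PHP$ of $\AA_\GG$, and it remains to prove $\KK_\eta=\PHP$. Suppose not, so there is $\mu\in\PHP\setminus\KK_\eta$. Using strong connectivity of $\PHP$, pick an arrow $\tau\to v$ with $\tau\in\KK_\eta$ and $v\notin\KK_\eta$ (an exit from $\KK_\eta$, which exists since $\KK_\eta$ reaches $\mu$ inside $\PHP$); then a simple path in $\AA_\GG$ from $v$ to its first return $\tau''\in\KK_\eta$, all of whose interior nodes lie outside $\KK_\eta$; and finally a simple path inside $\KK_\eta$ from $\tau''$ back to $\tau$ (strong connectivity of $\KK_\eta$). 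These three pieces have pairwise disjoint interiors, so their concatenation is a simple directed cycle $\Gamma$ meeting both $\KK_\eta$ and its complement, and whose nodes are all appendage, hence in $C_S$ for every $S$. This contradicts (a), forcing $\KK_\eta=\PHP$, a path component of $\AA_\GG$.

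The routine parts are the determinant-formula bookkeeping in (a) and the citation for strong connectivity in (b). The main obstacle is the graph-theoretic step in (b): a closed walk witnessing $\mu\sim\tau$ need not itself be a simple cycle through a $\KK_\eta$-node, so the cycle must be assembled carefully from an exit arrow, a simple return path that first re-enters $\KK_\eta$, and a simple path inside $\KK_\eta$, precisely so that the three interiors stay disjoint and the resulting $\Gamma$ is genuinely simple and boundary-crossing. Producing this honest simple cycle, rather than merely a boundary-crossing closed walk, is the delicate point on which the reduction to (a) hinges.
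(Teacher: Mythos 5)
Your proof is correct. Part (b) is essentially the paper's argument: strong connectivity of $\KK_\eta$ from irreducibility of $B_\eta$ (you cite the standard fact that a fully indecomposable matrix with nonzero diagonal has strongly connected digraph, where the paper argues directly that a disconnected $\KK_\eta$ would let $B_\eta$ be block-triangularized, contradicting irreducibility), followed by the observation that a strictly larger ambient path component would produce a crossing cycle forbidden by (a); your explicit assembly of that cycle from an exit arrow, a first-return path, and an internal path supplies detail the paper's one-line ``there would be a cycle'' omits. Part (a), however, takes a genuinely different route. The paper isolates the set $\BB$ of complementary nodes on crossing cycles, uses Lemma~\ref{lem:J_C} to block-triangularize $J_{C_S}$ with diagonal block $J_{\KK\cup\BB}$, and then builds one explicit admissible vector field supported on a single crossing cycle for which $\det(J_{\KK\cup\BB})$ is computed in closed form and is visibly not divisible by $\det(J_{\KK})$, contradicting the factorization of $\det(H)$. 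You instead exhibit a single offending monomial: the permutation that traverses the crossing cycle and fixes the rest of $C_S$ gives, via Theorem~\ref{L:summand_form}, a nonzero summand of $\det(H)$ with an entry in a $\KK_\eta$-row but a non-$\KK_\eta$-column, which is incompatible with the fact (Lemma~\ref{L:summand_products} iterated over the Frobenius--K\"onig form, plus Lemma~\ref{L:rows_summands_c}) that every nonzero summand of $\det(H)$ must contain a full summand of $\det(B_\eta)$. This buys you a shorter argument with no divisibility reasoning and no choice of a special vector field. The only point worth making explicit is that the hypothesized crossing cycle must be taken \emph{simple} before it defines a permutation; this is harmless, since any closed walk meeting both $\KK_\eta$ and $C_S\setminus\KK_\eta$ contains a crossing arc $u\to v$ with $v\in\KK_\eta$, $u\notin\KK_\eta$, and a simple path from $v$ back to $u$ closed up by that arc gives a simple crossing cycle --- the same silent reduction the paper makes when it relabels its cycle as $1\to 2\to\cdots\to p\to 1$.
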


\begin{proof}
By Lemma~\ref{T:KK_appendage}, $\KK_\eta\subset \AA_\GG$ is an appendage subnetwork that is contained in each complementary subnetwork $C_S$, $B_\eta=J_{\KK_\eta}$ and $\det(J_{\KK_\eta})$ is a factor of $\det(H)$. To simplify notation in the rest of the proof, we drop the subscript $\eta$ and use $\KK$ to denote the appendage subnetwork. 

\paragraph{Proof of (a)} We proceed by contradiction and assume there is a cycle.  Let $S$ be an $\iota o$-simple path.  Let $\BB\subset C_S\setminus \KK$ be the nonempty subset of nodes that are on some cycle connecting nodes in $\KK$ with nodes in $C_S\setminus \KK$.  It follows that nodes in $\KK$ do not form any cycle with nodes in $(C_S\setminus \KK)\setminus \BB = C_S \setminus (\KK\cup\BB)$. Since $\KK\cup \BB \subset C_S$ and nodes in $\KK\cup \BB$ do not form a cycle with nodes in $C_S\setminus(\KK\cup\BB)$, by Lemma \ref{lem:J_C} we see that the Jacobian matrix of $C_S$ has the form
\begin{equation} \label{e:JCi-new}
J_{C_S} = \left[\begin{array}{ccc}
U & 0  &  0\\
\ast &  J_{\KK\cup\BB} &  0  \\ 
\ast & \ast & D
\end{array}\right]
\end{equation}
where
\[
J_{\KK\cup\BB} = \Matrixc{J_\KK & f_{\KK, x_{\BB}}\\ f_{\BB, x_{\KK}} & J_\BB}
\]
Note that $f_{\KK, x_\BB}\neq 0$ and $f_{\BB, x_\KK}\neq 0$, since there is a cycle containing nodes in $\KK$ and $\BB$.  We claim that the polynomial $\det(J_\KK)$ does not factor the polynomial $\det(J_{\KK\cup\BB})$.  It is sufficient to verify this statement for one admissible vector field. 

Relabel the nodes so that there is a cycle of nodes $1\to 2 \to \cdots \to p\to 1$ where the first $q$ nodeas are in $\KK$.  We can choose the cycle so that the remaining nodes are in $\BB$.  An admissible system for this cycle has the form 
\[
(f_1, f_2, \ldots,f_p)(x) =(f_1(x_1,x_p), f_2(x_2, x_1), \cdots, f_p(x_p,x_{p-1}))
\]
and all other coordinate functions $f_r(x) = x_r$. Hence the associated Jacobian matrix is
\begin{equation} \label{JKB}
J_{\KK\cup\BB} =  \left[\begin{array}{cccc|ccc|ccc}
f_{1,x_1} & 0 &\cdots & 0 & 0 & \cdots & f_{1,x_p}&\cdots&\cdots&0\\
f_{2,x_1}&f_{2,x_2}&\cdots & 0&0& \cdots& \cdots&\cdots&\cdots&0\\
\vdots & \ddots & \ddots & \vdots & \vdots & \vdots &\vdots&\vdots& \vdots&\vdots \\ 
\ast & \cdots & f_{q,x_{q-1}} & f_{q,x_q} & 0 & \cdots &\cdots&\cdots& \cdots&0 \\ \hline
0 & \cdots &0 & f_{q+1,x_{q}} & f_{q+1,x_{q+1}}& \cdots &\cdots&\cdots&\cdots& 0\\
\vdots & \vdots & \vdots & \vdots & \ddots & \ddots &\vdots&\vdots&\vdots& \vdots \\
0 & \cdots & 0 & 0 & 0 & f_{p,x_{p-1}} & f_{p,x_p} &\cdots&\cdots&\cdots \\\hline
0 & \cdots & 0 & 0 & 0 & \cdots & 0 &1&\cdots &0\\
\vdots & \vdots & \vdots & \vdots & \vdots & \vdots &\vdots&\vdots& \ddots& 0 \\
0 & \cdots & 0 & 0 & 0 & \cdots &\cdots& 0 &\cdots & 1
\end{array}\right]
\end{equation}
where the upper left block is $J_\KK$.
It follows from direct calculation that the determinant of the $J_{\KK\cup \BB}$ is  
\begin{equation} \label{eq:Jr}
\begin{split}
\det(J_{\KK\cup\BB}) = \; & f_{1,x_1}f_{2,x_2}\cdots f_{p,x_p} \\
& + (-1)^{(p-1)}f_{1,x_2}f_{2,x_3}\cdots f_{p,x_1}
\end{split}
\end{equation}
Hence
\[
\det(J_{\KK}) = f_{1,x_1}f_{2,x_2}\cdots f_{q,x_q}
\]
is not a factor of $\det(J_{\KK\cup\BB})$, given in \eqref{eq:Jr}.  We claim that $\det(J_\KK)$ is also not a factor of $\det(J_{C_S})$ because by \eqref{e:JCi-new}, $\det(J_{C_S}) = \det(U)\det(J_{\KK\cup\BB})\det(D)$. Suppose $ \det(J_{\KK})$ is a factor of $\det(J_{C_S})$, then it must be a factor of $\det(J_{\KK\cup\BB})$, which is a contradiction.   
Thus $\det(J_{\KK})$ is not a factor of $\det(J_{C_S})$, which contradicts the fact that $\KK_\eta$ is an appendage homeostasis block.  Hence, nodes in $\KK$ cannot form a cycle with nodes in $C_S\setminus\KK$. 

\paragraph{Proof of (b)} We begin by showing that $\KK$ is path connected; that is, there is a path from $\tau_i$ to $\tau_j$ for every pair of nodes $\tau_i, \tau_j\in\KK$.  Suppose not, then the path components of $\KK$ give $\KK$ a feedforward structure. It follows that we can partition the set of nodes in $\KK$ into two disjoint classes: $A$ and $B$ where nodes in $B$ are strictly downstream from nodes in $A$. Thus, there exist permutation matrices $P_\eta$ and $Q_\eta$ such that
\[
P_\eta B_\eta Q_\eta=\Matrix{J_A & 0\\ \ast & J_B}
\] 
which contradicts the fact that $B_\eta$ is irreducible. Therefore, $\KK$ is path connected.  

Next, we show that $\KK$ is a path component of $\AA_\GG$. Suppose that the path component $\WW\subset C_S$ of $\AA_\GG$ that contains $\KK$ is larger than $\KK$.  Then there would be a cycle in $\WW\subset C_S$ that starts and ends in $\KK$, and contains nodes not in $\KK$.  This contradicts (a) and $\WW=\KK$.  
\qed
\end{proof}


Recall from Definition \ref{D:structure_net} that $\sS_\GG$ is a subnetwork of $\GG$ that can be obtained by removing all appendage path components that satisfy the no cycle condition. 

\begin{lemma} \label{L:H'block}
Let $\GG$ be an input-output network with homeostasis matrix $H$.  Then the structural subnetwork $\sS_\GG$ is an input-output network with homeostasis matrix $H'$ and $\det(H')$ is a factor of $\det(H)$. 
\end{lemma}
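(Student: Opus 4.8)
The plan is to evaluate $\det(H)$ by means of the determinant formula of Theorem~\ref{L:summand_form}, and to show that the no cycle appendage path components split off as a common factor $\prod_j\det(J_{\AA_j})$ from every complementary subnetwork, leaving precisely $\det(H')$ behind. Two things must be checked: that $\sS_\GG$ is genuinely a (core) input-output network sharing the simple data of $\GG$, and that $\det(H')\mid\det(H)$.

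First I would verify the structural claim. By Remark~\ref{R:AiBi} and Definition~\ref{D:structure_net}, $\sS_\GG$ is obtained from $\GG$ by deleting the no cycle appendage path components $\AA_1,\dots,\AA_s$ and their incident arrows, while retaining $\iota$, $o$, all simple nodes, and all cycle appendage components $\BB_i$, together with every arrow of $\GG$ between retained nodes. Since each deleted node is appendage, it lies on no $\iota o$-simple path; hence every $\iota o$-simple path of $\GG$ survives intact in $\sS_\GG$, and conversely every $\iota o$-simple path of $\sS_\GG$ is one of $\GG$. Thus $\GG$ and $\sS_\GG$ have the same $\iota o$-simple paths and therefore the same simple and super-simple nodes, with $\iota$ and $o$ still the input and output nodes. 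One then checks that every retained node is both downstream from $\iota$ and upstream from $o$ within $\sS_\GG$: simple nodes lie on $\iota o$-simple paths, and each cycle appendage node lies on a cycle through simple nodes (this is precisely how it violates the no cycle condition), so it remains connected to $\iota$ and $o$ through retained nodes. Hence $\sS_\GG$ is a core input-output network and has a homeostasis matrix $H'$.

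Next I would establish the factorization. Fix an $\iota o$-simple path $S_i$ with complementary subnetwork $C_i$ in $\GG$, and set $C_i'=C_i\setminus(\AA_1\cup\cdots\cup\AA_s)$, which is exactly the complementary subnetwork of $S_i$ in $\sS_\GG$. Because appendage nodes lie on no $\iota o$-simple path, all of $\AA_1,\dots,\AA_s$ are contained in $C_i$. By the equivalent form of the no cycle condition stated after Definition~\ref{D:appendage_terms}, the nodes of $\AA_1$ form no cycle with $C_i\setminus\AA_1$, so Lemma~\ref{lem:J_C} applies and gives $\det(J_{C_i})=\pm\det(J_{\AA_1})\,\det(J_{C_i\setminus\AA_1})$. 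Since $\AA_2$ forms no cycle with $C_i\setminus\AA_2\supseteq(C_i\setminus\AA_1)\setminus\AA_2$, the same peeling applies to $C_i\setminus\AA_1$, and continuing inductively over all $s$ components yields
\[
\det(J_{C_i})=\pm\Big(\prod_{j=1}^s\det(J_{\AA_j})\Big)\det(J_{C_i'}).
\]
The crucial point is that this factor $\prod_{j}\det(J_{\AA_j})$ is the same for every $i$, since each $\AA_j$ sits inside every $C_i$.

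Finally I would assemble the pieces. Theorem~\ref{L:summand_form} gives $\det(H)=\sum_i F_{S_i}\,G_{C_i}$ with $G_{C_i}=\pm\det(J_{C_i})$; pulling the common factor out of the sum,
\[
\det(H)=\pm\Big(\prod_{j=1}^s\det(J_{\AA_j})\Big)\sum_i F_{S_i}\,\big(\pm\det(J_{C_i'})\big).
\]
Because $\GG$ and $\sS_\GG$ have the same $\iota o$-simple paths $S_i$ with the same path products $F_{S_i}$, and the complementary subnetwork of $S_i$ in $\sS_\GG$ is exactly $C_i'$, applying Theorem~\ref{L:summand_form} to $\sS_\GG$ identifies $\sum_i F_{S_i}(\pm\det(J_{C_i'}))$ with $\det(H')$ up to an overall sign. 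Hence $\det(H')$ divides $\det(H)$. The main obstacle I anticipate is not any single calculation but the uniformity: one must confirm that the very same no cycle components detach from each $C_i$ and that the residual factors $\det(J_{C_i'})$ recombine, via the determinant formula for $\sS_\GG$, into $\det(H')$ rather than some unrelated polynomial. The key combinatorial input making this work is that the no cycle condition for each individual $\AA_j$ is exactly the hypothesis of Lemma~\ref{lem:J_C} inside each $C_i$; the sign bookkeeping is harmless because only divisibility is asserted.
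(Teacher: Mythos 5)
Your route is genuinely different from the paper's. The paper works at the matrix level: it uses Theorem~\ref{thm:appendage} to identify the appendage blocks $B_1,\ldots,B_p$ in the Frobenius--K\"onig decomposition with the Jacobians $J_{\KK_i}$ of appendage path components, writes $PHQ$ in block upper triangular form with these Jacobians on the diagonal and a residual block $H'$ in the corner, reads off $\det(H')\mid\det(H)$ immediately, and only afterwards checks (by counting self-couplings and invoking Theorem~\ref{lem:associated_network}) that $H'$ is the homeostasis matrix of $\sS_\GG$. You instead work at the polynomial level, peeling $\det(J_{\AA_j})$ off each complementary Jacobian $\det(J_{C_i})$ via Lemma~\ref{lem:J_C} and the determinant formula. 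This buys something real: you start from the combinatorial Definition~\ref{D:structure_net} of $\sS_\GG$ directly (the paper's proof tacitly equates ``nodes not in any appendage block'' with ``nodes not in any no-cycle appendage path component'', which uses the converse direction established only later in Theorem~\ref{T:appendage_blocks}), and you exhibit the common factor $\prod_j\det(J_{\AA_j})$ explicitly. Your check that $\sS_\GG$ is a core input-output network with the same $\iota o$-simple paths is essentially right; you should add that a $\BB_i$-bad cycle cannot pass through any deleted $\AA_j$ (else $\AA_j$ would itself have a bad cycle) and that strong connectedness of $\BB_i$ propagates the upstream/downstream property to all of its nodes.

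The one genuine soft spot is the final assembly. You obtain $\det(H)=\prod_j\det(J_{\AA_j})\cdot\sum_i\epsilon_i F_{S_i}\det(J_{C_i'})$, where $\epsilon_i=\pm1$ is the sign in $G_{C_i}=\epsilon_i\det(J_{C_i})$, and you need this sum to equal $\pm\det(H')=\pm\sum_i\epsilon_i'\,F_{S_i}\det(J_{C_i'})$. That requires the ratio $\epsilon_i/\epsilon_i'$ to be independent of $i$, which is not automatic: the signs in Theorem~\ref{L:summand_form} depend on where the path nodes of $S_i$ sit in the row and column orderings, hence on $i$. Dismissing this because ``only divisibility is asserted'' gets the logic backwards: since the $F_{S_i}$ are linearly independent (as used in the proof of Theorem~\ref{T:core_equivalent}), mismatched ratios would make the bracketed sum a polynomial different from $\pm\det(H')$, and divisibility by $\det(H')$ would then fail. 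The claim is true --- the permutation comparing the base sign for $S_i$ with that for a fixed reference path is supported on nodes of $\sS_\GG$ and is conjugate to the corresponding permutation for $H'$, so the ratio is constant --- but it needs an argument. The cleanest repair is to run your peeling at the matrix level: permute rows and columns of $H$ so that the $\AA_j$ blocks sit in the upper left, which is exactly the paper's block triangular form and makes the sign question disappear.
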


\begin{proof}
By Theorem 5.2, if $B_\eta$ is an appendage homeostasis block, then the associated subnetwork $\KK_\eta$ consists of appendage nodes, and $B_\eta=J_{\KK_\eta}$. Relabel the blocks so that $B_{1},\cdots, B_p$ are appendage homeostasis blocks. We can write
\[
PHQ = \Matrixc{ J_{\KK_{1}} & \ast &\cdots &\ast\\  
0 & \ddots & \ast & \vdots \\
\vdots & 0 &J_{\KK_p} &\ast \\
0 & \cdots & 0 & H'}
\]
Hence $\det(H')$ is a factor of $\det(H)$.

Recall $H$ is an $(n+1)\times(n+1)$ matrix with $n$ self-couplings.  Since the main diagonal entries of $J_{\KK_i}$ are all self-couplings, $H'$ is a $(n+1 - \gamma) \times (n+1-\gamma)$ matrix where $\gamma$ is the total number of self-couplings in $\KK_1,\cdots, \KK_p$.  It follows that $H'$ has $n- \gamma$ self-couplings. By Theorem \ref{lem:associated_network}  we can assume $H'$ has the homeostasis matrix form and is associated with an input-output subnetwork $\sS_\GG$ of $\GG$. 

It follows from the upper triangular form of $PHQ$ that $\sS_\GG$ does not contain any node in appendage blocks or any coupling whose head or tail is a node in an appendage block. Moreover, a node that is not associated with any appendage block must be contained in $\sS_\GG$. Otherwise, the self-coupling of this node will appear in some $J_{\KK_i}$, which is a contradiction.

Hence, $\sS_\GG$ is an input-output network that consists of all nodes not associated with any appendage block and all arrows that connect nodes in $\sS_\GG$. 
\qed
\end{proof}

\begin{remark} \rm \label{R:G'}
Suppose $\KK_\eta$ is an input-output subnetwork of $\GG$ associated with an irreducible matrix $B_\eta$ in \eqref{eq:K_matrices(k-1)}. Then, it follows from Lemma~\ref{L:H'block} that $B_\eta$ is a structural block of $\GG$ if and only if $B_\eta$ is a structural block of $\sS_\GG$.
\end{remark}

\Section{Structural homeostasis blocks}
\label{S:combinatorial_blocks}

In this section we give a combinatorial description of  $\sS_\GG$ in terms of input-output subnetworks defined by super-simple nodes. We do this in four stages.

\begin{description}
\item[\S\ref{SS:ordering}] shows that the super-simple nodes in $\GG$ can be ordered by $\iota > \rho_1> \cdots >\rho_q >o$ where $a>b$ if $b$ is downstream from $a$.  See Lemma~\ref{L:supersimple}.  

\item[\S\ref{SS:adjacent}] defines the sets $\LL$ of simple nodes that lie between adjacent super-simple nodes. See Definition~\ref{D:supsimpnet} and Lemma~\ref{L:supersimpleb}.

\item[\S\ref{SS:assignment}] shows how to assign each appendage node in $\sS_\GG$ to a unique $\LL$, thus forming combinatorially the subnetwork $\LL'$. See Definition~\ref{def:L'}.

\item[\S\ref{SS:equateKtoL'}] shows that the homeostasis matrix of $\sS_\GG$ can be put in block upper trimngular form with blocks given by the homeostasis matrices of the $\LL'$.  See Corollary~\ref{cor:H'}.
\end{description}

\subsection{Ordering of super-simple nodes}
\label{SS:ordering}

\begin{lemma} \label{L:supersimple}
Super-simple nodes in $\GG$ are ordered by $\iota o$-simple paths.  
\end{lemma}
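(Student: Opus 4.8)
The plan is to fix a single reference $\iota o$-simple path and to show that the left-to-right order in which the super-simple nodes appear along it is independent of the chosen path; this common order is then the asserted downstream ordering. First I would record that at least one $\iota o$-simple path exists: since $\GG$ is a core network, the node $o$ is downstream from $\iota$ (Definition~\ref{D:core} and Definition~\ref{D:updown}), so there is a directed path $\iota \rightsquigarrow o$, and shortcutting any repeated vertex yields a simple one. By Definition~\ref{D:simple_complementary}\ref{super-simple} every super-simple node lies on this path, and since the path is simple each such node occurs exactly once; hence the super-simple nodes acquire a strict linear order of appearance along any fixed $\iota o$-simple path.

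The heart of the argument is a splicing (exchange) step showing this order is path-independent. Suppose for contradiction that two super-simple nodes $a$ and $b$ appear in opposite orders on two $\iota o$-simple paths $S_1$ and $S_2$: say $a$ precedes $b$ on $S_1$ while $b$ precedes $a$ on $S_2$. Let $P$ be the initial segment of $S_1$ from $\iota$ to $a$, and let $R$ be the final segment of $S_2$ from $a$ to $o$. Because $b$ occurs after $a$ on the simple path $S_1$, the segment $P$ avoids $b$; because $b$ occurs before $a$ on the simple path $S_2$, the segment $R$ also avoids $b$. Splicing $P$ and $R$ at their common node $a$ produces a directed walk from $\iota$ to $o$ none of whose vertices is $b$.

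Next I would pass from this walk to an $\iota o$-simple path: every directed $\iota o$-walk contains an $\iota o$-simple path obtained by deleting cycles, and since such deletion only removes vertices, the resulting simple path still avoids $b$. This contradicts the super-simplicity of $b$ (Definition~\ref{D:simple_complementary}\ref{super-simple}), which requires $b$ to lie on every $\iota o$-simple path. Hence no such pair $a,b$ exists, and the order of appearance is the same on every $\iota o$-simple path. Finally, declaring $a > b$ when $a$ precedes $b$ in this common order, the connecting segment of any simple path is a directed path $a \rightsquigarrow b$, so $b$ is downstream from $a$; thus the appearance order coincides with the downstream order of Definition~\ref{D:updown}, and it is a strict total order since it is read off a single simple path.

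I expect the main obstacle to be the splicing step itself: one must pivot the two paths at the shared node $a$ (not at $b$), so that both retained segments $P$ and $R$ are guaranteed to avoid $b$, and one must resist assuming that the concatenation $P\cdot R$ is already simple, since $P$ and $R$ may share vertices where they overlap. Extracting a simple subpath, together with the observation that this operation can only delete vertices, is exactly what preserves the crucial property that $b$ is avoided, and it is this point that drives the contradiction.
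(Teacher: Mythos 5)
Your proposal is correct and is essentially the paper's own argument: the paper likewise splices the initial segment of one simple path up to one of the two nodes with the terminal segment of the other path from that node, observes both segments avoid the second super-simple node, and derives a contradiction with super-simplicity (you pivot at $a$ to exclude $b$, the paper pivots at $\rho_2$ to exclude $\rho_1$ — a symmetric choice). Your explicit step of extracting a simple path from the spliced walk is a detail the paper leaves implicit, and your addition is a welcome tightening rather than a different route.
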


\begin{proof} 
Let  $\rho_1$ and $\rho_2$ be distinct super-simple nodes and let $S$ and $T$ be two $\iota o$-simple paths. Suppose $\rho_2$ is downstream from $\rho_1$ along $S$ and $\rho_1$ is downstream from $\rho_2$ along $T$. It follows that there is a simple path from $\iota$ to $\rho_2$ along $T$ that does not contain $\rho_1$ and a simple path from $\rho_2$ to $o$ along $S$ that does not contain $\rho_1$.  Hence, there is an $\iota o$-simple path that does not contain $\rho_1$ contradicting the fact that $\rho_1$ is super-simple. 
\qed
\end{proof}

\subsection{Simple nodes between adjacent super-simple nodes}
\label{SS:adjacent}


A super-simple subnetwork $\LL(\rho_1,\rho_2)$ is a subnetwork consisting of all simple nodes between adjacent super-simple nodes $\rho_1$ and $\rho_2$ (see Definition~\ref{D:supsimpnet}). The following Lemma shows that each non-super-simple simple node belongs to a unique $\LL$. 

\begin{lemma} \label{L:supersimpleb}
Every non-super-simple simple node lies uniquely between two adjacent super-simple nodes.  
\end{lemma}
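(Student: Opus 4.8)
The plan is to leverage the total ordering of super-simple nodes provided by Lemma~\ref{L:supersimple}: since every super-simple node lies on every $\iota o$-simple path (Definition~\ref{D:simple_complementary}\ref{super-simple}), every such path threads \emph{all} the super-simple nodes, and by Lemma~\ref{L:supersimple} it threads them in one fixed downstream order. I would first fix the notation $\iota = \sigma_0 > \sigma_1 > \cdots > \sigma_q > \sigma_{q+1} = o$ for this ordering, and record that along any $\iota o$-simple path the nodes $\sigma_0,\ldots,\sigma_{q+1}$ appear in exactly this sequence. The adjacent super-simple pairs are then the consecutive pairs $(\sigma_i,\sigma_{i+1})$, and the whole statement becomes the assertion that a non-super-simple simple node occupies exactly one ``gap'' of this sequence.

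For existence, let $\rho$ be simple but not super-simple, and choose an $\iota o$-simple path $S$ containing $\rho$. Because $S$ carries $\sigma_0,\ldots,\sigma_{q+1}$ in order, and $\rho$ equals none of them, there is a unique index $i$ such that $\sigma_i$ precedes $\rho$ and $\sigma_{i+1}$ follows $\rho$ along $S$. Thus $S$ visits $\sigma_i$, then $\rho$, then $\sigma_{i+1}$ in that order, which by Definition~\ref{D:supsimpnet}(a) says precisely that $\rho$ lies between the adjacent super-simple pair $(\sigma_i,\sigma_{i+1})$.

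For uniqueness, I would argue by contradiction: suppose $\rho$ lies between $(\sigma_i,\sigma_{i+1})$ via a path $S$ and between $(\sigma_j,\sigma_{j+1})$ via a path $T$, with $i<j$. Split $S$ at $\rho$ into the initial segment $S_1$ (from $\iota$ to $\rho$) and the terminal segment $S_2$ (from $\rho$ to $o$), and likewise $T = T_1 \cdot T_2$. Along $S$ the node $\sigma_{i+1}$ lies after $\rho$, so $\sigma_{i+1} \notin S_1$; along $T$, since $i+1 \le j$, the node $\sigma_{i+1}$ lies among $\sigma_0,\ldots,\sigma_j$, hence before $\rho$, so $\sigma_{i+1} \notin T_2$. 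Therefore the concatenated directed walk $S_1 \cdot T_2$ from $\iota$ to $o$ omits $\sigma_{i+1}$. Since every $\iota o$-walk contains a simple $\iota o$-path, extracting one from $S_1 \cdot T_2$ produces an $\iota o$-simple path missing the super-simple node $\sigma_{i+1}$, contradicting Definition~\ref{D:simple_complementary}\ref{super-simple}. The symmetric argument disposes of the case $i>j$, so the pair is unique.

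The main obstacle is the bookkeeping in the uniqueness step: one must pin down exactly which super-simple nodes fall into each half of $S$ and of $T$ and verify that the spliced walk $S_1 \cdot T_2$ genuinely avoids $\sigma_{i+1}$. Once that is in hand, the remaining ingredient—that a directed $\iota o$-walk always contains a simple $\iota o$-path—is routine, so I would state it without elaboration.
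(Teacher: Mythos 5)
Your proof is correct and follows essentially the same route as the paper's: existence by locating $\rho$ in the unique gap between consecutive super-simple nodes along a simple path through it, and uniqueness by splicing the initial segment of one path with the terminal segment of the other to manufacture an $\iota o$-simple path that misses a super-simple node. Your write-up just makes explicit the splicing construction that the paper leaves implicit in the phrase ``then there would be an $\iota o$-simple path that does not contain $\rho_2$.''
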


\begin{proof} 
Let $\rho$ be a simple node that is not super-simple. By definition $\rho$ is on an $\iota o$-simple path $S$ and $\rho$ lies between two adjacent super-simple nodes $\rho_1$ and $\rho_2$ on $S$. Suppose $\rho$ is also on an $\iota o$-simple path $T$.  Then, by Lemma~\ref{L:supersimple} $\rho_1$ and $\rho_2$ must be ordered in the same way along $T$ and $\rho_1$ and $\rho_2$ must be adjacent super-simple nodes along $T$.  If $\rho$ is downstream from $\rho_2$ along $T$, then there would be an $\iota o$-simple path that does not contain $\rho_2$, which is a contradiction.  A similar comment holds if $\rho$ is upstream from $\rho_1$ along $T$. Therefore, $\rho$ is also between $\rho_1$ and $\rho_2$ on $T$.
\qed
\end{proof}

Definition~\ref{D:supsimpnet} implies that if $\rho_3$ is downstream from $\rho_2$ then
\begin{equation}\label{eq:LL-intersection}
\LL(\rho_1,\rho_2) \cap \LL(\rho_3,\rho_4) = 
\left\{ \begin{array}{cl} \emptyset & \mbox{if } \rho_3 \neq \rho_2 \\ \{\rho_2\} & \mbox{otherwise}
\end{array}\right.
\end{equation}

Lemma~\ref{L:structural} identifies several properties of the subnetworks $\LL$.

\begin{lemma} \label{L:structural}
Let the pairs of super-simple nodes $\rho_1,\rho_2$ and $\rho_3,\rho_4$ be adjacent. 
\begin{enumerate}[label=(\alph*)]
\item No arrow connects an upstream node $\rho$ in the subnetwork $\LL(\rho_1,\rho_2)$ to a downstream node $\tau$ in the  subnetwork $\LL(\rho_3,\rho_4)$ unless $\rho=\rho_2$, $\tau = \rho_3$ and $\rho_2$ and $\rho_3$ are adjacent super-simple nodes.
\item No arrow connects an upstream node $\rho$ in the subnetwork $\LL(\rho_1,\rho_2)$ to a downstream node $\tau$ in the  subnetwork $\LL(\rho_2,\rho_4)$ unless $\rho = \rho_2$ or $\tau = \rho_2$.
\item Suppose that a path of appendage nodes connects $\LL(\rho_1,\rho_2)$ to $\LL(\rho_3,\rho_4)$. Then $\rho_4$ is upstream from $\rho_1$.
\item Suppose that the appendage path component $\BB$ fails the no cycle condition and there is a cycle that connects nodes in $\BB$ with nodes in $C_S\setminus\BB$, where $C_S$ is a complementary subnetwork.  Then the nodes in $C_S\setminus\BB$ that are in the cycle are non-super-simple simple nodes that are contained in a unique super-simple subnetwork.
\end{enumerate}
\end{lemma}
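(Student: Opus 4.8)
The proof of all four parts rests on two structural facts from Definition~\ref{D:simple_complementary} together with a single path--splicing device. A super-simple node lies on \emph{every} $\iota o$-simple path, whereas an appendage node lies on \emph{no} $\iota o$-simple path. Hence if I can exhibit an $\iota o$-walk that omits a given super-simple node, then extracting a simple path from it (deleting closed subwalks only removes nodes) yields an $\iota o$-simple path missing that super-simple node, a contradiction; dually, producing an $\iota o$-simple path that contains an appendage node is also a contradiction. The splicing device is: given a simple node $\rho$ on an $\iota o$-simple path $S$ and a simple node $\tau$ on an $\iota o$-simple path $T$, concatenate the initial segment of $S$ from $\iota$ to $\rho$, a chosen connection from $\rho$ to $\tau$, and the final segment of $T$ from $\tau$ to $o$. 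Throughout I use Lemma~\ref{L:supersimple} (super-simple nodes are totally ordered by the downstream relation) and Lemma~\ref{L:supersimpleb} (each non-super-simple simple node lies in a unique $\LL$) to locate the super-simple nodes on each spliced segment.

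For parts (a) and (b) the connection is the forbidden arrow $\rho\to\tau$. Since $\rho\in\LL(\rho_1,\rho_2)$, the initial segment of $S$ contains the super-simple nodes from $\iota$ down to $\rho_1$, and contains $\rho_2$ only if $\rho=\rho_2$; since $\tau\in\LL(\rho_3,\rho_4)$, the final segment of $T$ contains the super-simple nodes from $\rho_4$ down to $o$, and contains $\rho_3$ only if $\tau=\rho_3$. In case (b), where the two subnetworks share $\rho_2=\rho_3$, the assumption $\rho\neq\rho_2$ and $\tau\neq\rho_2$ makes the spliced walk omit $\rho_2$; extracting a simple path contradicts super-simplicity of $\rho_2$, so $\rho=\rho_2$ or $\tau=\rho_2$. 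In case (a) the same count shows that unless $\rho=\rho_2$ and $\tau=\rho_3$ some super-simple node is omitted; and once $\rho=\rho_2,\ \tau=\rho_3$, splicing the arrow $\rho_2\to\rho_3$ omits any super-simple node strictly between $\rho_2$ and $\rho_3$, forcing $\rho_2$ and $\rho_3$ to be adjacent.

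For part (c) the connection is the appendage path $P$ from $\rho$ to $\tau$; because $P$ runs through appendage nodes it meets the simple paths $S$ and $T$ only at its endpoints. I argue by contradiction: if $\LL(\rho_3,\rho_4)$ did not lie upstream of $\LL(\rho_1,\rho_2)$, then by Lemma~\ref{L:supersimple} the initial segment of $S$ from $\iota$ to $\rho$ and the final segment of $T$ from $\tau$ to $o$ must be node-disjoint, for a common node $v$ would give a directed path $\tau\to\cdots\to v\to\cdots\to\rho$ placing $v$ downstream of $\rho_3$ yet upstream of $\rho_2$, in conflict with the assumed ordering. Disjointness makes the spliced walk an \emph{honest} $\iota o$-simple path containing the appendage nodes of $P$, contradicting that appendage nodes lie on no simple path. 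Hence $\rho_4$ is upstream from $\rho_1$.

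Part (d) is the main obstacle and assembles the previous parts. By the equivalent formulation of the no-cycle condition, the witnessing cycle lies entirely in $C_S$; since the super-simple nodes all lie on $S$, none lies in $C_S$, so the cycle contains no super-simple node and its nodes in $C_S\setminus\BB$ are non-super-simple. Because $\BB$ is a \emph{maximal} appendage path component, the only appendage nodes on the cycle are those of $\BB$, so the remaining cycle-nodes are simple. To force these simple nodes into a single $\LL$, assign to each its interval index (increasing downstream) and traverse the cycle: a direct arrow between two non-super-simple simple nodes cannot cross into another $\LL$ by parts (a)--(b) (every crossing arrow requires a super-simple endpoint, which the cycle lacks), while every appendage arc changes the index only upstream by part (c). Thus the index is non-increasing around the cycle, and since the cycle closes it must be constant; all the simple cycle-nodes therefore lie in one $\LL$, unique by Lemma~\ref{L:supersimpleb}. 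The delicate point throughout---and the crux of (d)---is guaranteeing that each spliced object is genuinely simple (the segment-disjointness in (c)) and that the forbidden-crossing bookkeeping in (a)--(b) holds in both directions, so that the retained appendage node or omitted super-simple node actually survives into the extracted simple path.
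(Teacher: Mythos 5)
Your proof is correct and follows essentially the same route as the paper's: splice an initial segment of one $\iota o$-simple path, a connecting arrow or appendage path, and a final segment of another, then derive a contradiction either from an omitted super-simple node or from a retained appendage node. The differences are only refinements of steps the paper itself glosses over --- your explicit segment-disjointness check in (c) and the monotone $\LL$-index traversal of the cycle in (d) replace the paper's direct (and equally informal) splicing between two simple cycle nodes --- so the substance is the same.
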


\begin{proof}
\begin{enumerate}[label=(\alph*)]
\item Suppose an arrow connects a node $\rho\neq\rho_2$ in $\LL(\rho_1,\rho_2)$ to a node $\tau$ in $\LL(\rho_3,\rho_4)$ where $\rho_3$ is downstream from $\rho_2$.  Then there would be an $\iota o$-simple path that connects $\rho_1$ to $\rho$ to $\tau$ to $\rho_4$ in that order.  That $\iota o$-simple path would miss $\rho_2$, contradicting the fact that $\rho_2$ is super-simple.  A similar statement holds if $\tau\neq\rho_3$ or $\rho_2$ and $\rho_3$ are not adjacent. This proves (a).  

\item Suppose an arrow connects a node $\rho\neq\rho_2$ in $\LL(\rho_1,\rho_2)$ to a node $\tau\neq\rho_2$ in $\LL(\rho_2,\rho_4)$.  Then there would be an $\iota o$-simple path that connects $\rho_1$ to $\rho$ to $\tau$ to $\rho_4$ in that order.  That $\iota o$-simple path would miss $\rho_2$, contradicting the fact that $\rho_2$ is super-simple.  

\item Suppose $\rho_4$ is strictly downstream from $\rho_1$.  Then there is an $\iota o$-simple path from $\iota$ to $\rho_1$ to some nodes in $\AA_\GG$ to $\rho_4$ to $o$.  Therefore, at least one node in $\AA_\GG$ is not an appendage node.  A contradiction.

\item If the cycle contains a super-simple node, then the cycle cannot be in $C_S$.  Since the cycle must contain simple nodes that simple node cannot be super-simple. 

Suppose the cycle contains a simple node $\tau_1$ in $\LL(\rho_1,\rho_2)$ and another simple node $\tau_2$ in $\LL(\rho_3,\rho_4)$ where $\rho_3$ is downstream from $\rho_1$, then there would be a path connecting $\tau_1$ to $\tau_2$ that does not contain any super-simple node. This would lead to an $\iota o$-simple path from $\rho_1$ to $\tau_1$ to $\tau_2$ to $\rho_4$ that misses $\rho_2$ and $\rho_3$. Hence, the simple nodes contained in the cycle must come from a single super-simple subnetwork.
\end{enumerate}
\qed
\end{proof}

\begin{remark} \rm \label{rem:non-super-simple-simple}
Lemma~\ref{L:structural} (a,b) implies that two different super-simple subnetworks $\LL(\rho_1,\rho_2)$ and $\LL(\rho_3,\rho_4)$ where $\rho_2$ is upstream from $\rho_3$ can only be connected by  either having a common super-simple node ($\rho_2=\rho_3$) or  by having an arrow $\rho_2\to\rho_3$ where $\rho_2$ and $\rho_3$ are adjacent  super-simple nodes. 
\end{remark}

\subsection{Assignment of appendage nodes to $\LL$}
\label{SS:assignment}

By Lemma \ref{L:structural} (d) any appendage path component that fails the cycle condition forms cycles with non-super-simple simple nodes in a unique super-simple subnetwork. We can therefore expand a super-simple subnetwork $\LL$ to a super-simple structural subnetwork $\LL'$ by recruiting all appendage nodes that form cycles with nodes in $\LL$ (see Definition \ref{def:L'}). 


It follows that if $\rho_3$ is downstream from $\rho_2$, then
\begin{equation}\label{eq:LL'-intersection}
\LL'(\rho_1,\rho_2) \cap \LL'(\rho_3,\rho_4) = 
\left\{ \begin{array}{cl} \emptyset & \mbox{if } \rho_3 \neq \rho_2 \\ \{\rho_2\} & \mbox{otherwise}
\end{array}\right.
\end{equation} 
In particular, each appendage node in $\GG$ is attached to at most one $\LL$.

\begin{remark} \rm \label{rem:non-super-simple}
Suppose $\rho_3$ is downstream from $\rho_2$. By Lemma \ref{L:structural} (c) and Remark \ref{rem:non-super-simple-simple}, no arrow connects a node $\rho$ in $\LL'(\rho_1,\rho_2)\setminus\{\rho_2\}$ to a node $\tau$ in $\LL'(\rho_3,\rho_4)$ unless $\rho_2=\rho_3$ and $\tau=\rho_2$.
\end{remark}

\subsection{Relating $\sS_\GG$ with $\LL'$}
\label{SS:equateKtoL'}

\begin{proposition} \label{thm:HKK}
Let $\KK$ be an input-output core subnetwork of $\sS_\GG$ with $q+1$ super-simple nodes $\rho_1,\ldots,\rho_{q+1}$ in downstream order in $\GG$. Then the homeostasis matrix $H_\KK$ of $\KK$ can be written in an upper block triangular form 
\begin{equation} \label{eq:HKK}
H_\KK = \Matrixc{H_{\LL'_1} & \ast & \cdots & \ast\\
0 & H_{\LL'_2} & \cdots & \ast\\
\vdots&  & \ddots&\vdots\\
0 & 0& 0& H_{\LL'_q}}
\end{equation}
where for  $\ell=1,\ldots,q$, $H_{\LL'_\ell}$ is the homeostasis matrix of the super-simple structural subnetwork $\LL'_\ell = \LL'(\rho_\ell,\rho_{(\ell+1)})$. 
\end{proposition}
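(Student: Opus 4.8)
The plan is to exhibit explicit row and column orderings of $H_\KK$ that realize the block form \eqref{eq:HKK}, and then to check that the diagonal blocks are the $H_{\LL'_\ell}$ while the blocks strictly below the diagonal vanish. Recall that the rows of a homeostasis matrix are indexed by all nodes except the input and the columns by all nodes except the output, and that the entry in row $j$, column $m$ is the coupling $f_{j,x_m}$ of the arrow $m\to j$ (a self-coupling when $m=j$). The crucial bookkeeping point is the ``off-by-one'' shift between rows and columns. I would assign to block $\ell$ the row set $\LL'_\ell\setminus\{\rho_\ell\}$ (the non-input nodes of $\LL'_\ell=\LL'(\rho_\ell,\rho_{\ell+1})$) and the column set $\LL'_\ell\setminus\{\rho_{\ell+1}\}$ (its non-output nodes). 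Applying Lemma~\ref{L:supersimpleb} to $\KK$ (each non-super-simple simple node lies in a unique $\LL$), together with the assignment of appendage nodes in Section~\ref{SS:assignment} and the intersection formula \eqref{eq:LL'-intersection}, each node of $\KK$ other than $\iota$ appears exactly once as a row index and each node other than $o$ exactly once as a column index; in particular an interior super-simple node $\rho_{\ell+1}$ appears as a row of block $\ell$ (its role as output of $\LL'_\ell$) and as a column of block $\ell+1$ (its role as input of $\LL'_{\ell+1}$). This asymmetric shift is exactly what makes the resulting form upper rather than lower triangular.

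With these orderings the diagonal block $(\ell,\ell)$ has row set $\LL'_\ell\setminus\{\rho_\ell\}$ and column set $\LL'_\ell\setminus\{\rho_{\ell+1}\}$, which are precisely the row and column index sets of the homeostasis matrix of the input-output network $\LL'_\ell$ with input $\rho_\ell$ and output $\rho_{\ell+1}$ (Definition~\ref{def:L'}). Its entries $f_{j,x_m}$ are the couplings and self-couplings internal to $\LL'_\ell$, which coincide with those of $\KK$ since $\LL'_\ell$ carries all arrows of $\KK$ between its nodes; hence the $(\ell,\ell)$ block equals $H_{\LL'_\ell}$.

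It remains to show that the block in position $(a,b)$ with $a>b$ is zero. A nonzero entry there would be $f_{j,x_m}$ with $j\in\LL'_a\setminus\{\rho_a\}$ and $m\in\LL'_b\setminus\{\rho_{b+1}\}$. First, the self-coupling case $m=j$ cannot occur: by \eqref{eq:LL'-intersection} the only node shared by $\LL'_a$ and $\LL'_b$ for $a>b$ is $\rho_{b+1}=\rho_a$ (and only when $a=b+1$), and this node is excluded from the row set of block $a$ and from the column set of block $b$. Thus a nonzero entry would force an arrow $m\to j$ with $m\neq j$. I would then apply Remark~\ref{rem:non-super-simple} with $(\rho_1,\rho_2,\rho_3,\rho_4)=(\rho_b,\rho_{b+1},\rho_a,\rho_{a+1})$: since $a>b$, node $\rho_a$ is downstream from $\rho_{b+1}$, and $m$ lies in $\LL'(\rho_b,\rho_{b+1})\setminus\{\rho_{b+1}\}$, so the remark forbids any arrow from $m$ into $\LL'_a$ unless $\rho_{b+1}=\rho_a$ and the head equals $\rho_{b+1}$. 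But the head $j$ is a non-input node of $\LL'_a$, so $j\neq\rho_a=\rho_{b+1}$; the exception never applies and the block $(a,b)$ vanishes. This establishes \eqref{eq:HKK}.

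The main obstacle is the index bookkeeping rather than any deep new idea: one must track the asymmetric row/column shift carefully to recognize that ``below the diagonal'' corresponds exactly to arrows that would bypass an intervening super-simple node, and to confirm that no self-coupling is hiding in a below-diagonal block. Once the partition of rows and columns is pinned down, the vanishing is a direct consequence of the structural separation results (Lemma~\ref{L:structural} and Remark~\ref{rem:non-super-simple}), which encode that super-simple nodes lie on every $\iota o$-simple path and hence cannot be skipped.
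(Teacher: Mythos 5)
Your proof is correct and follows essentially the same route as the paper: partition the nodes of $\KK$ among the subnetworks $\LL'(\rho_\ell,\rho_{\ell+1})$, identify the diagonal blocks with the homeostasis matrices $H_{\LL'_\ell}$, and invoke Remark~\ref{rem:non-super-simple} (via Lemma~\ref{L:structural}) to kill the below-diagonal entries. Your treatment of the row/column index shift (rows $\LL'_\ell\setminus\{\rho_\ell\}$ versus columns $\LL'_\ell\setminus\{\rho_{\ell+1}\}$) is in fact more explicit than the paper's single partition into classes $\LL'_\ell\setminus\{\rho_{\ell+1}\}$, but the substance is identical.
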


\begin{proof}
Since $\KK$ is an input-output core subnetwork of $\sS_\GG$, it follows that $\KK$ consists of all simple nodes between adjacent super-simple nodes of $\KK$ and appendage nodes that form cycles with non-super-simple simple nodes in $\KK$. Hence, $\KK$ consists of nodes and arrows in  $\LL'(\rho_1,\rho_2)\cup\cdots\cup\LL'(\rho_q,\rho_{q+1})$ plus backward arrows between different super-simple structural subnetworks.  Hence, for $\ell = 1,\ldots, q$, nodes in $\KK$ can be partitioned into disjoint classes: $(\ell) = \LL'_\ell \setminus \{\rho_{\ell+1}\}$.  We claim that the homeostasis matrix $H_\KK$ of $\KK$ is given by \eqref{eq:HKK}.

It follows from Remark \ref{rem:non-super-simple} that an arrow from a node in one class ($\ell$) to a node in another class (j) where $\rm j>\ell$ can exist only when the two classes are adjacent (that is,  $\rm j = \ell +1$) and the head of this arrow is the input node $\rho_{\ell+1}$ of the downstream class ($\rm \ell+1$). Since entries below $H_{\KK_\ell}$ denote the arrows from nodes in class ($\ell$) to nodes in classes ($\ell+1$) through (q) except the input node $\rho_{\ell+1}$ in class ($\ell+1$). It follows that all entries below $H_{\KK_\ell}$ are zero and hence $H_\KK$ has the upper block triangular form shown in \eqref{eq:HKK}. 
\qed
\end{proof}

\begin{corollary} \label{cor:H'}
Suppose that $\tau_1,\ldots,\tau_{p+1}$ are the super-simple nodes of $\GG$ in downstream order.   
Then the homeostasis matrix $H'$ of $\sS_\GG$ can be written in upper block triangular form 
\begin{equation} \label{eq:H'}
H' = \Matrixc{B_1' & \ast & \cdots & \ast\\
0 & B_2' & \cdots & \ast\\
\vdots& \vdots&\ddots&\vdots\\
0 & 0& 0& B_p'}
\end{equation}
where $B_\ell'$ is the homeostasis matrix of the super-simple structural subnetwork $\LL'(\tau_{\ell},\tau_{\ell+1})$ for $1\leq\ell\leq p$.  In addition, $p$ is less than or equal to the number $m$ of structural blocks $\KK_\eta$.
\end{corollary}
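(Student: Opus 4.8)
The plan is to read this corollary as the special case of Proposition~\ref{thm:HKK} in which the subnetwork $\KK$ is taken to be $\sS_\GG$ itself, and then to establish the inequality $p\le m$ by a factor-counting argument on $\det(H')$. So I would proceed in two stages: first produce the block form \eqref{eq:H'}, then count irreducible factors.

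For the block form, I first check that $\sS_\GG$ qualifies as an input-output core subnetwork of $\sS_\GG$, so that Proposition~\ref{thm:HKK} applies with $\KK=\sS_\GG$. By Lemma~\ref{L:H'block}, $\sS_\GG$ is an input-output network with homeostasis matrix $H'$, and by the discussion following that lemma $\GG$ and $\sS_\GG$ share the same input, output, simple, and super-simple nodes; in particular the super-simple nodes of $\sS_\GG$ in downstream order are exactly $\tau_1,\ldots,\tau_{p+1}$. To see $\sS_\GG$ is a core network I verify that every node lies on a directed path from $\iota$ to $o$ inside $\sS_\GG$: a simple node does so via an $\iota o$-simple path, all of whose nodes are simple and hence in $\sS_\GG$; and each cycle-appendage node is, by Lemma~\ref{L:structural}(d), on a cycle through non-super-simple simple nodes of a single $\LL$, which supplies directed paths to and from a simple node and hence, via that simple node, both from $\iota$ and to $o$. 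Granting this, Proposition~\ref{thm:HKK} (with $q=p$ and $\rho_\ell=\tau_\ell$) yields precisely \eqref{eq:H'} with $B_\ell' = H_{\LL'(\tau_\ell,\tau_{\ell+1})}$.

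For the inequality I pass to determinants. From the block triangular form, $\det(H') = \prod_{\ell=1}^{p}\det(B_\ell')$. Each $\LL'(\tau_\ell,\tau_{\ell+1})$ is an input-output network with input $\tau_\ell$ and output $\tau_{\ell+1}$ carrying at least one simple path from $\tau_\ell$ to $\tau_{\ell+1}$ (the portion between $\tau_\ell$ and $\tau_{\ell+1}$ of any $\iota o$-simple path of $\GG$), so by the determinant formula (Theorem~\ref{L:summand_form}) $\det(B_\ell')$ has a nonzero summand and is a nonzero homogeneous polynomial of degree $\ge 1$. Thus each $\det(B_\ell')$ contributes at least one irreducible polynomial factor, and $\det(H')$ has at least $p$ irreducible factors counted with multiplicity. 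On the other hand, the proof of Lemma~\ref{L:H'block} writes $\det(H) = \big(\prod_i \det(J_{\KK_i})\big)\det(H')$, where the $\det(J_{\KK_i})$ are the appendage blocks; matching this against the Frobenius-König factorization \eqref{eq:FK_factors} and invoking Theorem~\ref{T:block}, the appendage factors account for exactly the appendage blocks, so $\det(H')$ accounts for exactly the $m$ structural blocks. Comparing the two counts gives $p\le m$.

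The main obstacle I anticipate is the bookkeeping in this last inequality rather than the triangularization, which is inherited from Proposition~\ref{thm:HKK}. Two points need care: that the $m$ irreducible factors of $\det(H')$ are all structural and that none are appendage, which rests on the clean separation $\det(H)=\big(\prod_i\det(J_{\KK_i})\big)\det(H')$ of Lemma~\ref{L:H'block} together with the irreducibility correspondence of Theorem~\ref{T:block} and unique factorization of polynomials; and that no $\det(B_\ell')$ degenerates to a constant, which is precisely why the existence of a simple path inside each $\LL'$ is essential. Finally, I would note that only $p\le m$ (not equality) can be obtained here, since at this stage a single block $B_\ell'$ may still split into several irreducible structural components; the reverse inequality requires the later irreducibility results (Theorems~\ref{L:KetaL'} and \ref{T:structural_supersimple}).
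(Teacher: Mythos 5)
Your proposal is correct, and its first half (obtaining \eqref{eq:H'} by applying Proposition~\ref{thm:HKK} with $\KK=\sS_\GG$) is exactly the paper's route; your verification that $\sS_\GG$ is a core subnetwork via Lemma~\ref{L:structural}(d) is a detail the paper asserts from Definition~\ref{D:structure_net} without spelling out, and it is a welcome addition. Where you genuinely diverge is the inequality $p\le m$: the paper argues at the level of matrices, using the combinatorial fact that the Frobenius--K\"onig decomposition realizes the \emph{maximum} number of diagonal blocks obtainable by row and column permutations, so the $p$-block decomposition \eqref{eq:H'} cannot exceed the $m$ irreducible blocks of $H'$; you instead argue at the level of polynomials, writing $\det(H')=\prod_\ell\det(B'_\ell)$ and invoking unique factorization together with the identification of the irreducible factors of $\det(H')$ with the $m$ structural factors of $\det(H)$ from Lemma~\ref{L:H'block}. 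By Theorem~\ref{T:block} these two viewpoints are equivalent, but they impose different burdens: your route must (and does) check that no $\det(B'_\ell)$ degenerates to a unit, which you handle by exhibiting a simple path from $\tau_\ell$ to $\tau_{\ell+1}$ inside each $\LL'$ and appealing to Theorem~\ref{L:summand_form}, whereas the paper's block-counting route sidesteps polynomial nondegeneracy entirely but leans on the maximality property of the decomposition \eqref{eq:FK_factors}. Both are sound; yours is slightly longer but makes explicit a nonvanishing fact that the determinant-based parts of the paper use repeatedly, and you are right that only $p\le m$ (not equality) is available at this stage.
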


\begin{proof}
It follows from Definition \ref{D:structure_net} that $\sS_\GG$ has the same super-simple nodes as $\GG$ and $\sS_\GG$ is a core subnetwork. By Proposition~\ref{thm:HKK}, the homeostasis matrix $H'$ of $\sS_\GG$ is given by \eqref{eq:H'}. 
The number of irreducible blocks is the number of $\KK_\eta$ and that is $m$.  Since $m$ is the maximum number of blocks in $H'$, it follows that $m\geq p$ by \eqref{eq:H'}.  
\qed
\end{proof}

If we can show that the number of super-simple nodes in $K_\eta$ is two, then we will show that $\KK_\eta$ is core equivalent to one of the $\LL'$. 

\subsection{Relation between structural homeostasis and $\LL'$}
\label{S:structural_blocks}

This section shows that each structural subnetwork $\KK_\eta$ is core equivalent to the $\LL'$ having the same input node.   Specifically, we show that the input and output nodes in $\KK_\eta$  are adjacent super-simple and that no other nodes in $\KK_\eta$ are super-simple.

\begin{proposition} \label{T:adjacent}
Let $\KK_\eta$ be an input-output subnetwork of $\GG$ associated with an irreducible structural homeostasis matrix $B_\eta$ in \eqref{eq:K_matrices(k-1)}. Then the input and output nodes of $\KK_\eta$ are super-simple nodes.
\end{proposition}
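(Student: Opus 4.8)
The plan is to treat the input node $\ell$ and the output node $j$ of $\KK_\eta$ separately but by the same device; they are dual to each other under interchanging rows and columns, so I describe $\ell$ in full and indicate at the end how $j$ follows. If $\ell = \iota$ there is nothing to prove, since $\iota$ is super-simple by Definition~\ref{D:simple_complementary}. So assume $\ell$ is a regulatory node, in which case the self-coupling $f_{\ell,x_\ell}$ is a (generically nonzero) diagonal entry of $H$.

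The main observation I would establish first is that $f_{\ell,x_\ell}$ cannot occur in any nonzero summand of $\det(H)$. By Theorem~\ref{lem:associated_network}, $B_\eta$ has the form \eqref{eq:K_matrices(k-1)}, so (see Remark~\ref{rm:associated_network}) the $H$-columns carried by $B_\eta$ are $\{\rho_1,\dots,\rho_{k-1},\ell\}$ and the $H$-rows carried by $B_\eta$ are $\{\rho_1,\dots,\rho_{k-1},j\}$. In particular $\ell$ indexes a column of $B_\eta$ but not a row; indeed $\ell\neq\rho_i$ for all $i$ and $\ell\neq j$, since otherwise a column would repeat or $B_\eta$ would have $k$ self-couplings. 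Consequently the entry $f_{\ell,x_\ell}$, which sits in row $\ell$ and column $\ell$ of $H$, lands in the block-column of $B_\eta$ but not in its block-row, so it belongs to an off-diagonal ($\ast$) block of $PHQ$ and is an entry of no diagonal block. By Lemma~\ref{L:summand_products} (iterated over all $m$ blocks) every nonzero summand of $\det(PHQ)$ is a product of summands of the diagonal blocks $B_1,\dots,B_m$, and by Lemma~\ref{L:rows_summands_c} these are exactly the nonzero summands of $\det(H)$; hence $f_{\ell,x_\ell}$ appears in none of them.

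I would then derive a contradiction from assuming $\ell$ is not super-simple. Since $\GG$ is a core network there is at least one $\iota o$-simple path, and $\ell$ failing to be super-simple produces an $\iota o$-simple path $S_0$ with $\ell\notin S_0$, so $\ell$ lies in the nonempty complementary subnetwork $C_{S_0}$. By the determinant formula (Theorem~\ref{L:summand_form}) the contribution of $S_0$ to $\det(H)$ is $\pm F_{S_0}\det(J_{C_{S_0}})$, and $\det(J_{C_{S_0}})$ contains the main-diagonal monomial $\prod_{q\in C_{S_0}} f_{q,x_q}$ coming from the identity permutation. This monomial is generically nonzero and cannot cancel, since no other permutation yields the product of all self-couplings; multiplied by $F_{S_0}$ it is therefore a genuine nonzero summand of $\det(H)$ containing the factor $f_{\ell,x_\ell}$, as $\ell\in C_{S_0}$. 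This contradicts the previous paragraph. Hence every $\iota o$-simple path contains $\ell$; as at least one such path exists, $\ell$ is simple and lies on all of them, i.e.\ $\ell$ is super-simple.

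Finally, the output node $j$ is handled identically after exchanging the words ``column'' and ``row'': $j$ indexes a row but not a column of $B_\eta$, so $f_{j,x_j}$ lies in an off-diagonal block and never appears in a nonzero summand of $\det(H)$; and if some $\iota o$-simple path avoided $j$, the same main-diagonal monomial would reintroduce $f_{j,x_j}$, a contradiction. The step I expect to be most delicate is the bookkeeping in the main observation, namely verifying that $\ell$ (resp.\ $j$) genuinely indexes a column (resp.\ row) of $B_\eta$ but not a row (resp.\ column), so that $f_{\ell,x_\ell}$ (resp.\ $f_{j,x_j}$) falls strictly outside every diagonal block. Once that placement is pinned down, the determinant formula and the harmless main-diagonal monomial supply the contradiction immediately.
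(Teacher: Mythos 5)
Your proof is correct and follows essentially the same route as the paper's: both arguments rest on showing that the self-coupling $f_{\ell,x_\ell}$ (resp.\ $f_{j,x_j}$) cannot occur in any nonzero summand of $\det(H)$ because it falls outside every diagonal block of $PHQ$, and then deriving a contradiction from the determinant formula, since an $\iota o$-simple path avoiding $\ell$ would contribute the diagonal monomial of $\det(J_{C_S})$ containing that self-coupling. Your write-up explicitly justifies the block-position claim that the paper only asserts (``From the form of $PHQ$\dots''), but the underlying mechanism is identical.
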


\begin{proof}
We prove this theorem by proving that both the input and output nodes $\ell$ and $j$ of $\KK_\eta$ are on the $\iota o$-simple path associated with $\alpha$ for all summands $\alpha$ of $\det(H)$.  Theorem~\ref{L:summand_form} (the determinant theorem) implies that $\alpha$ has the form $F_S g_{C_S}$ where $S$ is an $\iota o$-simple path,  $C_S$ is the complementary subnetwork to $S$, $F_S$ is the product of the coupling strengths within $S$, $J_{C_S}$ is the Jacobian matrix of the admissible system corresponding to  $C_S$, and $g_{C_S}$ is a summand in $\det(J_{C_S})$. 

It follows from Lemma \ref{L:summand_products} that the summands of form \eqref{eq:PHQ} are the summands of $A$ times the summands of $B_\eta$ times the summands of E.  Hence, every nonzero summand of $\det(H)$ contains a nonzero summand of $\det(B_\eta)$ as a factor.  Since $\ell$ and $j$ are the input output nodes for the homeostasis matrix $B_\eta$, it follows that every nonzero summand of $\det(B_\eta)$, and hence $\det(H)$, has both $f_{m,x_\ell}$ (where $m$ is one of $\rho_1,\ldots,\rho_{k-1}, j$) and $f_{j,x_n}$ (where $n$ is one of $\rho_1,\ldots,\rho_{k-1}, \ell$) as factors. 

From the form of $PHQ$ (and hence $H$) we see that $f_{\ell,x_\ell}$ and $f_{j,x_j}$ are not factors of nonzero summands of $\det(H)$.  Suppose the summand $\alpha$ has $f_{m,x_\ell}$ as a factor, then $f_{m,x_\ell}$ is either a factor of $F_S$ or not a factor of $F_S$. In the first case, it follows that the arrow $\ell \to m$ is on the simple path $S$. Hence, the node $\ell$ is contained in $S$. In the second case, suppose $f_{m,x_\ell}$ is not a factor of $F_S$, then it must be a factor of $g_{C_S}$.  That implies that $\ell$ is a node in $C_S$. It follows that there exists another nonzero summand $\alpha'$ of $\det(H)$ which contains $f_{\ell,x_\ell}$ as a factor, which is is a contradiction. Therefore, we conclude every $\iota o$-simple path contains node $\ell$. By the same type of argument we can also conclude that every $\iota o$-simple path contains node $j$.
\qed
\end{proof}

\begin{proposition} \label{T:structural_supersimple_two}
If a structural block $B_\eta$ of $\GG$ is irreducible, then $\KK_\eta$ is an input-output subnetwork that has exactly two super-simple nodes.
\end{proposition}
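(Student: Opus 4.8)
The plan is to combine Proposition~\ref{T:adjacent} with the block decomposition of Proposition~\ref{thm:HKK}. By Proposition~\ref{T:adjacent} the input node $\ell$ and the output node $j$ of $\KK_\eta$ are super-simple, and since they are the distinguished input and output of the input-output network $\KK_\eta$ they are super-simple \emph{in} $\KK_\eta$. Thus $\KK_\eta$ always has at least two super-simple nodes, and the entire content of the proposition is to forbid a third one. I also record that, by Remark~\ref{rm:associated_network}, $\KK_\eta$ carries $B_\eta$ as its homeostasis matrix in the form \eqref{eq:K_matrices(k-1)} and has no backward arrows, so $H_{\KK_\eta}=B_\eta$.

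Before the counting argument I would verify that $\KK_\eta$ is a \emph{core} input-output network, which is needed both for the notion of super-simple node to make sense and for Proposition~\ref{thm:HKK} to be invocable. This follows from the irreducibility (full indecomposability) of $B_\eta$: full indecomposability forces every nonzero entry of $B_\eta$ to occur in some nonzero summand of $\det(B_\eta)$, and by the determinant formula (Theorem~\ref{L:summand_form}) each such summand is the product of the couplings along an $\ell j$-simple path $S$ with a summand of $\det(J_{C_S})$. Tracking any node $\tau$ of $\KK_\eta$ through a summand containing one of its incident entries shows that $\tau$ lies either on $S$ or on a cycle of $C_S$ linked to $S$; in either case $\tau$ is downstream from $\ell$ and upstream from $j$. (A node carrying only a self-coupling would split off as a $1\times 1$ diagonal block, contradicting full indecomposability when $k\ge 2$.) Hence $\KK_\eta$ is core, and since $B_\eta$ is a structural block it lies inside $\sS_\GG$ by Lemma~\ref{L:H'block} and Remark~\ref{R:G'}.

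With this in place, suppose for contradiction that $\KK_\eta$ has a third super-simple node $\rho$. By Lemma~\ref{L:supersimple} the super-simple nodes of $\KK_\eta$ are totally ordered by the downstream relation; since $\ell$ is upstream from every node and $j$ is downstream from every node, $\rho$ lies strictly between them, giving three super-simple nodes $\ell,\rho,j$ in downstream order. Applying Proposition~\ref{thm:HKK} to the core subnetwork $\KK_\eta$ of $\sS_\GG$ then writes $H_{\KK_\eta}=B_\eta$ in the block upper triangular form \eqref{eq:HKK} with the two nonempty diagonal blocks $H_{\LL'(\ell,\rho)}$ and $H_{\LL'(\rho,j)}$. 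This exhibits $B_\eta$ as reducible, contradicting the irreducibility hypothesis. Therefore $\KK_\eta$ has no super-simple node other than $\ell$ and $j$, that is, exactly two.

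The main obstacle is the preliminary step that $\KK_\eta$ is core: the counting argument itself is immediate from Proposition~\ref{thm:HKK} once its hypotheses are known to hold. Making the core claim precise requires the correct combinatorial dictionary between full indecomposability of $B_\eta$ and the graph structure of $\KK_\eta$ --- in particular, that the complementary cycles appearing in summands of $\det(B_\eta)$ are genuinely attached to an $\ell j$-simple path rather than floating free. The conceptual summary of the whole argument is that a super-simple regulatory node would produce exactly the block triangularization guaranteed by Proposition~\ref{thm:HKK}, which an irreducible block cannot admit.
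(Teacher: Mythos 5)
Your proof is correct and follows essentially the same route as the paper's: both establish that $\KK_\eta$ is a core input-output subnetwork of $\sS_\GG$ and then invoke Proposition~\ref{thm:HKK} to show that a third super-simple node would force a nontrivial block triangularization of $B_\eta$, contradicting irreducibility. The only difference is that you expand the ``$\KK_\eta$ is core'' step (which the paper disposes of in one clause via Remark~\ref{R:G'} and irreducibility) into an explicit argument via the determinant formula; that elaboration is consistent with the paper's reasoning and does not change the approach.
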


\begin{proof}
By Remark~\ref{R:G'}, $\KK_\eta$ is an input-output subnetwork of $\sS_\GG$ and $\KK_\eta$ is a core subnetwork because it is irreducible. Suppose in addition to the input and output nodes there are other $q>1$ super-simple nodes in $\KK_\eta$, then by Proposition~\ref{thm:HKK}, the homeostasis matrix $B_\eta$ of $\KK_\eta$ can be written in an upper block triangular form with $q+1>2$ diagonal blocks and hence $\KK_\eta$ is reducible, a contradiction.
\qed
\end{proof}

\begin{corollary} \label{C:adjacent}
The input and output nodes of a structural homeostasis block are adjacent super-simple nodes. 
\end{corollary}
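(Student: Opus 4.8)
The plan is to combine the two preceding propositions with the block structure from Corollary~\ref{cor:H'}. By Proposition~\ref{T:adjacent} the input node $\ell$ and the output node $j$ of $\KK_\eta$ are super-simple nodes of $\GG$, and $\ell\neq j$. By Lemma~\ref{L:supersimple} the super-simple nodes of $\GG$ are totally ordered by the downstream relation, with $\ell$ upstream of $j$ (since $\ell$ is the input and $j$ the output of $\KK_\eta$). It therefore suffices to show that no super-simple node of $\GG$ lies strictly between $\ell$ and $j$ in this order.

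First I would reduce to the structural subnetwork. By Remark~\ref{R:G'}, $B_\eta$ is a structural block of $\GG$ if and only if it is a structural block of $\sS_\GG$, so I may work inside $\sS_\GG$. Corollary~\ref{cor:H'} writes the homeostasis matrix $H'$ of $\sS_\GG$ in block upper triangular form whose diagonal blocks are the homeostasis matrices $H_{\LL'(\tau_\ell,\tau_{\ell+1})}$ of the super-simple structural subnetworks associated with consecutive pairs $\tau_\ell,\tau_{\ell+1}$ of super-simple nodes of $\GG$. The irreducible block $B_\eta$ appears in the unique (up to permutation) Frobenius--K\"onig decomposition of $H'$ guaranteed by Theorem~\ref{T:block}. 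Because the decomposition of Corollary~\ref{cor:H'} is already block upper triangular, its irreducible refinement cannot merge rows and columns across distinct diagonal blocks; hence $B_\eta$ occurs as an irreducible block inside a single $H_{\LL'(\tau_{\ell_0},\tau_{\ell_0+1})}$. Consequently the nodes of $\KK_\eta$ are contained in $\LL'(\tau_{\ell_0},\tau_{\ell_0+1})$.

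To conclude, I would identify the super-simple nodes lying in $\LL'(\tau_{\ell_0},\tau_{\ell_0+1})$. Its interior simple nodes lie strictly between the adjacent super-simple pair $\tau_{\ell_0},\tau_{\ell_0+1}$ and are therefore non-super-simple by Lemma~\ref{L:supersimpleb}, while the appendage nodes recruited in Definition~\ref{def:L'} are not simple; thus the only super-simple nodes of $\GG$ in $\LL'(\tau_{\ell_0},\tau_{\ell_0+1})$ are its endpoints $\tau_{\ell_0}$ and $\tau_{\ell_0+1}$. Since $\ell$ and $j$ are super-simple and belong to $\KK_\eta\subseteq\LL'(\tau_{\ell_0},\tau_{\ell_0+1})$, it follows that $\{\ell,j\}=\{\tau_{\ell_0},\tau_{\ell_0+1}\}$, an adjacent super-simple pair, which is exactly the claim.

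The step I expect to be the main obstacle is justifying rigorously that the unique irreducible Frobenius--K\"onig blocks of $H'$ respect the coarser block upper triangular decomposition of Corollary~\ref{cor:H'} --- that is, that no irreducible block straddles two of the $H_{\LL'(\tau_\ell,\tau_{\ell+1})}$. This is the standard fact that the strongly connected components underlying the irreducible decomposition cannot cross a triangular block boundary, but it must be phrased carefully against Theorem~\ref{T:block}. An alternative, self-contained route avoids Corollary~\ref{cor:H'} entirely: assuming a super-simple node $\rho$ with $\ell>\rho>j$, one uses the determinant bookkeeping of Lemmas~\ref{L:summand_factors} and~\ref{L:summand_products}, exactly as in the proof of Proposition~\ref{T:adjacent}, to show that every $\ell\to j$ simple path of $\KK_\eta$ (each of which yields a nonzero summand of $\det(B_\eta)$) must pass through $\rho$; then $\rho$ is a third super-simple node of $\KK_\eta$, contradicting Proposition~\ref{T:structural_supersimple_two}.
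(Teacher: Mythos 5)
Your proof is correct, but it is organized differently from the paper's. The paper disposes of Corollary~\ref{C:adjacent} in two sentences: super-simple nodes are well-ordered (Lemma~\ref{L:supersimple}) and $\KK_\eta$ has exactly two super-simple nodes (Proposition~\ref{T:structural_supersimple_two}), which together with Proposition~\ref{T:adjacent} force the input and output nodes to be adjacent. That is essentially the \emph{alternative} route you sketch in your last paragraph. Your primary argument instead localizes $B_\eta$ inside a single diagonal block $H_{\LL'(\tau_{\ell_0},\tau_{\ell_0+1})}$ of the decomposition in Corollary~\ref{cor:H'}, and then reads off that the only super-simple nodes available to $\KK_\eta$ are the two endpoints. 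The step you flag as the obstacle --- that the irreducible refinement cannot straddle two diagonal blocks of a coarser block-triangular form --- is exactly the uniqueness-up-to-permutation of the Frobenius--K\"onig decomposition already asserted for \eqref{eq:FK_form} and in Theorem~\ref{T:block}, so it is available without further work. What your version buys is that it explicitly closes a gap the paper leaves implicit: Proposition~\ref{T:structural_supersimple_two} only counts super-simple nodes \emph{inside} $\KK_\eta$, so one must still rule out a super-simple node of $\GG$ strictly between $\ell$ and $j$ that happens not to lie in $\KK_\eta$; your containment $\KK_\eta\subseteq\LL'(\tau_{\ell_0},\tau_{\ell_0+1})$ (or, in the alternative route, the argument that such a node would be forced into $\KK_\eta$) handles this directly. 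One minor quibble: the non-super-simplicity of the interior simple nodes of $\LL$ follows from the total order on super-simple nodes together with adjacency of $\tau_{\ell_0},\tau_{\ell_0+1}$, rather than from Lemma~\ref{L:supersimpleb} itself, but the conclusion is unaffected.
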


\begin{proof}
Super-simple nodes can be well-ordered.  The proof then follows from Proposition~\ref{T:structural_supersimple_two}.   \qed
\end{proof}

\begin{theorem} \label{L:KetaL'}
In $\GG$, there is a 1:1 correspondence between structural homeostasis blocks $\KK_\eta$ and super-simple structural subnetworks $\LL'$ and that correspondence is given by having the same input node.  Moreover, the corresponding $\KK_\eta$ and $\LL'$ are core equivalent. 
\end{theorem}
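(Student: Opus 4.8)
The plan is to show that the rule sending a structural homeostasis block to its input node is a bijection onto the super-simple nodes other than $o$, and that this bijection matches each $\KK_\eta$ with the super-simple structural subnetwork sharing its input node. I would organize the argument around a two-sided counting of structural blocks versus adjacent super-simple pairs, and then promote the resulting equality of counts to a block-by-block identification using uniqueness of the Frobenius--König decomposition.

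First I would collect the ingredients already available. By Corollary~\ref{C:adjacent}, each structural homeostasis block $\KK_\eta$ has an input node and an output node that are \emph{adjacent} super-simple nodes; writing the super-simple nodes in the downstream ordering of Lemma~\ref{L:supersimple} as $\iota=\tau_1>\tau_2>\cdots>\tau_{p+1}=o$, the input node of $\KK_\eta$ is some $\tau_{\ell(\eta)}$ with output $\tau_{\ell(\eta)+1}$. This defines a map $\eta\mapsto\tau_{\ell(\eta)}$ into $\{\tau_1,\ldots,\tau_p\}$, well defined because the output node is determined as the unique super-simple node adjacent to and downstream from the input node. By Theorem~\ref{lem:associated_network}, inside $B_\eta$ the input node $\tau_{\ell(\eta)}$ is singled out as the one node whose column carries no self-coupling entry (the last column in the normal form \eqref{eq:K_matrices(k-1)}), and this column is a genuine column of $H'$ restricted to the rows of $B_\eta$.

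The key point for injectivity is that in the block decomposition \eqref{eq:FK_form} every column of $H'$ is assigned to exactly one diagonal block. Hence two distinct structural blocks cannot share an input node, since they would both have to contain the self-coupling-free column indexed by that node. Thus $\eta\mapsto\tau_{\ell(\eta)}$ is injective, giving $m\le p$, where $m$ is the number of structural blocks and $p$ the number of adjacent super-simple pairs. Corollary~\ref{cor:H'} supplies the reverse inequality $p\le m$ together with the block upper triangular form of $H'$ whose $p$ diagonal blocks are exactly $B_\ell'=H\bigl(\LL'(\tau_\ell,\tau_{\ell+1})\bigr)$. Therefore $m=p$ and the map is a bijection. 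Now I would refine each $B_\ell'$ by its own Frobenius--König form to obtain an irreducible decomposition of $H'$; by the uniqueness asserted for \eqref{eq:FK_form} (Theorem~\ref{T:block}) this is the decomposition into the $m$ structural blocks, so if $r_\ell\ge1$ denotes the number of irreducible pieces of $B_\ell'$, then $\sum_\ell r_\ell=m=p$. Since each $r_\ell\ge1$, every $r_\ell=1$: each $B_\ell'$ is itself irreducible and coincides, up to the permutations $P,Q$, with a unique structural block $B_\eta$. Matching input nodes identifies this $B_\eta$ with $B_\ell'$ (both have input node $\tau_\ell$), and because $B_\eta$ and $B_\ell'$ differ only by a row-and-column permutation, their determinants agree up to sign; by Definition~\ref{D:backward}(a) the subnetworks $\KK_\eta$ and $\LL'(\tau_\ell,\tau_{\ell+1})$ are core equivalent.

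The step I expect to be the main obstacle is the passage from the numerical equality $m=p$ to the blockwise identification. The cleanest route avoids trying to prove directly that each $H(\LL'(\tau_\ell,\tau_{\ell+1}))$ is irreducible from the no-cycle structure of the appendage nodes it absorbs; instead one argues that the coarse decomposition of Corollary~\ref{cor:H'} refines to \emph{the} unique irreducible decomposition, and then lets $m=p$ force every $r_\ell=1$. Making this refinement-and-uniqueness argument precise, and confirming that the distinguished self-coupling-free column pins down the input node unambiguously in the normal form \eqref{eq:K_matrices(k-1)}, are the two places requiring care.
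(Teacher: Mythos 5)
Your proof is correct, and its first half---establishing the bijection by a two-sided count---is essentially the paper's own argument: both use Corollary~\ref{C:adjacent} to map each structural block to an adjacent super-simple pair (you make the injectivity explicit via the fact that each column of $H$ belongs to exactly one diagonal block, a point the paper leaves implicit when it asserts $m\le p$), and both invoke Corollary~\ref{cor:H'} for the reverse inequality $p\le m$. Where you genuinely diverge is the core-equivalence claim. The paper compares the two networks directly: it reads off the nodes and arrows of $\KK_\eta$ from the combinatorial characterization, observes that $\LL'(\ell,j)$ is $\KK_\eta$ together with backward arrows only, and applies Corollary~\ref{P:core_equivalent}. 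You stay at the matrix level: refine the coarse decomposition \eqref{eq:H'} to the unique irreducible Frobenius--K\"onig decomposition, use $\sum_\ell r_\ell=m=p$ to force every $r_\ell=1$, and conclude that each $H_{\LL'(\tau_\ell,\tau_{\ell+1})}$ coincides with some $B_\eta$ up to row and column permutations, so the determinants agree up to sign. Your route buys two things: it avoids the paper's rather terse assertion about exactly which nodes constitute $\KK_\eta$, and it delivers the irreducibility of each $H_{\LL'}$ as a byproduct (this is the content of Theorem~\ref{T:structural_supersimple}, which the paper instead derives \emph{from} the present theorem). Two points deserve to be made explicit: first, that every irreducible component of $H'$ is structural---an appendage-class component of $H'$ would, by Theorem~\ref{thm:appendage}, correspond to a no-cycle appendage path component, all of which were removed in forming $\sS_\GG$ in Lemma~\ref{L:H'block}---since this is what justifies $\sum_\ell r_\ell=m$; second, that the ``up to sign'' in your determinant comparison is harmless, as the determinant of a homeostasis matrix is itself only defined up to the sign ambiguity of node ordering (compare Theorem~\ref{T:core_equivalent}).
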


\begin{proof}
By Corollary~\ref{C:adjacent}, the input and output nodes of each $\KK_\eta$ are adjacent super-simple nodes and hence each $\KK_\eta$ leads to a unique $\LL'$ that has the same input node.  Therefore, the number of $\KK_\eta$ (equal to $m$) is less than or equal to the number $p$ of $\LL'$.  Corollary~\ref{cor:H'} states that $p\leq m$; hence, $p=m$. That is, there is a 1:1 correspondence between $\KK_\eta$ and $\LL'$. 

Let $\ell$ and $j$ be the input and output nodes of the structural block $\KK_\eta$. Then the corresponding super-simple structural subnetwork is $\LL'(\ell,j)$.  By Definition \ref{D:K_eta}, $\KK_\eta$ consists of simple nodes between the two adjacent super-simple nodes $\ell$ and $j$ and appendage nodes that form cycles with non-super-simple simple nodes in $\KK_\eta$.  Arrows in $\KK_\eta$ are non-backward arrows that connect nodes in $\KK_\eta$. It follows from Definition \ref{def:L'} that $\LL'(\ell, j)$ is the union of $\KK_\eta$ and arrows whose head is $\ell$ or whose tail is $j$.  By Corollary \ref{P:core_equivalent}, $\KK_\eta$ is core equivalent to $\LL'(\ell,j)$.
\qed 
\end{proof}

\Section{Classification and construction}
\label{S:CC}

In the Introduction we showed how Cramer's rule coupled with basic combinatorial matrix theory can be applied to the homeostasis matrix $H$ to determine the different types of infinitesimal homeostasis that an input-output network $\GG$ can support. Specifically the zeros of $\det(H)$, a homogeneous polynomial in the linearized couplings and self-couplings, can be factored into $\det(B_1)\cdots\det(B_m)$.  In this paper we show that there are two types of factors that depend on the number of self-couplings: one we call appendage and the other we call structural. Each factor corresponds to a type of homeostasis in subnetworks $\KK_\eta$ for $\eta = 1,\ldots,m$ that can be read directly from $\GG$. 

\paragraph{Appendage blocks}
Theorem~\ref{thm:appendage} shows that an appendage block $B_\eta$ leads to a subnetwork $\KK_\eta$ that is a path component of the appendage network $\AA_\GG\subset\GG$.  Moreover, the nodes in $\KK_\eta$ do not form a cycle with other nodes in the complementary subnetwork $\CC_S$ for every $\iota o$-simple path $S$. The factors of $\det(H)$ that stem from appendage nodes are $\det(J_\AA)$, the determinant of the Jacobian of the appendage path components $\AA$.  The converse is also valid as shown in Theorem~\ref{T:appendage_blocks}. 

\begin{theorem} \label{T:appendage_blocks}
Suppose $\KK_\eta$ is an appendage path component.  If $\KK_\eta$ satisfies the no cycle condition, then $\det(J_{\KK_\eta})$ is an irreducible factor of $\det(H)$.
\end{theorem}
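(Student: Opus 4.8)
The plan is to prove the two halves of the assertion separately: that $\det(J_{\KK_\eta})$ \emph{divides} $\det(H)$, and that it is an \emph{irreducible} polynomial. Write $\KK := \KK_\eta$ throughout. Since $\KK$ is an appendage path component, its nodes are appendage nodes and its Jacobian $J_\KK$ is a $k\times k$ matrix of the form \eqref{eq:K_matrices(k)tau}, carrying a self-coupling in every diagonal position.

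For divisibility I would use the determinant formula \eqref{eq:detP-form}. Because appendage nodes lie on no $\iota o$-simple path, $\KK$ is a subnetwork of the complementary subnetwork $C_S$ for every $\iota o$-simple path $S$. The no cycle condition is precisely the hypothesis of Lemma~\ref{lem:J_C} applied with $C = C_S$: nodes of $\KK$ form no cycle with nodes of $C_S\setminus\KK$. Hence, after relabelling, $J_{C_S}$ is block lower triangular with $J_\KK$ as one diagonal block, so $\det(J_\KK)$ divides $\det(J_{C_S}) = \pm\,G_{C_S}$ (the degenerate case $\KK = C_S$ being immediate). Feeding this into $\det(H) = \sum_i F_{S_i}\,G_{C_i}$ from \eqref{eq:detP-form} and factoring $\det(J_\KK)$ out of each $G_{C_i}$ shows $\det(J_\KK)$ divides $\det(H)$.

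For irreducibility I would invoke Theorem~\ref{T:block}, reducing the claim to showing that $J_\KK$ is an irreducible (fully indecomposable) matrix; here the full diagonal of self-couplings is essential. If $J_\KK$ were partly decomposable, Frobenius-K\"onig theory yields a $p\times q$ zero submatrix with $p+q=k$ and $p,q\geq 1$, occupying rows $I$ and columns $J$. Each diagonal entry $f_{\tau_i,x_{\tau_i}}$ is a self-coupling, hence not a structural zero of $J_\KK$; therefore the zero submatrix cannot meet the diagonal, forcing $I\cap J=\emptyset$, so that $I$ and $J$ partition the node set. The vanishing of that submatrix then says there is no arrow from a node of $J$ to a node of $I$, which precludes any directed path from $J$ to $I$ and contradicts the strong connectivity of the path component $\KK$. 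Thus $J_\KK$ is irreducible, and by Theorem~\ref{T:block} the polynomial $\det(J_\KK)$ is irreducible.

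Combining the two parts, $\det(J_\KK)$ is an irreducible polynomial dividing $\det(H)$, that is, an irreducible factor, which is exactly the assertion. I expect the irreducibility step to be the main obstacle: it is where the combinatorics of the component (strong connectivity together with the nonzero diagonal) must be translated precisely into the matrix statement that Theorem~\ref{T:block} consumes; the divisibility step is a direct application of Lemma~\ref{lem:J_C} and the determinant formula.
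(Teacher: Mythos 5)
Your proof is correct and follows essentially the same route as the paper: divisibility is obtained by applying Lemma~\ref{lem:J_C} to each complementary subnetwork $C_S$ and feeding the resulting block lower triangular form of $J_{C_S}$ into the determinant formula \eqref{eq:detP-form}, and irreducibility comes from the strong connectivity of the path component. Your Frobenius--K\"onig argument for the irreducibility step is in fact more careful than the paper's one-line assertion, since it makes explicit why the full diagonal of self-couplings is needed to pass from strong connectivity of the digraph to irreducibility of $J_{\KK_\eta}$ in the sense of \eqref{eq:FK_form}, where rows and columns are permuted independently.
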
 

\begin{proof}
Let $C_S$ be the complementary subnetwork of an $\iota o$-simple path $S$.  By Definition \ref{D:simple_complementary}(c), $\KK_\eta\subset C_S$. Since nodes in $\KK_\eta$ do not form a cycle with other nodes in $C_S$, by Lemma \ref{lem:J_C}, $J_{C_S}$ has the following block lower triangular form:
\begin{equation} \label{e:J_C_S}
J_{C_S} = \left[\begin{array}{ccc|c|ccc}
\ast & \cdots &\ast & 0 & 0 & \cdots & 0\\
\vdots & \vdots & \vdots & \vdots & \vdots & \vdots & \vdots \\ 
\ast & \cdots & \ast & 0 & 0 & \cdots & 0\\ \hline
\ast & \cdots & \ast & J_{\KK_\eta} & 0 & \cdots & 0 \\ \hline
\ast & \cdots &\ast & \ast & \ast & \cdots & \ast\\
\vdots & \vdots & \vdots & \vdots & \vdots & \vdots & \vdots \\
\ast & \cdots & \ast & \ast & \ast & \cdots & \ast
\end{array}\right]
\end{equation}
Hence $\det(J_{\KK_\eta})$ is a factor of $\det(J_{C_S})$, and so a factor of $\det(H)$. Since $\KK_\eta$ is a path component and hence is path connected, it follows that $J_{\KK_\eta}$ is irreducible.
\qed
\end{proof}

It follows that we can  construct appendage blocks as follows.  First we determine the path components of the appendage subnetwork of $\GG$ and second we determine which of these components $\KK_\eta$ satisfy the cycle condition in Theorem~\ref{thm:appendage}.

\paragraph{Structural blocks}
Next, we form the subnetwork $\sS_\GG$ that is obtained from $\GG$ by deleting the appendage path components identified above.  The last result that is needed is:

\begin{theorem} \label{T:structural_supersimple}
Let $\ell$ and $j$ be adjacent super-simple nodes in $\sS_\GG$, then $\det(\LL'(\ell,j))$ is an irreducible factor of $\det(H)$.
\end{theorem}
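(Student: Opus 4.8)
The plan is to prove the two halves of the claim separately: that $\det\big(H(\LL'(\ell,j))\big)$ divides $\det(H)$, and that it is an irreducible polynomial. Once both are in hand, the result follows from unique factorization, since an irreducible polynomial dividing $\det(H)$ must coincide, up to a nonzero scalar, with one of the irreducible factors in the Frobenius--K\"onig factorization \eqref{eq:FK_factors} --- which is exactly the assertion. Here $\det(\LL'(\ell,j))$ is understood as $\det\big(H(\LL'(\ell,j))\big)$, the determinant of the homeostasis matrix of the input-output subnetwork $\LL'(\ell,j)$.

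For divisibility I would first apply Lemma~\ref{L:H'block}: $\det(H')$ is a factor of $\det(H)$, where $H'$ is the homeostasis matrix of the structural subnetwork $\sS_\GG$. Since $\ell$ and $j$ are adjacent super-simple nodes of $\sS_\GG$ (recall from Corollary~\ref{cor:H'} that $\sS_\GG$ has the same super-simple nodes as $\GG$), Corollary~\ref{cor:H'} puts $H'$ in the block upper triangular form \eqref{eq:H'} in which $H\big(\LL'(\ell,j)\big)$ appears as one of the diagonal blocks. Hence $\det\big(H(\LL'(\ell,j))\big)$ divides $\det(H')$, and therefore divides $\det(H)$.

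For irreducibility I would invoke the correspondence established in Theorem~\ref{L:KetaL'}. Because $\ell,j$ are adjacent super-simple nodes, $\LL'(\ell,j)$ is a super-simple structural subnetwork, and the $1{:}1$ correspondence matches it with a unique structural homeostasis block $\KK_\eta$ sharing the input node $\ell$; moreover $\KK_\eta$ and $\LL'(\ell,j)$ are core equivalent. By Definition~\ref{D:backward}, core equivalence means their homeostasis matrices have the same determinant polynomial (up to sign), so $\det\big(H(\LL'(\ell,j))\big) = \pm\det(B_\eta)$, and $\det(B_\eta)$ is irreducible by Theorem~\ref{T:block}. Combining with the divisibility step, $\det\big(H(\LL'(\ell,j))\big)$ is an irreducible factor of $\det(H)$.

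The genuine content --- and the step I expect to be the crux --- is the irreducibility, which ultimately rests on $\LL'(\ell,j)$ having \emph{exactly} two super-simple nodes, namely $\ell$ and $j$. A self-contained alternative to citing Theorem~\ref{L:KetaL'} would be to argue this directly: if $\LL'(\ell,j)$ possessed a third super-simple node, then Proposition~\ref{thm:HKK} would further block-triangularize its homeostasis matrix, contradicting irreducibility (this is precisely the mechanism behind Proposition~\ref{T:structural_supersimple_two}); with only the two adjacent super-simple nodes $\ell,j$ the matrix admits no nontrivial block triangularization, so its determinant is irreducible. Unique factorization of $\det(H)$ would then force this irreducible divisor to equal one of the $\det(B_\eta)$, completing the argument.
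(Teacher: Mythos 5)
Your proposal is correct and follows essentially the same route as the paper's own proof: divisibility via Lemma~\ref{L:H'block} together with the block upper triangular form of $H'$ in Corollary~\ref{cor:H'}, and irreducibility via the core equivalence of $\LL'(\ell,j)$ with an irreducible structural block $\KK_\eta$ from Theorem~\ref{L:KetaL'}. The additional remarks on unique factorization and the alternative argument through Proposition~\ref{thm:HKK} and Proposition~\ref{T:structural_supersimple_two} are consistent with, and indeed make explicit, the mechanism the paper relies on.
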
 

\begin{proof}
It follows from Corollary \ref{cor:H'} that $\det(\LL'(\ell,j))$ is a factor of $\det(H')$ and hence a factor of $\det(H)$ by Lemma \ref{L:H'block}. Theorem~\ref{L:KetaL'} states that $\LL'(\ell,j)$ is core equivalent to a unique $\KK_\eta$ that is irreducible. Hence, $\det(\LL'(\ell,j))$ is an irreducible factor of $\det(H)$.
\qed
\end{proof}

Next, we compute the super-simple nodes in $\sS_\GG$ in downstream order, namely, 
\[
\iota = \rho_1 > \rho_2 > \cdots > \rho_q > \rho_{q+1} = o
\]
It follows that the subnetworks $\LL'(\rho_i,\rho_{i+1})$ are core equivalent to the structural networks $\KK_\eta$.  Let $B_i$ be the homeostasis matrix associated with the input-output networks $\LL'(\rho_i,\rho_{i+1})$ and $\det(B_i)$ is a factor of $\det(H)$.

\paragraph{Acknowledgements.}
This research was supported in part by the National Science Foundation Grant DMS-1440386
to the Mathematical Biosciences Institute, Columbus, Ohio.




\end{document}